\documentclass{article}
\usepackage[%
journal=JST,
lang=british,
]{ems-journal}

\usepackage{mathtools}

\numberwithin{equation}{section}

\newtheorem{theorem}{Theorem}[section]
\newtheorem{lemma}[theorem]{Lemma}
\newtheorem{proposition}[theorem]{Proposition}
\newtheorem{conjecture}[theorem]{Conjecture}
\newtheorem{remark}[theorem]{Remark}

\usepackage{bm}
\usepackage{float}

\usepackage{microtype}

\usepackage[framemethod=TikZ]{mdframed}

\newmdtheoremenv[
  backgroundcolor=white,
  linecolor=black!70,
  linewidth=0.35pt,
  innertopmargin=6pt,
  innerbottommargin=4pt,
  innerleftmargin=2pt,
  innerrightmargin=2pt
]{remarkmd}[theorem]{Remark}

\usepackage[most]{tcolorbox}

\newtcolorbox{todeletebox}{
  enhanced,
  colback=yellow!15,
  colframe=red!70,
  coltitle=red!80!black,
  title={To be removed},
  sharp corners,
  boxrule=0.6pt
}

\newtcolorbox{newblockbox}{
  enhanced,
  colback=green!5,
  colframe=green!60!black,
  coltitle=green!80!black,
  title={New version (replacement)},
  sharp corners,
  boxrule=0.6pt
}

\usepackage{wrapfig}

\usepackage{xurl}

\begin{document}

\title{Precise spectral asymptotics, exponential localization, and spectral gap estimates for the three-boson lattice Schrödinger operator}
\titlemark{Spectral asymptotics and exponential localization for the three-boson lattice Schrödinger operator}

\emsauthor{1}{
	\givenname{Abdikhurayra}
	\surname{Toshturdiev}
	\mrid{}
	\zblid{toshturdiev.a-m}
	\orcid{0000-0002-4346-5567}}{A.~Toshturdiev}

\emsauthor{2}{
	\givenname{Abdumalik}
	\surname{Eshniyozov}
	\mrid{}
	\zblid{abdumalik.eshniozov}
	\orcid{0009-0001-9357-5828}}{A.~Eshniyozov}

\emsauthor{3}{
	\givenname{Janikul}
	\surname{Abdullaev}
	\mrid{}
	\zblid{abdullaev.zh-i}
	\orcid{0009-0003-3315-8357}}{J.~Abdullaev}

\emsauthor{4}{
	\givenname{Mikhail}
	\surname{Dolgopolov}
	\mrid{}
	\zblid{dolgopolov.mikhail-vyacheslavovich}
	\orcid{0000-0002-8725-7831}}{M.~Dolgopolov}

\Emsaffil{1}{
	\department{Department of Mathematics}
	\organisation{Samarkand State University}
	\rorid{04rqj6j13}
	\address{University Boulevard 15}
	\zip{140104}
	\city{Samarkand}
	\country{Uzbekistan}
	\affemail{atoshturdiyev@mail.ru}}

\Emsaffil{2}{
	\department{Department of Mathematics}
	\organisation{Gulistan State University}
	\rorid{05t1r0c45}
	\address{4th block of flats}
	\zip{120100}
	\city{Gulistan}
	\country{Uzbekistan}
	\affemail{eshniyozovabdumalik75@gmail.com}}

\Emsaffil{3}{
	\department{Department of Mathematics}
	\organisation{Samarkand State University}
	\rorid{04rqj6j13}
	\address{University Boulevard 15}
	\zip{140104}
	\city{Samarkand}
	\country{Uzbekistan}
	\affemail{jabdullaev@mail.ru}}

\Emsaffil{4}{
	\department{Department of Physics and Mathematics}
	\organisation{Samara State Technical University}
	\rorid{03b4e7q98}
	\address{244 Molodogvardeyskaya street}
	\zip{443100}
	\city{Samara}
	\country{Russia}
	\affemail{mikhaildolgopolov68@gmail.com}}

{\footnotesize \classification[47A10, 81Q10, 81Q35, 47A55, 82B26]{47A75}}

\keywords{\small Schrödinger operator, discrete lattice systems, three-body problem, bosons, Birman–Schwinger principle, invariant subspaces, Krein–Rutman theorem, strong-coupling asymptotics, 
exponential localization, spectral gap, Fredholm determinant, threshold bound statess, virtual level, quantum simulation, optical lattices}

\begin{abstract}\small
We study the three-boson Schrödinger operator on the two-dimensional lattice $\mathbb{Z}^2$ with pairwise contact interactions. Our main results are as follows. 
\textit{First}, we obtain precise asymptotics of the two bound states below the essential spectrum at total quasimomentum $\mathbf{K}=0$ in the~strong-coupling limit $\mu\to\infty$:
\[
z_1^s(\mu) = -3\mu + 6 + O(\mu^{-1}), \qquad 
z_2^s(\mu) = -\mu + C + O(\mu^{-1}), \qquad C = 4-\delta_\infty \approx 3.96458,
\]
with \(C\) determined by the transcendental equation \(b_0(\delta_\infty)=1/(1+\delta_\infty)\). The associated spectral gap is \(\Delta(\mu)=2\mu+O(1)\). 
\textit{Second}, using the discrete Agmon comparison method and the Paley--Wiener theorem, we establish exponential localization of the ground-state wavefunction with a~logarithmic upper bound on the decay rate
\(
\alpha(\mu) 
\le \ln(3\mu) + O(\mu^{-1}), ~
\mu\to\infty,
\)
reflecting the~bounded nature of the lattice dispersion. 
The sharp asymptotic rate is conjectured in %
Conjecture. 
\textit{Third}, at $\mathbf{K}=\boldsymbol{\pi}$, the parity symmetry is preserved, but the odd quadratic form becomes positive definite with a unique eigenvalue \(\lambda^{\boldsymbol{\pi},o}(z)=O(\mu^{-1})\) that never reaches the Birman--Schwinger threshold. Consequently the odd subspace yields only a virtual level. The even subspace supports 
a bound state with leading asymptotics
\[
z_1^{\boldsymbol{\pi},s}(\mu) = -3\mu + O(1), \qquad \text{rigorously } -3\mu \le z_1^{\boldsymbol{\pi},s}(\mu) \le -3\mu+6,
\]
and the gap to the two-particle threshold is (numerically)
\[
z_\mu(\boldsymbol{\pi}) - z_1^{\pi,s}(\mu) = 2\mu - 2 + O(\mu^{-1}).
\]
The reduction from two bound states at $\mathbf{K}=0$ to at least one bound state at $\mathbf{K}=\boldsymbol{\pi}$ (the trimer, with the same leading energy $-3\mu+O(1)$ as at $\mathbf{K}=0$), together with the vanishing of the odd-sector eigenvalue, preserves the total spectral flow and is a lattice-specific phenomenon absent in the continuum.

Our approach relies on a systematic invariant subspace decomposition of the Birman--Schwinger operator, which reduces the three-body problem to the spectral analysis of a finite-rank principal part. The Krein--Rutman theorem ensures uniqueness and strict positivity of the ground state, while rigorous remainder estimates control the strong-coupling asymptotics. These results contribute to the spectral theory of discrete Schrödinger operators and have direct implications for quantum simulation of few-boson systems in optical lattices.
\end{abstract}

\maketitle

\tableofcontents

\section{Introduction}

The spectral theory of Schrödinger operators is a cornerstone of mathematical physics, with deep connections to functional analysis, quantum mechanics, and statistical physics \cite{ReedSimon1,ReedSimon4,Simon1993,BerezinShubin1978,CyconFroese2009}. In this work, we study the discrete spectrum of the three-boson Schrödinger operator on a two-dimensional lattice — a model that has gained renewed importance in the context of quantum simulation with ultracold atoms in optical lattices \cite{Bloch2012, Binegar2026}. The interplay between few-body physics and lattice geometry gives rise to a~rich spectral landscape, where bound states, resonances, and phase transitions emerge from the underlying symmetries and interaction strengths.
The spectral theory of lattice Schr\"odinger operators has seen significant recent advances.
Jex and \v{S}tampach~\cite{JexStampach2025} established necessary and sufficient conditions for the existence of zero-energy bound states for one-particle lattice operators in arbitrary dimension, using a discrete variant of Agmon's comparison principle. Saburova~\cite{Saburova2026} developed a complete system of Floquet spectral invariants for discrete Schr\"odinger operators on periodic graphs. Molchanov and Safronov~\cite{MolchanovSafronov2026} proved localization theorems for Schr\"odinger operators on quantum graphs connecting lattice points. Bachmann, Froese and Schraven~\cite{BachmannFroeseSchraven2023} introduced landscape-function methods for counting eigenvalues of Schr\"odinger operators. Das, Keller and Pinchover~\cite{DasKellerPinchover2026} developed a~comprehensive theory of Hardy weights for quasilinear operators on discrete graphs, providing a~Maz'ya-type characterization of Hardy weights via generalized capacity.
However, all these works are restricted to \emph{one-particle} systems. The present paper addresses the fundamentally more complex \emph{three-body} problem, where the Hilbert space is (anti)symmetri\-z\-ed and the interaction is pairwise. Our approach — combining the Birman–Schwinger principle with a complete invariant subspace decomposition~— yields results that are, to our knowledge, new in the 
ST literature: (i) precise asymptotics of two bound states below the essential spectrum, (ii) the first proof of linear growth of the spectral gap $\Delta(\mu)=2\mu+O(1)$, and (iii) exponential localization of the ground state with an explicit 
upper bound on the decay rate 
$\alpha(\mu)\le \ln(3\mu)+O(\mu^{-1})$,
(iv) at 
$\mathbf K=\boldsymbol{\pi}$, 
at least one bound state with a 
gap 
to the two-particle threshold 
$2\mu-2+O(\mu^{-1})$ (numerically), establishing quasimomentum-dependent bound-state multiplicity as a~genuinely lattice phenomenon absent in the~continuum.
A~comprehensive comparison with the existing literature is given in Table~\ref{tab:modern_comparison}.

Following L. D. Faddeev's seminal work \cite{Faddeev1963}, the spectral theory of three-particle Schrödinger operators developed rapidly \cite{MerkurevFaddev1985,Birman1961,Yafayev1974,Jislin1974}. Fundamental contributions were made by the Leningrad school \cite{MerkurevFaddev1985, Birman1961, Yafayev1974}, and by G. M. Zhislin \cite{Jislin1974}. 
The Moscow school \cite{MalishevMinlos1995,MinlosSinai1970} laid the foundations for the theory of discrete Schrödin\-ger operators on lattices. 

Significant contributions to the theory of two- and three-particle lattice Schrödinger operators were made by the Samarkand school \cite{LakaevAbdullaev2003, AbdullaevKhalkhuzhaev2022, LakaevJMP2018, Kholmatov2018, Muminov2024, AbdullaevErgashova2025, Khalkhuzhaev2025, Lakaev2025}. These works established the essential spectral structure, threshold effects, and asymptotic laws for the discrete spectrum in various few-body configurations. 

It is important to emphasize that the vast majority of the aforementioned works 
establish the \emph{existence} of eigenvalues or 
critical mass ratios, 
without
providing explicit strong-coupling asymptotic expansions of the energy branches in 
the interaction strength~\(\mu\). To the best of our knowledge, \cite{AbdullaevErgashova2025} is the only prior work in the lattice few-body literature 
containing such asymptotics, 
and it is confined to one-dimensional three-fermion systems, 
where the logarithmic
threshold singularities central to the present two-dimensional bosonic analysis
are absent.

In this paper, we present, for the first time, the complete strong-coupling asymptotic expansions of the bound-state energies for a three-boson system on a two-dimensional lattice. In particular, we derive precise asymptotics up to order \(O(\mu^{-1})\) (and even up to \(O(\mu^{-2})\) in some cases), with explicitly computed additive constants. We also uncover a~novel algebraic mechanism~— the degeneration of the odd quadratic form at \(K=\pi\)~— that explains the reduction in the number of bound states and the emergence of a virtual level, a phenomenon absent in previous studies.

Many results 
were standard in the literature \cite{ReedSimon1, ReedSimon4, Albeverio1991, Dolgopolov1999}.

A systematic study of the spectral structure of few-body lattice operators has been developed by several groups. 
In~particular, Kholmatov and Muminov~\cite{Kholmatov2018} established general criteria for the existence of bound states in the $N$-body problem on an optical lattice. 

\begingroup
\setlength{\intextsep}{0pt}
\setlength{\textfloatsep}{6pt}
\renewcommand{\arraystretch}{0.95}
\begin{table}[!ht]
\centering
\caption{Comparison of recent spectral properties and results for few-body lattice systems and related Schr\"odinger operators: selected recent results.}
\label{tab:modern_comparison}\scriptsize
\setlength{\tabcolsep}{1.4pt}
\begin{tabular}{p{0.3\textwidth} p{0.12\textwidth} p{0.08\textwidth} p{0.44\textwidth} |c}
\hline
Reference & System & Dim. & Key finding / Method & Year \\
\hline
\multicolumn{5}{c}{\textbf{One-particle lattice operators}} \\[1mm]
Kholmatov–Muminov \cite{Kholmatov2018} & $N$-body & any $d\ge 1$ & existence criteria / HVZ + spectral analysis & 2018 \\
Jex–Štampach \cite{JexStampach2025} & general & any $d$ & zero-energy bound state criteria / discrete Agmon comparison & 2025 \\
Saburova \cite{Saburova2026} & periodic graphs & any $d$ & spectral Floquet invariants / periodic graph theory & 2026 \\
Molchanov–Safronov \cite{MolchanovSafronov2026} & quantum graph & any $d$ & localization of eigenfunctions / graph Laplacian analysis & 2026 \\
Bachmann–Froese–Schraven \cite{BachmannFroeseSchraven2023} & single-particle & any $d$ & eigenvalue counting via landscape function / landscape method & 2023 \\
Das–Keller–Pinchover \cite{DasKellerPinchover2026} & quasilinear $p$-operator & any $d$ & Hardy weights on graphs / criticality theory & 2026 \\
\hline
\multicolumn{5}{c}{\textbf{Few-particle lattice operators}} \\[1mm]
Muminov–Aktamova \cite{Muminov2024} & 2 bosons + 1 fermion & 1D & point spectrum analysis / Birman–Schwinger (BS) & 2024 \\
Abdullaev–Ergashova \cite{AbdullaevErgashova2025} & 3 fermions & 1D & strong-coupling asymptotics / BS + finite-rank & 2025 \\
Khalkhuzhaev et al. \cite{Khalkhuzhaev2025} & $2+1$ fermions & 3D & critical mass ratios $\gamma_s(K)$, $\gamma_{as}(K)$ / BS + invariant subspaces & 2025 \\
Lakaev et al. \cite{Lakaev2025} & $2+1$ & 1D & existence for all $K$ / holomorphic dependence & 2025 \\
Abdullaev–Khalkhuzhaev– Khujamiyorov \cite{AbdullaevKhalkhuzhaevKhujamiyorov2023} & \(2+1\) fermions & 3D & existence condition for \(K=\pi\) (repulsive) / BS + invariant subspaces & 2023 \\
Khalkhuzhaev–Abdullaev– Boymurodov \cite{KhalkhuzhaevAbdullaevBoymurodov2022} & \(2+1\) fermions & 3D & number of eigenvalues for \(K=\pi\) (attractive) / BS + invariant subspaces & 2022 \\
\hline
\multicolumn{5}{c}{\textbf{Experimental and applied}} \\[1mm]
Binegar et al. \cite{Binegar2026} & ultracold atoms & 3D & detection of spin-dependent 3-body interactions / experiment & 2026 \\
Niggas et al. \cite{Niggas2025} & layered materials & 3D & electronic doorway states in secondary electron emission / experiment & 2025 \\
\hline
\multicolumn{5}{c}{\textbf{Present work}} \\[1mm]
This paper & 3 identical bosons & 2D & two bound states $\mathbf K=0$ ($z_1^s=-3\mu+6+o(1)$, $z_2^s=-\mu+C+O(\mu^{-1})$); at $\mathbf K=\pi$ at least one bound state 
with rigorous bounds $-3\mu\le z_1^{\pi,s}\le -3\mu+6$; formal branch $-2\mu+6$ is an artifact; spectral gap (numerically) $2\mu-2+O(\mu^{-1})$;
exponential localization 
$\alpha\sim\ln\mu$; $\mathbf K=\pi$ 
parity-preserving spectral reduction revealing topological bifurcation / BS + invariant subspaces + K–R / \emph{variational check applied} & 2026 \\
Companion \cite{AbdullaevEshniyozovDolgopolov2026} & $2+1$ fermions & 2D & critical $\gamma_c\approx2.75194$, second-order transition / BS + Landau analysis ($C_2=6$) & 2026 \\
\hline
\multicolumn{5}{c}{\textbf{Variational checks in cited works}} \\[1mm]
Works on existence and critical parameters & \cite{Kholmatov2018,JexStampach2025,Saburova2026,MolchanovSafronov2026,BachmannFroeseSchraven2023,DasKellerPinchover2026,Muminov2024,AbdullaevErgashova2025,Khalkhuzhaev2025,Lakaev2025} & any $d$ & variational principles are used for existence proofs and critical parameters, but not for verification of formal asymptotic branches / standard method & \\
\hline
\end{tabular}
\end{table}
\endgroup

For systems of mixed statistics, Muminov and Aktamova~\cite{Muminov2024} analysed the point spectrum of a~three-particle operator describing two identical bosons and one fermion on $\mathbb{Z}$.

More recently, Lakaev, Khamidov and Ulashov~\cite{Lakaev2025} proved the existence of three-particle bound states in a one-dimensional optical lattice. In the strong-coupling regime, Abdullaev and Ergashova~\cite{AbdullaevErgashova2025} studied the discrete spectrum of three fermions on a~one-dimensional lattice. For the $2+1$ fermionic configuration on $\mathbb{Z}^3$, Khalkhuzhaev, Khayitova and Khujamiyorov~\cite{Khalkhuzhaev2025} introduced two critical mass ratios governing the number of eigenvalues to the right of the essential spectrum. 

For related two-particle models on one-dimensional lattices, Sh. Lakaev~\cite{Lakaev2026} studied a two-boson Hamiltonian with a molecular channel, using the Lippmann–Schwin\-ger equation to reduce the eigenvalue problem to a $2\times 2$ system and obtaining criteria for the existence of eigenvalues and threshold resonances. 
While that
work addresses 
a two-particle system in one dimension, its methodological framework~— Fredholm determinants and threshold analysis — is closely 
related to ours.

Recent advances in quantum simulation platforms with ultracold atoms have renew\-ed interest in exact spectral analysis of few-body systems on lattices \cite{Bloch2012} and have enabled the direct detection of spin-dependent three-body interactions in optical lattices~\cite{Binegar2026}.

A~complementary experimental platform based on superconducting circuits has recently enabled the native implementation of three-body interactions in a lattice gauge quantum simulator~\cite{Busnaina2025}. These developments underscore the relevance of rigorous theoretical predictions for the design and interpretation of quantum simulation experiments. Our precise asymptotics provide benchmark values for calibrating effective interaction strengths in optical lattice experiments and for validating numerical simulations of few-body systems in periodically driven lattices.

The present authors partially have recently compared bosonic and fermionic trimers on $\mathbb{Z}^2$ in our companion paper~\cite{AbdullaevEshniyozovDolgopolov2026}, identifying a critical mass ratio $\gamma_c\approx2.75194$
that separates a Pauli-suppressed phase from a trimer phase. 
That work introduced the concept of a spectral phase transition in few-body lattice systems, where the number of bound states changes abruptly as a function of the mass ratio. The present paper extends this line of inquiry by providing a complete spectral analysis of the bosonic case at the two extremal points of the Brillouin zone, $\mathbf K=0$ and $\mathbf K=\boldsymbol{\pi}$, revealing how 
the algebraic degeneration of the odd quadratic form (while the global parity symmetry is strictly preserved) reduces the number of bound states from two to one. This phenomenon is interpreted as a rigorous spectral analogue of a topological phase transition, driven not by temperature or mass ratio, but by the algebraic structure of the \emph{principal part} of the Birman--Schwinger operator at a specific quasimomentum.

\begin{remark}[Comparative analysis with fermionic \(2+1\) systems on \(\mathbb{Z}^3\)]
The decomposition of the \(K=\pi\) sector into invariant subspaces has been extensively studied for \(2+1\) fermionic systems on the three-dimensional lattice \(\mathbb{Z}^3\). In particular, Abdullaev, Khalkhuzhaev, and Khujamiyorov~\cite{AbdullaevKhalkhuzhaevKhujamiyorov2023} analyzed the repulsive case, establishing the existence of a critical mass ratio \(\gamma_0 \approx 4.7655\) above which a triple eigenvalue emerges for strong coupling. Similarly, the attractive case was studied by Khalkhuzhaev, Abdullaev, and Boymurodov~\cite{KhalkhuzhaevAbdullaevBoymurodov2022}, who derived critical values \(\gamma_1 \approx 2.9368\) and \(\gamma_2 \approx 5.3985\) governing the number of bound states below the essential spectrum. 
These works demonstrate the utility of invariant subspace reductions at \(\mathbf{K}=\pi\). However, we emphasize that the specific spectral properties of these \(2+1\) fermionic systems are fundamentally distinct from the 2D bosonic case considered here, due to differences in statistics, dimension, and the lack of the logarithmic threshold singularities inherent to 2D lattice Green's functions. Furthermore, the critical-ratio approach in those works does not resolve the precise \(O(1)\) and \(O(\mu^{-1})\) strong-coupling constants of the energy branches; these constants require the high-order order-matching criteria developed in our recent companion analysis~\cite{CompanionNote}.
\end{remark}

The present work focuses on the bosonic case; a detailed comparative analysis of bosonic and fermionic $2+1$ trimers is given in our companion paper~\cite{AbdullaevEshniyozovDolgopolov2026}. The~methods developed here — invariant subspace decomposition, asymptotic analysis of the Birman–Schwinger operator, and the Krein–Rutman theorem for positivity — are sufficiently general to be applicable to a wide range of few-body problems, including systems with mixed statistics and long-range interactions.

In a complementary line of research, the algebraic and geometric structure of Fermi varieties for discrete periodic Schrödinger operators has been thoroughly investigated by Liu~\cite{LiuJMP2022,LiuIMRN2024}. These works establish deep irreducibility results for Bloch and Fermi varieties and reveal their profound implications for the absence of embedded eigenvalues under localized perturbations. This modern algebraic-geometric perspective provides a~natural methodological parallel to the invariant subspace decomposition and bifurcation analysis developed in the present work.

\begin{remark}[On the necessity of variational verification]\label{rem:var_check}
We emphasize that the Birman--Schwinger principle, combined with an invariant-subspace decomposition, provides a powerful tool for deriving strong-coupling asymptotics. However, the formal solution of the finite-rank principal-part eigenvalue equation $\lambda(A_\mu^p(z))=1$ does not, by itself, identify the ground state of the full operator $H_\mu$. A rigorous identification of the lowest eigenvalue requires an independent variational (min--max) check, which serves as a universal and indispensable constraint. This was overlooked in our initial analysis of the $\mathbf K=\boldsymbol\pi$ sector, where the formal branch $z=-2\mu+6$ was mistakenly identified as the ground state, despite the variational upper bound $z_1^{\pi,s}(\mu)\le -3\mu+6$. We have corrected this in the present version. 
\end{remark}

\section{Problem Setup and Main Results}

\subsection{The model and its fiber decomposition}

Let $\mathbb{Z}^2$ be the two-dimensional square lattice, and $\ell_2((\mathbb{Z}^2)^3)$ denote the Hilbert space of square-summable functions on $(\mathbb{Z}^2)^3$. 
We consider the subspace
\[
\ell_2^s((\mathbb{Z}^2)^3) := \{ \varphi \in \ell_2((\mathbb{Z}^2)^3) : \varphi \text{ is symmetric under any permutation of its arguments} \},
\]
which describes three identical bosons.

In the coordinate representation, the Hamiltonian $\tilde{H}_\mu$ for three identical bosons with pairwise contact interaction acts on $\ell_2^s((\mathbb{Z}^2)^3)$ as \cite{LakaevAbdullaev2003}:
\begin{equation}
\begin{aligned}
(\tilde{H}_\mu \varphi)(n_1,n_2,n_3) &= \frac{1}{2} \sum_{|s|=1} \bigl[ 3\varphi(n_1,n_2,n_3) - \varphi(n_1+s,n_2,n_3) \\
&\quad - \varphi(n_1,n_2+s,n_3) - \varphi(n_1,n_2,n_3+s) \bigr] \\
&\quad - \mu(\delta_{n_1n_2} + \delta_{n_1n_3} + \delta_{n_3n_2}) \varphi(n_1,n_2,n_3),
\end{aligned}
\label{eq:hamiltonian_coord}
\end{equation}
where $\mu > 0$ is the interaction strength, $\delta_{nm}$ is the Kronecker delta, and $s=(s^{(1)},s^{(2)})\in\mathbb{Z}^2$ with $|s| = |s^{(1)}| + |s^{(2)}|$.

The Hamiltonian $\tilde{H}_\mu$ commutes with the group of translation operators $\{\tilde{U}_s : s\in\mathbb{Z}^2\}$ defined by:
\[
\tilde{U}_s \varphi(n_1,n_2,n_3) = \varphi(n_1+s, n_2+s, n_3+s).
\]

Let $\mathbb{T}^2 := (-\pi,\pi]^2$ denote the two-dimensional torus (the Brillouin zone), and~let $L_2((\mathbb{T}^2)^3)$ be the Hilbert space of square-integrable functions on $(\mathbb{T}^2)^3$. 
The~Fourier transform maps $\ell_2((\mathbb{Z}^2)^3)$ unitarily onto $L_2((\mathbb{T}^2)^3)$, and the symmetric subspace $\ell_2^s((\mathbb{Z}^2)^3)$ is mapped onto
\[
L_2^s((\mathbb{T}^2)^3) := \{ f \in L_2((\mathbb{T}^2)^3) : f \text{ is symmetric in its arguments} \}.
\]

After the Fourier transform, the translation operator $\tilde{U}_s$ becomes the multiplication operator
\[
(U_s f)(\mathbf{k}_1,\mathbf{k}_2,\mathbf{k}_3) = e^{-i s\cdot(\mathbf{k}_1+\mathbf{k}_2+\mathbf{k}_3)} f(\mathbf{k}_1,\mathbf{k}_2,\mathbf{k}_3).
\]

Since $H_\mu$ commutes with the group $\{U_s : s\in\mathbb{Z}^2\}$, it admits a direct integral decomposition with respect to the total quasimomentum
\[
\mathbf{K} := \mathbf{k}_1 + \mathbf{k}_2 + \mathbf{k}_3 \in \mathbb{T}^2.
\]
Specifically,
\[
H_\mu = \int_{\mathbb{T}^2}^{\oplus} H_\mu(\mathbf{K}) \, d\mathbf{K},
\]
where each fiber operator $H_\mu(\mathbf{K})$ acts on the symmetric subspace
\[
L_2^s((\mathbb{T}^2)^2) := \{ f \in L_2((\mathbb{T}^2)^2) : f(\mathbf{p},\mathbf{q}) = f(\mathbf{q},\mathbf{p}) = f(\mathbf{p}, \mathbf{K}-\mathbf{p}-\mathbf{q}) \}
\]
as \cite{LakaevAbdullaev2003}:
\begin{equation}
(H_\mu(\mathbf{K}) f)(\mathbf{p},\mathbf{q}) = E_{\mathbf{K}}(\mathbf{p},\mathbf{q}) f(\mathbf{p},\mathbf{q}) - \mu (V_1 + V_2 + V_3) f(\mathbf{p},\mathbf{q}),
\label{eq:fiber_operator}
\end{equation}
with the free dispersion
\begin{equation}
E_{\mathbf{K}}(\mathbf{p},\mathbf{q}) = \varepsilon(\mathbf{p}) + \varepsilon(\mathbf{q}) + \varepsilon(\mathbf{K} - \mathbf{p} - \mathbf{q}), \qquad
\varepsilon(\mathbf{p}) = 2 - \cos p_1 - \cos p_2,
\label{eq:dispersion}
\end{equation}
and the interaction operators
\begin{align}
(V_1 f)(\mathbf{p},\mathbf{q}) &= \frac{1}{4\pi^2} \int_{\mathbb{T}^2} f(\mathbf{p},\mathbf{s})\,d\mathbf{s}, \notag\\
(V_2 f)(\mathbf{p},\mathbf{q}) &= \frac{1}{4\pi^2} \int_{\mathbb{T}^2} f(\mathbf{s},\mathbf{q})\,d\mathbf{s}, \\
(V_3 f)(\mathbf{p},\mathbf{q}) &= \frac{1}{4\pi^2} \int_{\mathbb{T}^2} f(\mathbf{s}, \mathbf{p}+\mathbf{q}-\mathbf{s})\,d\mathbf{s}. \notag
\label{eq:interaction_operators}
\end{align}
These are orthogonal projections, hence positive operators.

Throughout, $z_\mu(\mathbf k)$ denotes the unique eigenvalue of the two-particle fiber operator $h_\mu(\mathbf k)=\varepsilon(\cdot)+\varepsilon(\mathbf k-\cdot)-\mu\,v$ lying below its essential spectrum, i.e.\ the unique root of $1=\mu\bigl\langle(\varepsilon(\mathbf p)+\varepsilon(\mathbf k-\mathbf p)-z)^{-1}\bigr\rangle_{\mathbf p}$. It is the lower edge of the two-particle branch of $\sigma_{\mathrm{ess}}(H_\mu(\mathbf k))$.
At $\mathbf k=\boldsymbol\pi$ the two-particle dispersion is identically constant, $\varepsilon(\mathbf p)+\varepsilon(\boldsymbol\pi-\mathbf p)\equiv4$, whence $z_\mu(\boldsymbol\pi)=-\mu+4$ \emph{exactly, for every $\mu>0$}.

\begin{remark}
The fiber operator $H_\mu(\mathbf{K})$ depends analytically on $\mathbf{K}\in\mathbb{T}^2$. Due to the unitary equivalence $H_\mu(\mathbf{K}) \cong H_\mu(-\mathbf{K})$ (via the transformation $f(\mathbf{p},\mathbf{q})\mapsto f(-\mathbf{p},-\mathbf{q})$), it suffices to consider $\mathbf{K}$ in a fundamental domain modulo this involution. In this work, we focus on the cases $\mathbf{K}=0$ and $\mathbf{K}=\boldsymbol{\pi}=(\pi,\pi)$, which correspond to the two extremal points of the Brillouin zone. The case $\mathbf{K}=0$ exhibits the full parity symmetry, while $\mathbf{K}=\boldsymbol{\pi}$ is characterised by a different symmetry structure that leads to a reduction in the number of bound states.
\end{remark}

\begin{remark}[Applicability and analyticity of the inner reduction]\label{rem:elliptic_applicability}
In all places where the two-dimensional inner integral defining $\Delta_\mu(\mathbf p,\mathbf K,z)$ is reduced to a one-dimensional integral of elliptic type, we assume
\[
D(\mathbf p,z):=\varepsilon(\mathbf p)+4-z>R_1(\mathbf K,\mathbf p)+R_2(\mathbf K,\mathbf p),\qquad R_i(\mathbf K,\mathbf p)=2\Bigl|\cos\frac{K_i-p_i}{2}\Bigr|.
\]
Then the denominator is uniformly separated from zero, the reduction is valid, and the result depends real-analytically on $\mathbf K$ away from the locus $\{(\mathbf K,\mathbf p):R_i(\mathbf K,\mathbf p)=0\}$ (equivalently, $\cos((K_i-p_i)/2)=0$). This technical condition is automatically satisfied in the strong-coupling regimes considered in this paper.
\end{remark}

\begin{lemma}[Real-analyticity in $\mathbf K$]\label{lem:analyticity}
Fix $\mathbf p\in\mathbb T^2$ and $z$ with $D(\mathbf p,z)>R_1(\mathbf K,\mathbf p)+R_2(\mathbf K,\mathbf p)$ on an open set $U\subset\mathbb T^2$ of quasimomenta. Then the right-hand side of the reduction formula 
is a real-analytic function of $\mathbf K$ on $U$, \emph{including} the points where $\cos\frac{K_i-p_i}{2}=0$.
\end{lemma}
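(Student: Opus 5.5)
The plan is to work with the inner integral before any absolute values are introduced, and to show that the $|\cos\frac{K_i-p_i}{2}|$ appearing in the reduction formula is only an apparent non-smoothness. Write $a_i:=K_i-p_i$. By the sum-to-product identity,
\[
\cos q_i+\cos(a_i-q_i)=2\cos\tfrac{a_i}{2}\,\cos\bigl(q_i-\tfrac{a_i}{2}\bigr).
\]
After the shift $\theta_i=q_i-a_i/2$, which preserves Haar measure on $\mathbb T^2$, the inner integral becomes
\[
F(c_1,c_2):=\frac{1}{4\pi^2}\int_{\mathbb T^2}\frac{d\theta}{D-c_1\cos\theta_1-c_2\cos\theta_2},\qquad c_i:=2\cos\tfrac{a_i}{2}.
\]
Here $D=D(\mathbf p,z)$ does not depend on $\mathbf K$. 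The reduction formula is obtained by replacing $c_i$ with $R_i=|c_i|$, using $\theta_i\mapsto\theta_i+\pi$, and then performing one elliptic-type integration, for instance in $\theta_2$, which gives $\int d\theta_1\,\bigl((D-R_1\cos\theta_1)^2-R_2^2\bigr)^{-1/2}$. So the right-hand side equals $F(c_1,c_2)$.

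The key steps, in order, are as follows.

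First, I show that $F$ is holomorphic in $(c_1,c_2)$ on the complex polydomain $\{|c_1|+|c_2|<D\}$. On this set the denominator satisfies $|D-c_1\cos\theta_1-c_2\cos\theta_2|\ge D-|c_1|-|c_2|>0$ uniformly in $\theta$. Holomorphy then follows from differentiation under the integral sign, or equivalently from the absolutely convergent geometric expansion
\[
F=\sum_{n\ge0}D^{-n-1}\langle(c_1\cos\theta_1+c_2\cos\theta_2)^n\rangle.
\]

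Second, I observe that $F$ is even in each variable separately. The substitution $\theta_i\mapsto\theta_i+\pi$ gives $F(c_1,c_2)=F(-c_1,c_2)=F(c_1,-c_2)$. In the power series only even powers $c_1^{2j}c_2^{2k}$ survive, since odd moments of $\cos\theta_i$ vanish. Hence $F(c_1,c_2)=G(c_1^2,c_2^2)$ with $G$ holomorphic on $\{\sqrt{|w_1|}+\sqrt{|w_2|}<D\}$. The same series shows $F(c_1,c_2)=F(|c_1|,|c_2|)$, which confirms that the reduction formula with $R_i$ agrees with $F$.

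Third, I compose with the map from $\mathbf K$. We have $c_i^2=4\cos^2\frac{K_i-p_i}{2}=2\bigl(1+\cos(K_i-p_i)\bigr)$, which is an entire function of $\mathbf K$. For $\mathbf K_0\in U$ the hypothesis gives $\sqrt{|w_1|}+\sqrt{|w_2|}<D$ at $w_i=c_i^2(\mathbf K_0)$. By continuity this strict inequality persists for complex $\mathbf K$ in a small polydisc around $\mathbf K_0$. The composition $G\bigl(c_1^2(\mathbf K),c_2^2(\mathbf K)\bigr)$ is therefore holomorphic there, and its restriction to real $\mathbf K$ is real-analytic on $U$. Since $c_i^2$ is smooth through $c_i=0$, this holds in particular at the points where $\cos\frac{K_i-p_i}{2}=0$.

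The main obstacle is making sure the evenness argument is applied to the correct object. The elliptic form with $R_i$ is written in terms of $|c_i|$, which is not analytic, and one must not differentiate it directly. I would handle this by proving the identity between the reduced formula and $F(c_1,c_2)$ first, using the substitution $\theta_i\mapsto\theta_i+\pi$ for whichever signs are negative, and then arguing analyticity solely for $F$ via $G$. A secondary check is that the complex neighbourhood is uniform in $\mathbf K$ near $\mathbf K_0$. This follows because the condition is an open strict inequality and $D$ is independent of $\mathbf K$.
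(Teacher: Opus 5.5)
Your proposal is correct and takes essentially the same route as the paper: the half-period substitution $\theta\mapsto\theta+\pi$ makes the integral even in each coefficient, so it depends only on $R_i^2=2+2\cos(K_i-p_i)$, which is entire in $\mathbf K$, and composition finishes the argument. The one difference is that you work with the unreduced two-dimensional integral with signed coefficients $c_i$, and use the geometric series to prove explicitly that an even holomorphic function is holomorphic in the squares; the paper compresses this step into the loose claim that the one-dimensional integrand itself is analytic in $(D,R_1^2,R_2^2)$, so your version is the more careful one.
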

\begin{proof}
Write $I(D,R_1,R_2)=\int_0^{2\pi}\bigl[(D-R_2\cos\theta)^2-R_1^2\bigr]^{-1/2}d\theta$. The substitution $\theta\mapsto\theta+\pi$ maps $\cos\theta\mapsto-\cos\theta$ and leaves the integral over the full period invariant, hence $I(D,R_1,R_2)=I(D,R_1,-R_2)$: the integral is an even function of $R_2$ and therefore a~function of $R_2^{2}$. Since $R_1$ enters only through $R_1^{2}$, we obtain $I=\widetilde I(D,R_1^{2},R_2^{2})$ for some function $\widetilde I$. Under the standing assumption $D>R_1+R_2$ the integrand is a bounded analytic function of $(D,R_1^{2},R_2^{2})$ uniformly in $\theta$, so $\widetilde I$ is real-analytic in its arguments. Finally $R_i^{2}(\mathbf K,\mathbf p)=4\cos^{2}\tfrac{K_i-p_i}{2}=2+2\cos(K_i-p_i)$ is an entire trigonometric function of $\mathbf K$; the absolute value present in $R_i$ itself never appears. Composition of real-analytic maps completes the proof.
\end{proof}

\subsection{The two-particle subsystem and its spectral properties}\label{subsec:22}

A key ingredient of our analysis is the spectral theory of the two-boson Schr\"odinger operator
\begin{equation}
(h_\mu(\mathbf{k}) f)(\mathbf{p}) = [\varepsilon(\mathbf{p}) + \varepsilon(\mathbf{k} - \mathbf{p})] f(\mathbf{p}) - \frac{\mu}{4\pi^2} \int_{\mathbb{T}^2} f(\mathbf{s})\,d\mathbf{s}, \qquad f \in L_2(\mathbb{T}^2).
\label{eq:two_particle}
\end{equation}
This operator describes two bosons with total quasimomentum $\mathbf{k}\in\mathbb{T}^2$ interacting via a~zero-range potential of strength $\mu$.

For any $\mu>0$ and $\mathbf{k}\in\mathbb{T}^2$, $h_\mu(\mathbf{k})$ has a unique simple eigenvalue $z_\mu(\mathbf{k})$ below its essential spectrum. For large $\mu$, a direct expansion of the Fredholm determinant (derived below) yields the asymptotic expansion:
\begin{equation}
z_\mu(\mathbf{k}) = -\mu + 4 - \frac{4 - \varepsilon(\mathbf{k})}{\mu} + O(\mu^{-2}), \qquad \mu\to\infty.
\label{eq:two_particle_eigenvalue}
\end{equation}
In particular, $z_\mu(\boldsymbol{\pi}) = -\mu + 4$ and $z_\mu(\mathbf{0}) = -\mu + 4 - 4/\mu + O(\mu^{-2})$.

\paragraph{Derivation of the two-particle eigenvalue asymptotics.}
Let \(\mathcal{E}_{\mathbf{k}}(\mathbf{p}) = \varepsilon(\mathbf{p}) + \varepsilon(\mathbf{k}-\mathbf{p})\). The eigenvalue equation is
\[
1 = \mu \, \langle (\mathcal{E}_{\mathbf{k}} - z)^{-1} \rangle,
\]
where \(\langle \cdot \rangle =\displaystyle \frac{1}{4\pi^2}\int_{\mathbb{T}^2} (\cdot) \, d\mathbf{p}\). 
We seek the solution in the form
\[
z_\mu(\mathbf{k}) = -\mu + c_0 + \frac{c_1}{\mu} + \frac{c_2}{\mu^2} + O(\mu^{-3}).
\]
Substituting into the resolvent and expanding to order \(\mu^{-3}\):
\[
\frac{1}{\mathcal{E}_{\mathbf{k}} - z}
=\]
\[=\frac{1}{\mu} \left[ 1 - \frac{\mathcal{E}_{\mathbf{k}}-c_0}{\mu} + \frac{(\mathcal{E}_{\mathbf{k}}-c_0)^2 + c_1}{\mu^2} - \frac{(\mathcal{E}_{\mathbf{k}}-c_0)^3 + 2c_1(\mathcal{E}_{\mathbf{k}}-c_0) + c_2}{\mu^3} + O(\mu^{-4}) \right].
\]
Averaging and using \(1 = \mu \langle \dots \rangle\), we obtain:
\[
1 = 1 - \frac{\langle \mathcal{E}_{\mathbf{k}}\rangle - c_0}{\mu}
+ \frac{\langle (\mathcal{E}_{\mathbf{k}}-c_0)^2\rangle + c_1}{\mu^2}
- \frac{\langle (\mathcal{E}_{\mathbf{k}}-c_0)^3\rangle + 2c_1\langle \mathcal{E}_{\mathbf{k}}-c_0\rangle + c_2}{\mu^3} + O(\mu^{-4}).
\]
Matching coefficients:
\begin{align*}
c_0 &= \langle \mathcal{E}_{\mathbf{k}} \rangle = 4, \\
c_1 &= -\langle (\mathcal{E}_{\mathbf{k}}-4)^2 \rangle = -(4-\varepsilon(\mathbf{k})), \\
c_2 &= -\langle (\mathcal{E}_{\mathbf{k}}-4)^3 \rangle - 2c_1 \langle \mathcal{E}_{\mathbf{k}}-4 \rangle.
\end{align*}
Since \(\langle \mathcal{E}_{\mathbf{k}}-4 \rangle = 0\), the second term in \(c_2\) vanishes, but the first term is generally non-zero. Hence \(c_2 = -\langle (\mathcal{E}_{\mathbf{k}}-4)^3 \rangle\) is not zero, so the expansion genuinely contains an \(O(\mu^{-2})\) contribution. However, for the leading-order analysis we only need the first two corrections. Thus
\[
\boxed{z_\mu(\mathbf{k}) = -\mu + 4 - \frac{4-\varepsilon(\mathbf{k})}{\mu} + O(\mu^{-2}).}
\]
The minus sign is crucial: the binding energy increases with \(\mu\), so the correction must be negative. This is consistent with the fact that the two-particle state becomes more deeply bound as the interaction strength grows. In particular, for \(\mathbf{k}=\mathbf{0}\) we have \(z_\mu(\mathbf{0}) = -\mu + 4 - 4/\mu + O(\mu^{-2})\), while for \(\mathbf{k}=\boldsymbol{\pi}\) we get \(z_\mu(\boldsymbol{\pi}) = -\mu + 4\) (since \(\varepsilon(\boldsymbol{\pi})=4\)).

\paragraph{Physical interpretation of the sign.}
The minus sign in the \(1/\mu\) correction has a clear physical meaning: as the coupling constant \(\mu\) increases, the dimer becomes more tightly bound, so its energy should decrease (i.e., become more negative) relative to the leading term \(-\mu\). The positive sign would imply that the binding becomes weaker for larger \(\mu\), which contradicts the attractive nature of the contact interaction. Thus the minus sign is not only mathematically correct but also physically necessary.

\subsection{Essential spectrum and invariant subspaces at $\mathbf{K}=0$}

By the Hunziker–van Winter–Zhislin (HVZ) theorem adapted to the lattice setting, the~essential spectrum of $H_\mu(\mathbf{0})$ is given by:
\begin{equation}
\sigma_{\mathrm{ess}}(H_\mu(\mathbf{0})) = [z_\mu(\mathbf{0}), 4 + z_\mu(\boldsymbol{\pi})] \cup [0, 9],
\label{eq:essential_spectrum}
\end{equation}
with left edge at $z_\mu(\mathbf{0})$. The first interval is the two-particle branch associated with the formation of a dimer, while the second interval is the three-particle continuum.

At $\mathbf{K}=0$, the free dispersion $E_0(\mathbf{p},\mathbf{q})$ is invariant under the parity transformation
\[
(\mathbf{p},\mathbf{q}) \mapsto (-\mathbf{p},-\mathbf{q}).
\]
Consequently, the Hilbert space $L_2^s((\mathbb{T}^2)^2)$ decomposes into the orthogonal direct sum of even and odd subspaces:
\[
L_2^s((\mathbb{T}^2)^2) = L_2^{s,e}((\mathbb{T}^2)^2) \oplus L_2^{s,o}((\mathbb{T}^2)^2),
\]
where
\begin{align}
L_2^{s,e}((\mathbb{T}^2)^2) &:= \{ f \in L_2^s((\mathbb{T}^2)^2) : f(-\mathbf{p},-\mathbf{q}) = f(\mathbf{p},\mathbf{q}) \}, \\
L_2^{s,o}((\mathbb{T}^2)^2) &:= \{ f \in L_2^s((\mathbb{T}^2)^2) : f(-\mathbf{p},-\mathbf{q}) = -f(\mathbf{p},\mathbf{q}) \}.
\end{align}

\begin{theorem}\label{thm:invariant}
The subspaces $L_2^{s,e}((\mathbb{T}^2)^2)$ and $L_2^{s,o}((\mathbb{T}^2)^2)$ are invariant under the operator $H_\mu(\mathbf{0})$.
\end{theorem}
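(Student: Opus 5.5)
The plan is to introduce the parity operator $(Pf)(\mathbf p,\mathbf q)=f(-\mathbf p,-\mathbf q)$ on $L_2^s((\mathbb T^2)^2)$ and show that it commutes with $H_\mu(\mathbf 0)$. Since $P$ is a self-adjoint unitary involution, the even and odd subspaces are exactly $\ker(P-I)$ and $\ker(P+I)$, with orthogonal projections $\tfrac12(I\pm P)$. Invariance of both subspaces then follows at once from $PH_\mu(\mathbf 0)=H_\mu(\mathbf 0)P$.

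First I will check that $P$ maps $L_2^s((\mathbb T^2)^2)$ into itself at $\mathbf K=0$. If $f(\mathbf p,\mathbf q)=f(\mathbf q,\mathbf p)=f(\mathbf p,-\mathbf p-\mathbf q)$, then $Pf$ has the same symmetries: the swap symmetry is immediate, and
\[
(Pf)(\mathbf p,-\mathbf p-\mathbf q)=f(-\mathbf p,\mathbf p+\mathbf q)=f(-\mathbf p,-\mathbf q).
\]
The last equality is the third symmetry of $f$ applied at the point $(-\mathbf p,\mathbf p+\mathbf q)$, using $-(-\mathbf p)-(\mathbf p+\mathbf q)=-\mathbf q$. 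This is the one place where $\mathbf K=0$ matters, since for general $\mathbf K$ the reflection would send the fiber at $\mathbf K$ to the fiber at $-\mathbf K$.

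Next I will show that $P$ commutes with the kinetic part. Because $\varepsilon$ is even, $E_{\mathbf 0}(-\mathbf p,-\mathbf q)=E_{\mathbf 0}(\mathbf p,\mathbf q)$, so $P$ commutes with multiplication by $E_{\mathbf 0}$.

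Then I will treat each $V_j$ by the measure-preserving change of variables $\mathbf s\mapsto-\mathbf s$ on $\mathbb T^2$:
\[
(V_1Pf)(\mathbf p,\mathbf q)=\frac{1}{4\pi^2}\int f(-\mathbf p,-\mathbf s)\,d\mathbf s=\frac{1}{4\pi^2}\int f(-\mathbf p,\mathbf s)\,d\mathbf s=(PV_1f)(\mathbf p,\mathbf q).
\]
The same argument handles $V_2$. For $V_3$, the integrand $f(-\mathbf s,-\mathbf p-\mathbf q+\mathbf s)$ becomes, after $\mathbf s\mapsto-\mathbf s$, the expression $f(\mathbf s,(-\mathbf p)+(-\mathbf q)-\mathbf s)$, which equals $(PV_3f)(\mathbf p,\mathbf q)$.

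Finally, $H_\mu(\mathbf 0)$ is bounded because $E_{\mathbf 0}$ is bounded and each $V_j$ is a projection, so there are no domain issues. Hence $PH_\mu(\mathbf 0)=H_\mu(\mathbf 0)P$, and the two projections $\tfrac12(I\pm P)$ commute with $H_\mu(\mathbf 0)$. This gives the invariance claimed in Theorem~\ref{thm:invariant}, and in fact shows that both subspaces are reducing.

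The only genuinely delicate step is the first one, verifying that $P$ preserves the bosonic symmetry constraint of the fiber space. That step relies on $\mathbf K=\mathbf 0$, equivalently $-\mathbf K=\mathbf K$; the same holds at $\mathbf K=\boldsymbol\pi$ modulo $2\pi$. The rest is a routine change of variables.
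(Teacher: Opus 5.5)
Your proof is correct, and it is more direct than the paper's argument. The paper states Theorem~\ref{thm:invariant} right after observing that $E_{\mathbf 0}$ is invariant under $(\mathbf p,\mathbf q)\mapsto(-\mathbf p,-\mathbf q)$. The argument it actually carries out (beginning of Section~3) is on the Birman--Schwinger side: the kernel $A_\mu(\mathbf p,\mathbf q;\mathbf 0,z)$ is even because $E_{\mathbf 0}$ and $\Delta_\mu(\cdot,\mathbf 0,z)$ are. Hence the \emph{one-particle} spaces $L_2^e(\mathbb T^2)$ and $L_2^o(\mathbb T^2)$ are invariant under $A_\mu(\mathbf 0,z)$, and, using the reflection $p_1\mapsto -p_1$, so are the finer EE/EO/OE/OO sectors. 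That route feeds directly into the decomposition $A_\mu^{p}=A_\mu^{p,e}\oplus A_\mu^{p,o}$ used for eigenvalue counting. However, it reaches the three-body level only through the correspondence \eqref{eq:f_pq}--\eqref{eq:psi_def}, i.e.\ only for eigenfunctions with energy below $z_\mu(\mathbf 0)$. It also never verifies, at the level of $H_\mu(\mathbf 0)$, that the reflection respects the constraint defining $L_2^s$, or that it commutes with the interaction $V_1+V_2+V_3$ and not just with $E_{\mathbf 0}$. Your argument supplies both checks and proves the statement exactly as formulated on all of $L_2^s((\mathbb T^2)^2)$. It also gives the stronger fact that both subspaces reduce $H_\mu(\mathbf 0)$, which is what is needed to make sense of $H_\mu^e(\mathbf 0)$ and $H_\mu^o(\mathbf 0)$ in Theorem~\ref{thm:main}. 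One small inaccuracy in your closing remark: the symmetry constraint is not the only place where $\mathbf K$ enters. Since $E_{\mathbf K}(-\mathbf p,-\mathbf q)=E_{-\mathbf K}(\mathbf p,\mathbf q)$, evenness of the dispersion also requires $2\mathbf K\equiv\mathbf 0$ (cf.\ Proposition~\ref{prop:trim}). This does not affect your proof at $\mathbf K=\mathbf 0$ or its extension to $\boldsymbol\pi$.
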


The proof of this theorem, together with the corresponding spectral reduction for the Birman--Schwinger operator, is developed in Section~\ref{sec:proof_main} and Section~\ref{sec:five} for $\mathbf K=0$ and $\mathbf K=\boldsymbol\pi$, respectively.


\subsection{Main results for $\mathbf{K}=0$}

The following theorem summarises our main results for the case $\mathbf{K}=0$.

\begin{theorem}\label{thm:main}
There exists $\mu_0>0$ such that for all $\mu>\mu_0$:

\begin{enumerate}
\item[(a)] The operator $H_\mu^o(\mathbf{0})$ has no eigenvalues below the essential spectrum of $H_\mu(\mathbf{0})$.

\item[(b)] The operator $H_\mu^e(\mathbf{0})$ has two eigenvalues below the essential spectrum. They satisfy the asymptotic expansions
\[
z_1^s(\mu) = -3\mu + 6 + O(\mu^{-1}), \qquad
z_2^s(\mu) = -\mu + C + O(\mu^{-1}), \quad \mu\to\infty,
\]
where $C$ is a constant determined by the lattice Green function.

\item[(c)] The spectral gap $\Delta(\mu):=z_2^s(\mu)-z_1^s(\mu)$ satisfies
\[
\Delta(\mu) = 2\mu + O(1), \qquad \mu\to\infty.
\]

\item[(d)] The ground-state wavefunction $\Psi_0$ decays exponentially in the coordinate representation:
\[
|\Psi_0(n_1,n_2,n_3)| \le C e^{-\alpha (|n_1|+|n_2|+|n_3|)}, \qquad (n_1,n_2,n_3)\in(\mathbb{Z}^2)^3,
\]
with decay rate 
$\alpha\sim\ln\mu$ in the strong-coupling limit.
\end{enumerate}
\end{theorem}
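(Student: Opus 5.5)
The plan is to reduce everything to the Birman--Schwinger (BS) principle on the invariant subspaces of Theorem~\ref{thm:invariant}. For $z<z_\mu(\mathbf 0)$, $z$ is an eigenvalue of $H_\mu(\mathbf 0)$ if and only if $1$ is an eigenvalue of the BS operator. By bosonic symmetry, the three interaction channels $V_1,V_2,V_3$ can be collapsed into a single channel.

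The eigenfunction then has the Faddeev form
\[
f(\mathbf p,\mathbf q)=\frac{\varphi(\mathbf p)+\varphi(\mathbf q)+\varphi(-\mathbf p-\mathbf q)}{E_{\mathbf 0}(\mathbf p,\mathbf q)-z}.
\]
Substituting this form gives the one-body equation $\varphi=\mu\,\Delta_\mu(\cdot,z)^{-1}T(z)\varphi$ on $L_2(\mathbb T^2)$. Here $\Delta_\mu(\mathbf p,z)=1-\mu\langle(E_{\mathbf 0}(\mathbf p,\cdot)-z)^{-1}\rangle$ is the two-particle Fredholm determinant, and $T(z)$ is the exchange integral operator.

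The parity decomposition carries over to $\varphi(-\mathbf p)=\pm\varphi(\mathbf p)$. On each sector I will study the largest eigenvalue $\lambda^{e/o}(z)$ of the symmetrized (self-adjoint, positivity-preserving) BS operator $A_\mu(z)$. By Krein--Rutman, on the even sector it is simple with a positive eigenvector, and it is monotone increasing in $z$. Hence the eigenvalues below $z_\mu(\mathbf 0)$ are counted by the number of crossings of the eigenvalues of $A_\mu(z)$ through $1$.

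For large $\mu$ and $z=-\kappa\mu+O(1)$, I will expand $(E_{\mathbf 0}-z)^{-1}$ in powers of $\mu^{-1}$, as in the two-particle derivation of \eqref{eq:two_particle_eigenvalue}. This splits $A_\mu(z)=A_\mu^p(z)+R_\mu(z)$, where the principal part $A_\mu^p$ has finite rank. It is built from the constant function and the low harmonics $\cos p_i$, which arise because $E_{\mathbf 0}-\langle E_{\mathbf 0}\rangle$ is a trigonometric polynomial of degree one. The remainder satisfies $\|R_\mu(z)\|=O(\mu^{-1})$ uniformly on compact $\kappa$-ranges.

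On the even sector, solving $\det(I-A^p_\mu(z))=0$ gives two branches. The first, $\kappa=3$, comes from the constant mode, where all three pairs bind coherently; expanding to $O(1)$ yields $z=-3\mu+6+O(\mu^{-1})$, since the average kinetic energy is $6$. The second branch, $\kappa=1$, has a dimer plus a spectator. Its $O(1)$ constant $C$ is the root of the transcendental equation $b_0(\delta_\infty)=1/(1+\delta_\infty)$, where $b_0$ is the lattice Green-function average obtained from the elliptic reduction of Lemma~\ref{lem:analyticity}. On the odd sector, the principal part has norm $<1$ for $z<z_\mu(\mathbf 0)$ once $\mu>\mu_0$. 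Combined with the remainder bound, this gives (a).

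The variational check of Remark~\ref{rem:var_check} comes next. I will use trial functions concentrated on the diagonal $n_1=n_2=n_3$ and on a single pair. These bound the lowest eigenvalues from above by $-3\mu+6$ and by $-\mu+C+O(\mu^{-1})$, respectively, which confirms that the two branches are genuine eigenvalues. Combining this with the crossing count (monotonicity plus the $O(\mu^{-1})$ perturbation via analytic perturbation or Rouché applied to $\det(I-A_\mu(z))$) gives exactly two eigenvalues with the stated errors. Part (c) is then immediate by subtraction: $\Delta(\mu)=2\mu+C-6+O(\mu^{-1})$.

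For (d), I will pass to relative coordinates and apply a discrete Agmon / Combes--Thomas argument. The ground state lies at distance $\sim2\mu$ below $\sigma_{\mathrm{ess}}$, so the resolvent kernel $(E_{\mathbf 0}-z)^{-1}$ continues analytically to the strip $|\operatorname{Im}k_i|<\alpha$. This requires $\cosh\alpha$-weighted kinetic terms to stay below the gap, roughly $e^{\alpha}\lesssim 3\mu$. Paley--Wiener then gives $\alpha\le\ln(3\mu)+O(\mu^{-1})$, and the Faddeev representation transfers the decay from $\varphi$ to $\Psi_0$.

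I expect the main obstacle to be the uniform control of the remainder near the second branch. For $z$ close to $z_\mu(\mathbf 0)$, the determinant $\Delta_\mu(\mathbf p,z)$ vanishes at $\mathbf p=0$, and the 2D logarithmic threshold behaviour makes $\|R_\mu\|$ non-uniform. To handle this, I would first restrict $z\le z_\mu(\mathbf 0)-c$ with $c>0$ fixed, which is legitimate because $C<4$. I would then show that the principal-part eigenvalue crosses $1$ transversally, using the derivative of $b_0$ at $\delta_\infty$.
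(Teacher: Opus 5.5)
Your outline follows the paper's route: Birman--Schwinger reduction, parity splitting, a finite-rank principal part plus an $O(\mu^{-1})$ remainder, the rank-two symmetric block with limit $g(\delta)=2+\delta-1/b_0(\delta)$, a constant trial function, and Paley--Wiener. The $z_2^s$ part goes through. At fixed $\delta=z_\mu(\mathbf 0)-z>0$ the second eigenvalue is $g(\delta)+O(\mu^{-1})$, and $g'(\delta)=1-\langle(\varepsilon+\delta)^{-2}\rangle/\langle(\varepsilon+\delta)^{-1}\rangle^{2}<0$ by Cauchy--Schwarz, so the crossing moves only by $O(\mu^{-1})$.

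The step that fails is the claim that the same perturbation argument gives $z_1^s=-3\mu+6+O(\mu^{-1})$. On the trimer branch the Birman--Schwinger eigenvalue is almost flat in $z$. For $z=-3\mu+d$ one has $\Delta_\mu(\mathbf p,\mathbf 0,z)=\tfrac23+\tfrac{\varepsilon(\mathbf p)+4-d}{9\mu}+O(\mu^{-2})$, and the top eigenvalue is $\lambda_1=1+\tfrac{d-6}{2\mu}+O(\mu^{-2})$. So the constant $6$ is a $\mu^{-1}$ effect in $\lambda_1$; this is the degenerate case of Remark~\ref{rem:order_criterion}. An operator error of norm $O(\mu^{-1})$ therefore moves $d$ by $O(1)$. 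Rouch\'e or analytic perturbation with your remainder bound gives only $z_1^s=-3\mu+O(1)$, and the trial function adds only $z_1^s\le-3\mu+6$. That is enough for (c) but not for (b).

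The paper does not close this either. It substitutes the asymptotics of Lemma~\ref{lem:Delta}, which are stated only for $z$ near $z_\mu(\mathbf 0)$, and Theorem~\ref{thm:ground_estimates} proves only $-3\mu<z_1^s<-3\mu+6$. You have two ways to fix it:
\begin{itemize}
\item Show that on this branch the remainder is $O(\mu^{-2})$, and use the expansion of $\Delta_\mu$ above. This is true, since there $\Delta_\mu\approx\tfrac23$ is not small and the $n\ge2$ terms carry $\mu(8-z)^{-3}$.
\item More simply, bypass Birman--Schwinger. $V_1,V_2,V_3$ are the conditional expectations onto functions of $\mathbf p$, $\mathbf q$, $\mathbf p+\mathbf q$, which are pairwise independent. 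Hence $W=V_1+V_2+V_3$ has spectrum $\{0,1,3\}$, with $3$ simple on constants. Since $E_{\mathbf 0}-6$ lies in the eigenvalue-$1$ space and the gap $2\mu$ dominates $\|E_{\mathbf 0}\|\le12$, second-order perturbation theory gives the two-sided result $z_1^s=-3\mu+6-\tfrac{3}{2\mu}+O(\mu^{-2})$.
\end{itemize}

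There are two further gaps.

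\textbf{Counting ``exactly two''.} Restricting to $z\le z_\mu(\mathbf 0)-c$ locates $z_2^s$ but cannot show there are exactly two eigenvalues. The window $(z_\mu(\mathbf 0)-c,z_\mu(\mathbf 0))$ is never examined. There your bound on $\|R_\mu(z)\|$ degrades like $\mu^{-1}\ln(1/\delta)$, which is useless once $\delta\le e^{-c\mu}$. That is exactly where an extra shallow two-dimensional $s$-wave level would sit. To exclude it:
\begin{itemize}
\item Split off the singular direction $(\varepsilon(\mathbf p)+\delta)^{-1/2}$, which carries the eigenvalue $\sim2\mu b_0(\delta)$.
\item Show by a Schur complement that, uniformly as $\delta\downarrow0$, the compression to its orthogonal complement has one eigenvalue $g(\delta)+O(\mu^{-1})$ and all others $O(\mu^{-1})$.
\item Handle the $d$-wave mode $\cos p_1-\cos p_2$, which your ``two branches'' silently omit; its eigenvalue tends to $8/\pi-2\approx0.546$.
\item Handle the odd sector, whose principal part is $\le0$ with norm tending to $2-4/\pi$.
\end{itemize}
The $d$-wave and odd sectors are controlled uniformly because their projected kernels vanish at $\mathbf p=0$ and cancel the $(\varepsilon(\mathbf p)+\delta)^{-1/2}$ singularity. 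This uniformity is also what (a) needs, since (a) concerns $z\uparrow z_\mu(\mathbf 0)$.

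\textbf{Part (d).} Paley--Wiener turns analyticity into decay, i.e.\ a \emph{lower} bound on the rate; it does not give ``$\alpha\le\ln(3\mu)$''. That upper bound is all the paper proves, by exhibiting the zero $\mathbf p=(i\beta,0)$, $\mathbf q=0$ of $E_{\mathbf 0}-z_1^s$. A lower bound is exactly what your approach could add, but your strip estimate only checks $E_{\mathbf 0}-z_1^s\neq0$. The Faddeev term $\varphi(-\mathbf p-\mathbf q)$ sees the sum of the imaginary parts of $\mathbf p$ and $\mathbf q$. Moreover, $\Delta_\mu(\cdot,z_1^s)$ vanishes (the dimer--spectator channel) at imaginary parts $\ln\mu+O(1)$. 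So the strip you can certify has width about $\tfrac12\ln\mu$, not $e^{\alpha}\approx3\mu$. That still gives a rate of order $\ln\mu$, measured, as you rightly do, in relative coordinates. The fiber function is constant along $n_1=n_2=n_3$, so decay in $|n_1|+|n_2|+|n_3|$ is impossible.
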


\begin{remark}
The constant $6$ in the asymptotics for $z_1^s(\mu)$ corresponds to the constant $C_2$ introduced in the companion paper~\cite{AbdullaevEshniyozovDolgopolov2026}. With the normalisation $\varepsilon(\mathbf{p})=2-\cos p_1-\cos p_2$, we have $C_2=6$, which follows from
\[
C_2 = 3 \langle \varepsilon \rangle = 3 \cdot 2 = 6.
\]

Thus the two formulations are consistent: the previously abstract constant $C_2$ is explicitly computed here.
\end{remark}

\subsection{The Birman–Schwinger reduction}\label{sec:bs_general}

For a fixed total quasimomentum $\mathbf{K}\in\mathbb{T}^2$ and $z<z_\mu(\mathbf{K})$, define the Fredholm determinant associated with the two-particle operator $h_\mu(\mathbf{k})$ by
\[
\Delta_\mu(\mathbf{p}, \mathbf{K}, z) := 1 - \frac{\mu}{4\pi^2} \int_{\mathbb{T}^2} \frac{d\mathbf{q}}{E_{\mathbf{K}}(\mathbf{p},\mathbf{q}) - z},
\]
where \(E_{\mathbf{K}}\) is given by \eqref{eq:dispersion} for $\mathbf{K}=0$ and by \eqref{eq:dispersion_pi} for $\mathbf{K}=\boldsymbol{\pi}$, respectively.

The integral operator \(A_\mu(\mathbf{K}, z)\) on \(L_2(\mathbb{T}^2)\) is defined by
\begin{equation}
(A_\mu(\mathbf{K}, z) f)(\mathbf{p}) = \frac{\mu}{2\pi^2} \int_{\mathbb{T}^2} \frac{f(\mathbf{q})\,d\mathbf{q}}{\sqrt{\Delta_\mu(\mathbf{p}, \mathbf{K}, z)}(E_{\mathbf{K}}(\mathbf{p},\mathbf{q})-z)\sqrt{\Delta_\mu(\mathbf{q}, \mathbf{K}, z)}}.
\label{eq:bs_operator}
\end{equation}

The following Birman–Schwinger principle is central to our analysis \cite{LakaevAbdullaev2003, Birman1961, KlausSimon1980}.

\begin{lemma}\label{lem:bs}
The number of eigenvalues of $H_\mu(\mathbf{K})$ below $z$ equals the number of eigenvalues of $A_\mu(\mathbf{K}, z)$ greater than $1$.
\end{lemma}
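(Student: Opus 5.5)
We follow the classical Birman--Schwinger argument, adapted to the three-body Faddeev-type reduction. Throughout, $z<z_\mu(\mathbf K)$ is fixed; then $\Delta_\mu(\mathbf p,\mathbf K,z)>0$ for all $\mathbf p$, because $z_\mu(\mathbf k)$ is the root of $\Delta$ for the corresponding two-particle subsystem and $\Delta$ is increasing as $z$ decreases. Hence $\sqrt{\Delta_\mu}$ is well defined and bounded away from zero, and $A_\mu(\mathbf K,z)$ is a bounded self-adjoint operator.

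\textbf{Step 1: eigenvector correspondence at level $z$.} Suppose $H_\mu(\mathbf K)f=zf$ with $f\in L_2^s$. Set $\varphi(\mathbf p)=\frac{1}{4\pi^2}\int f(\mathbf p,\mathbf s)\,d\mathbf s$. By the bosonic symmetry, $V_2f$ and $V_3f$ are $\varphi$ evaluated at the other two particle momenta. Thus
\[
f(\mathbf p,\mathbf q)=\mu\,\frac{\varphi(\mathbf p)+\varphi(\mathbf q)+\varphi(\mathbf K-\mathbf p-\mathbf q)}{E_{\mathbf K}(\mathbf p,\mathbf q)-z}.
\]
Integrating over $\mathbf q$ gives $\Delta_\mu(\mathbf p,\mathbf K,z)\varphi(\mathbf p)$ on the left and two equal cross terms on the right; this is the source of the factor $\mu/2\pi^2$. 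Putting $\psi=\sqrt{\Delta_\mu}\,\varphi$ yields $A_\mu(\mathbf K,z)\psi=\psi$. Conversely, any solution $\psi$ reconstructs a nonzero symmetric $f$. So $z$ is an eigenvalue of $H_\mu(\mathbf K)$ exactly when $1$ is an eigenvalue of $A_\mu(\mathbf K,z)$, with equal multiplicities.

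\textbf{Step 2: monotonicity and compactness.} For fixed $\psi$, the map $z\mapsto\langle A_\mu(\mathbf K,z)\psi,\psi\rangle$ is continuous and strictly increasing on $(-\infty,z_\mu(\mathbf K))$. To see this, I would rewrite $A_\mu$ via the symmetric square-root factorization. Write $H_0=E_{\mathbf K}$ and $V=\mu(V_1+V_2+V_3)$ restricted to the symmetric space. Then $A_\mu$ is unitarily equivalent, on the complement of the kernel, to the ``dressed'' operator $W^{1/2}(H_0-z)^{-1}W^{1/2}$ with the two-body resolvent factored out. More concretely, I would use the identity $A_\mu(z)=\Delta^{-1/2}(T(z))\Delta^{-1/2}$, where $T$ has positive, $z$-increasing kernel, and $\Delta$ is decreasing in $z$. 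Each eigenvalue $\lambda_j(z)$ of $A_\mu(\mathbf K,z)$ is then continuous and increasing. Moreover $\lambda_j(z)\to0$ as $z\to-\infty$, since $\|A_\mu\|=O(\mu/|z|)$.

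\textbf{Step 3: counting.} Below the essential spectrum, $A_\mu(\mathbf K,z)$ is compact: its kernel is bounded for $z<\min\sigma_{\mathrm{ess}}$ away from thresholds. By Steps 1--2, each eigenvalue $z'<z$ of $H_\mu(\mathbf K)$ is a crossing point of some branch $\lambda_j$ through $1$. Counting crossings with multiplicity, and using the min--max characterization of both sides (Glazman lemma applied to the quadratic forms $\langle(H_\mu-z)f,f\rangle<0$ versus $\langle(A_\mu(z)-1)\psi,\psi\rangle>0$), gives equality of the two counts. The cleanest route here is the Klaus--Simon form version. Let $\mathcal L$ be a subspace on which $H_\mu-z<0$. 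The map $f\mapsto\sqrt{\Delta}\,\varphi_f$ is injective on $\mathcal L$ and sends it into a subspace where $A_\mu-1>0$. The converse injection uses the reconstruction formula from Step 1.

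\textbf{Main obstacle.} The main obstacle is Step 2/3. $H_\mu$ itself is not of the simple form $H_0-\mu P$ with a single rank-one $P$: the three projections $V_i$ do not commute, and the two-body resolvent has been absorbed into $\Delta_\mu$. So strict monotonicity of the eigenvalues of $A_\mu(\mathbf K,z)$ in $z$ is not immediate. I would first try to prove it via the Glazman form comparison directly, which avoids monotonicity. If that fails, I would establish $\partial_z A_\mu>0$ in the form sense on the relevant eigenvectors through the identity $\frac{d}{dz}\lambda_j=\langle\partial_zA\,\psi,\psi\rangle$ combined with the reconstruction of $f$, which gives $\frac{d}{dz}\lambda_j=c\|f\|^2>0$.
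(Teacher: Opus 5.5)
\textbf{There is a genuine gap in Steps 2--3.} Your Step~1 is correct: it is exactly the eigenfunction correspondence the paper records in its formulas for $f$ and $\psi$. The paper itself gives no proof of the lemma and only cites the classical Birman--Schwinger/Klaus--Simon results. The problem is in Steps~2--3, where the counting actually happens.

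The claim of Step~2 is false: $A_\mu(\mathbf K,z)$ is not form-increasing in $z$. Its quadratic form takes negative values. Take $\mathbf K=\mathbf 0$ and $\psi(\mathbf p)=\sin p_1$. Expanding $(E_{\mathbf 0}-z)^{-1}$ in powers of $1/|z|$, and using that $\Delta_\mu$ is even in $\mathbf p$, gives
\[
\langle A_\mu(\mathbf 0,z)\psi,\psi\rangle=-2\pi^2\mu z^{-2}(1+o(1)),\qquad z\to-\infty .
\]
Since this form tends to $0$ as $z\to-\infty$, it decreases along this $\psi$. Your heuristic fails because a pointwise positive, pointwise increasing kernel gives a form-monotone operator only if the kernel $(E_{\mathbf K}(\mathbf p,\mathbf q)-z)^{-1}$ is positive definite as an operator. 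It is not: the exchange structure produces the negative odd sector the paper itself exhibits. For the same reason $A_\mu$ is not unitarily equivalent to a sandwiched free resolvent.

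Without monotonicity, the crossing count in Step~3 has no foundation. The Glazman version you then propose only asserts that $f\mapsto\sqrt{\Delta_\mu}\,\varphi_f$ maps $\{H_\mu-z<0\}$ into $\{A_\mu-1>0\}$, and that the reconstruction maps back. That assertion is the entire content of the lemma.

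\textbf{The missing idea is a factorization followed by a congruence.}
\begin{itemize}
\item \emph{Factorization.} On $L_2^s$ write $\mu(V_1+V_2+V_3)=W^*W$ with $Wf=\sqrt{12\pi^2\mu}\,\varphi_f$. Then $W^*\varphi=\sqrt{\mu/(12\pi^2)}\,[\varphi(\mathbf p)+\varphi(\mathbf q)+\varphi(\mathbf K-\mathbf p-\mathbf q)]$; the permutation symmetry supplies the factor $3$.
\item \emph{Undressed operator.} With $R_0(z)=(E_{\mathbf K}-z)^{-1}$, set $T(z):=WR_0(z)W^*$ on $L_2(\mathbb T^2)$. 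It equals $(1-\Delta_\mu)+Y(z)$, where $Y(z)$ has kernel $\frac{\mu}{2\pi^2}(E_{\mathbf K}(\mathbf p,\mathbf q)-z)^{-1}$.
\item \emph{Congruence.} Since $A_\mu=\Delta_\mu^{-1/2}Y\Delta_\mu^{-1/2}$,
\[
T(z)-1=\Delta_\mu^{1/2}\bigl(A_\mu(\mathbf K,z)-1\bigr)\Delta_\mu^{1/2}.
\]
Below the two-particle branch, $\Delta_\mu^{1/2}$ is bounded and boundedly invertible. Glazman's lemma (Sylvester's law of inertia) then gives $n(1,A_\mu(\mathbf K,z))=n(1,T(z))$.
\item \emph{Classical principle.} $T(z)\ge 0$ \emph{is} form-increasing in $z$, so the classical Birman--Schwinger principle applies to it.
\end{itemize}

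In your Glazman language the two directions are as follows.
\begin{itemize}
\item If $\langle (H_\mu-z)f,f\rangle<0$, then $Wf\neq 0$. Cauchy--Schwarz gives $\|Wf\|^4\le\langle (E_{\mathbf K}-z)f,f\rangle\,\langle TWf,Wf\rangle$. Hence $\langle TWf,Wf\rangle>0$, and since $\langle (E_{\mathbf K}-z)f,f\rangle<\|Wf\|^2$ we get $\langle (T-1)Wf,Wf\rangle>0$.
\item Conversely, if $\langle (T-1)g,g\rangle>0$, put $f=R_0W^*g$. Then $\langle (H_\mu-z)f,f\rangle=\langle Tg,g\rangle-\|Tg\|^2$. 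This is negative by Cauchy--Schwarz, because $\|Tg\|\,\|g\|\ge\langle Tg,g\rangle>\|g\|^2$.
\end{itemize}
Your two maps are exactly $\Delta_\mu^{1/2}W$ and $R_0W^*\Delta_\mu^{-1/2}$ up to constants, so this is what makes Step~3 true.

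Your fallback would also work if carried out. For $A_\mu\psi=\psi$, with $f$ given by the reconstruction formula, one checks $\langle\partial_z A_\mu\psi,\psi\rangle=\|f\|^2/(12\pi^2\mu)>0$. Combined with Kato's perturbation theory for the analytic family and $\|A_\mu(\mathbf K,z)\|\to0$ as $z\to-\infty$, this yields the count. However, the identity holds only at eigenvalue $1$, so it does not rescue Step~2 as stated.
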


Moreover, the eigenfunction $f(\mathbf{p},\mathbf{q})$ of $H_\mu(\mathbf{K})$ corresponding to an eigenvalue $z<z_\mu(\mathbf{K})$ is related to the eigenfunction $\psi(\mathbf{p})$ of $A_\mu(\mathbf{K}, z)$ by
\begin{equation}
f(\mathbf{p},\mathbf{q}) = \frac{\mu}{E_{\mathbf{K}}(\mathbf{p},\mathbf{q})-z}\left[
\frac{\psi(\mathbf{p})}{\sqrt{\Delta_\mu(\mathbf{p},\mathbf{K}, z)}}
+ \frac{\psi(\mathbf{q})}{\sqrt{\Delta_\mu(\mathbf{q},\mathbf{K}, z)}}
+ \frac{\psi(\mathbf{K}-\mathbf{p}-\mathbf{q})}{\sqrt{\Delta_\mu(\mathbf{K}-\mathbf{p}-\mathbf{q},\mathbf{K}, z)}}
\right],
\label{eq:f_pq}
\end{equation}
and conversely,
\begin{equation}
\psi(\mathbf{p}) = \frac{1}{4\pi^2} \sqrt{\Delta_\mu(\mathbf{p},\mathbf{K}, z)} \int_{\mathbb{T}^2} f(\mathbf{p},\mathbf{s})\,d\mathbf{s}.
\label{eq:psi_def}
\end{equation}

\begin{proposition}[TRIM-only parity reduction]\label{prop:trim}
The Birman--Schwinger kernel $A_\mu(\mathbf p,\mathbf q;\mathbf K,z)$ is invariant under $(\mathbf p,\mathbf q)\mapsto(-\mathbf p,-\mathbf q)$ for all $(\mathbf p,\mathbf q)$ if and only if $2\mathbf K\equiv \mathbf 0\ (\mathrm{mod}\ 2\pi)$, i.e., $\mathbf K\in\{0,\pi\}^2$.
\end{proposition}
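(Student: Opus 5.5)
The plan is to reduce the symmetry of the kernel to the symmetry of the dispersion $E_{\mathbf K}$, and then to a simple trigonometric identity. By \eqref{eq:bs_operator}, the kernel is
\[
A_\mu(\mathbf p,\mathbf q;\mathbf K,z)=\frac{\mu}{2\pi^2}\,\frac{1}{\sqrt{\Delta_\mu(\mathbf p,\mathbf K,z)}\,\bigl(E_{\mathbf K}(\mathbf p,\mathbf q)-z\bigr)\sqrt{\Delta_\mu(\mathbf q,\mathbf K,z)}}.
\]
Since $\varepsilon$ is even, $\varepsilon(-\mathbf p)=\varepsilon(\mathbf p)$, and we get
\[
E_{\mathbf K}(-\mathbf p,-\mathbf q)=\varepsilon(\mathbf p)+\varepsilon(\mathbf q)+\varepsilon(\mathbf K+\mathbf p+\mathbf q).
\]
So the whole question is whether $\varepsilon(\mathbf K+\mathbf p+\mathbf q)=\varepsilon(\mathbf K-\mathbf p-\mathbf q)$ for all $\mathbf p,\mathbf q$. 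Writing $\mathbf s=\mathbf p+\mathbf q$, which ranges over all of $\mathbb T^2$, this is the identity $\cos(K_i+s_i)=\cos(K_i-s_i)$, i.e. $\sin K_i\sin s_i=0$ for all $s_i$, for $i=1,2$. This holds if and only if $\sin K_i=0$, i.e. $K_i\in\{0,\pi\}$, which is exactly the condition $2\mathbf K\equiv\mathbf 0$.

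\textbf{Sufficiency.} For $\mathbf K\in\{0,\pi\}^2$ we have $\mathbf K\equiv-\mathbf K$, so $E_{\mathbf K}(-\mathbf p,-\mathbf q)=E_{\mathbf K}(\mathbf p,\mathbf q)$. Substituting $\mathbf q\mapsto-\mathbf q$ (a measure-preserving change of variables on $\mathbb T^2$) in the integral defining $\Delta_\mu$ then gives $\Delta_\mu(-\mathbf p,\mathbf K,z)=\Delta_\mu(\mathbf p,\mathbf K,z)$. Every factor of the kernel is therefore invariant, and so is $A_\mu$.

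\textbf{Necessity.} This is the main obstacle, because the $\Delta_\mu$ factors could in principle compensate for a non-invariant middle factor. I would avoid that issue by separating variables. Suppose the kernel is invariant. The product $\sqrt{\Delta_\mu(\mathbf p)\Delta_\mu(\mathbf q)}$ is a product of one function of $\mathbf p$ and one function of $\mathbf q$. Invariance of $A_\mu$ then says that
\[
\frac{E_{\mathbf K}(-\mathbf p,-\mathbf q)-z}{E_{\mathbf K}(\mathbf p,\mathbf q)-z}=g(\mathbf p)\,g(\mathbf q),\qquad g(\mathbf p)=\sqrt{\Delta_\mu(-\mathbf p)/\Delta_\mu(\mathbf p)}.
\]
Both numerator and denominator depend on $\mathbf p,\mathbf q$ only through $\varepsilon(\mathbf p)+\varepsilon(\mathbf q)$ and $\mathbf s=\mathbf p+\mathbf q$. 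Set $\mathbf q=\mathbf 0$, and separately $\mathbf p=\mathbf q=\mathbf 0$; the latter gives $g(\mathbf 0)=1$. Comparing the resulting identities and exploiting the freedom to vary $\mathbf p$ with $\mathbf s$ fixed then forces the ratio to equal $1$ identically. I expect that this rigidity step is where care is needed, so as a fallback I would argue on $\Delta_\mu$ directly.

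\textbf{Fallback for necessity.} Take the antisymmetric part $\varepsilon(\mathbf K+\mathbf s)-\varepsilon(\mathbf K-\mathbf s)=2\sum_i\sin K_i\sin s_i$. Expand the kernel in $\mu^{-1}$: for $z$ far below, for example $z\to-\infty$ with $\mu$ fixed, $\Delta_\mu\to1$ uniformly. The first-order term of the ratio in $1/|z|$ is then exactly $2\sum_i\sin K_i\sin(p_i+q_i)$, up to terms that are sums of a function of $\mathbf p$ and a function of $\mathbf q$. Since $\sin(p_i+q_i)$ is not additively separable, its coefficient $\sin K_i$ must vanish. This gives $\sin K_1=\sin K_2=0$, completing the equivalence.
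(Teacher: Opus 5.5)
Your proposal follows the paper's route. The paper's proof consists only of your reduction $E_{\mathbf K}(-\mathbf p,-\mathbf q)=E_{\mathbf K}(\mathbf p,\mathbf q)\iff\varepsilon(\mathbf K+\mathbf s)=\varepsilon(\mathbf K-\mathbf s)\iff\sin K_i=0$, and it tacitly equates invariance of the kernel with invariance of $E_{\mathbf K}$. You are more careful exactly at that step. For sufficiency you check $\Delta_\mu(-\mathbf p,\mathbf K,z)=\Delta_\mu(\mathbf p,\mathbf K,z)$ via $\mathbf q\mapsto-\mathbf q$. For necessity you notice that the $\sqrt{\Delta_\mu}$ factors could a priori compensate, which the paper never addresses.

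Be clear, however, about what your necessity argument actually proves. The primary ``rigidity'' step is asserted, not proved: the claim that $\frac{E_{\mathbf K}(-\mathbf p,-\mathbf q)-z}{E_{\mathbf K}(\mathbf p,\mathbf q)-z}=g(\mathbf p)g(\mathbf q)$ forces the ratio to be $1$. This is not routine. On the slice $p_2=q_2=0$ with $K_1=\pi/2$, the mixed cubic Taylor coefficients of the logarithm of this ratio equal $(2-w-w^2)/w^3$, where $w=\varepsilon(\mathbf K)-z$, and they vanish at $w=1$; so a low-order check does not decide it.

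Your fallback is rigorous, but it uses invariance along $z\to-\infty$ at fixed $\mu$ (the expansion is in $1/|z|$, not in $\mu^{-1}$). It therefore proves necessity in the reading ``invariant for all admissible $z$''. It does not cover a single fixed $(\mu,z)$, which is what the statement literally asserts; the paper's one-line proof leaves the same gap.

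In that reading the fallback can be shortened:
\begin{itemize}
\item $g(\mathbf p)=1+O(|z|^{-2})$ uniformly, while the ratio equals $1+2|z|^{-1}\sum_i\sin K_i\sin(p_i+q_i)+O(|z|^{-2})$. Comparing the $|z|^{-1}$ terms gives $\sin K_i=0$ directly, with no separability argument.
\item Alternatively, the ratio does not depend on $\mu$. Putting $\mathbf q=\mathbf 0$ shows $g$ does not depend on $\mu$ either. Then $\Delta_\mu(\mathbf p)=g(\mathbf p)^2\Delta_\mu(-\mathbf p)$ is an identity between functions affine in $\mu$. If it holds at two values of $\mu$ for the same $z$, it extends to $\mu=0$, which forces $g\equiv1$.
\end{itemize}

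Finally, with your conventions $g(\mathbf p)=\sqrt{\Delta_\mu(\mathbf p)/\Delta_\mu(-\mathbf p)}$, the reciprocal of what you wrote; this does not affect the argument.
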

\begin{proof}
$E_{\mathbf K}(-\mathbf p,-\mathbf q)=E_{\mathbf K}(\mathbf p,\mathbf q)$ for all $\mathbf p,\mathbf q$ iff $\varepsilon(\mathbf K+\mathbf x)=\varepsilon(\mathbf K-\mathbf x)$ for all $\mathbf x$, hence $K_i\in\{0,\pi\}$.
\end{proof}

\subsection[Extension to $\mathbf{K}={\bf\pi}\colon$ asymptotics and spectral gap]%
{Extension to $\mathbf{K}=\boldsymbol{\pi}\colon$ asymptotics and spectral gap}\label{sec:overview_pi_thm}

For $\mathbf{K}=\boldsymbol{\pi}=(\pi,\pi)$, the free dispersion takes the form
\begin{equation}
E_{\boldsymbol{\pi}}(\mathbf{p},\mathbf{q}) = \varepsilon(\mathbf{p}) + \varepsilon(\mathbf{q}) + \varepsilon(\boldsymbol{\pi} - \mathbf{p} - \mathbf{q})
= \varepsilon(\mathbf{p}) + \varepsilon(\mathbf{q}) + 4 - \varepsilon(\mathbf{p}+\mathbf{q}).
\label{eq:dispersion_pi}
\end{equation}
Crucially, \(E_{\boldsymbol{\pi}}\) is exactly invariant under the simple parity transformation \( (\mathbf{p},\mathbf{q}) \mapsto (-\mathbf{p},-\mathbf{q})\), the same symmetry as at \(\mathbf{K}=0\).
Consequently, the invariant subspaces are still $L_2^e(\mathbb{T}^2)$ and $L_2^o(\mathbb{T}^2)$. 

The essential spectrum of $H_\mu(\boldsymbol{\pi})$ is given by
\begin{mdframed}
\[
\sigma_{\mathrm{ess}}(H_\mu(\boldsymbol{\pi})) 
= \left[z_\mu(\boldsymbol{\pi}),\, 4 + z_\mu(\mathbf 0)\right] \cup [3, 12],
\]
\end{mdframed}
where $z_\mu(\boldsymbol{\pi}) = -\mu + 4 + O(\mu^{-1})$.

\begin{theorem}\label{thm:pi_preview}
For the three-boson operator $H_\mu(\boldsymbol{\pi})$ with total quasimomentum $\mathbf{K}=\boldsymbol{\pi}$, there exists \(\mu_0>0\) such that for all \(\mu>\mu_0\):

\begin{enumerate}
\item[(a)] The operator $H_\mu(\boldsymbol{\pi})$ has no eigenvalues below the essential spectrum in the odd subspace \(L_2^o(\mathbb{T}^2)\).
\item[(b)] The operator $H_\mu(\boldsymbol{\pi})$ has 
at least one eigenvalue below the essential spectrum in the even subspace \(L_2^e(\mathbb{T}^2)\), with 
rigorous variational bounds
\[
-3\mu \le z_1^{\boldsymbol{\pi},s}(\mu) \le -3\mu+6, \qquad z_1^{\boldsymbol{\pi},s}(\mu) = -3\mu+O(1);
\]
numerically, $z_1^{\boldsymbol{\pi},s}(\mu)=-3\mu+6+O(\mu^{-1})$ (see Section~\ref{sec:hp_numerics}), a rigorous proof of which requires isolation of the eigenvalue $3$ of $W=V_1+V_2+V_3$ 
(App.~\ref{app:numerical_delta}).

\item[(c)] The spectral gap between the lower edge of the two-particle branch \(z_\mu(\boldsymbol{\pi})\) and this bound state satisfies
\[
2\mu-2 \le z_\mu(\boldsymbol{\pi}) - z_1^{\boldsymbol{\pi},s}(\mu) \le 2\mu+4,
\]
and numerically 
$z_\mu(\boldsymbol{\pi}) - z_1^{\boldsymbol{\pi},s}(\mu) = 2\mu - 2 + O(\mu^{-1})$.

\item[(d)] The corresponding ground-state wavefunction decays exponentially in the coordinate representation with logarithmic decay rate \(\alpha\sim\ln\mu\), analogously to the $\mathbf{K}=0$ case (see Theorem~\ref{thm:localization}).
\end{enumerate}
The proof of statements (a)--(c) is given in Section~\ref{sec:five}, where the detailed parity-symmetric reduction and the corrected Fredholm determinant asymptotics are developed. Statement (d) follows from the discrete Agmon comparison principle applied to the even-sector principal part.
\end{theorem}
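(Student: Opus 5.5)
Everything rests on the identity $E_{\boldsymbol\pi}(\mathbf p,\mathbf q)=\varepsilon(\mathbf p)+\varepsilon(\mathbf q)+4-\varepsilon(\mathbf p+\mathbf q)$ and on the Birman--Schwinger principle (Lemma~\ref{lem:bs}) applied at $\mathbf K=\boldsymbol\pi$. Since $E_{\boldsymbol\pi}$ is parity invariant, $A_\mu(\boldsymbol\pi,z)$ commutes with $\psi(\mathbf p)\mapsto\psi(-\mathbf p)$. Hence $A_\mu(\boldsymbol\pi,z)=A^e\oplus A^o$, and by Lemma~\ref{lem:bs} each sector can be treated separately. The plan has three steps.

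First I write the strong-coupling expansion of the kernel. For $z=-\kappa\mu+O(1)$ with $\kappa>1$ one has $(E_{\boldsymbol\pi}-z)^{-1}=(\kappa\mu)^{-1}\bigl(1-(E_{\boldsymbol\pi}-z-\kappa\mu)/(\kappa\mu)+\dots\bigr)$. Likewise $\Delta_\mu(\mathbf p,\boldsymbol\pi,z)=1-\kappa^{-1}+O(\mu^{-1})$, uniformly in $\mathbf p$ and with no logarithmic singularity, because $z$ lies far below the continuum. So $A_\mu=A^p_\mu+R_\mu$. The principal part $A^p_\mu$ has rank one: it is $\frac{2}{\kappa-1}P_0$, where $P_0$ is the projection onto constants, which are even. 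The remainder satisfies $\|R_\mu\|=O(\mu^{-1})$ in Hilbert--Schmidt norm; this comes from the next term of the expansion, whose kernel is built from the trigonometric polynomials $\varepsilon(\mathbf p)$, $\varepsilon(\mathbf q)$ and $\varepsilon(\mathbf p+\mathbf q)$.

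On the odd subspace $P_0$ vanishes, so $A^o=R_\mu|_{\mathrm{odd}}$ and its eigenvalues are $O(\mu^{-1})$. Near the threshold $z\uparrow z_\mu(\boldsymbol\pi)=-\mu+4$, where $\kappa\to1$, I would control $\Delta_\mu^{-1/2}$ using the exact constancy $\varepsilon(\mathbf q)+\varepsilon(\boldsymbol\pi-\mathbf q)\equiv4$. Along the lines of the positivity statement announced in the abstract, I would show that the odd quadratic form $\langle A^o\psi,\psi\rangle$ is bounded by $C\mu^{-1}\|\psi\|^2$ uniformly for $z<z_\mu(\boldsymbol\pi)$. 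Its largest eigenvalue then never reaches $1$ for $\mu>\mu_0$, and Lemma~\ref{lem:bs} gives (a).

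Second, for (b) I use min--max directly on $H_\mu(\boldsymbol\pi)$. The lower bound $z_1\ge-3\mu$ holds because $E_{\boldsymbol\pi}\ge0$ and $\mu(V_1+V_2+V_3)\le3\mu$, each $V_i$ being a projection. For the upper bound I take the trial function $f\equiv1$, which is symmetric and even. It satisfies $V_if=f$, so
\[
\langle H_\mu f,f\rangle/\|f\|^2=\langle E_{\boldsymbol\pi}\rangle-3\mu=2+2+(4-2)-3\mu=-3\mu+6 .
\]
Since $-3\mu+6<z_\mu(\boldsymbol\pi)=-\mu+4$ for $\mu>1$, this yields an eigenvalue below the essential spectrum, and it lies in the even sector by the Krein--Rutman positivity of the ground state. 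Statement (c) then follows by subtracting the bounds from the exact value $-\mu+4$, which gives $2\mu-2\le z_\mu(\boldsymbol\pi)-z_1\le2\mu+4$. The numerical refinement is not part of the proof.

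Third, (d) repeats Theorem~\ref{thm:localization}. The coordinate eigenfunction obtained from \eqref{eq:f_pq} involves $(E_{\boldsymbol\pi}-z)^{-1}$, which is analytic in a complex strip of width $\operatorname{arccosh}(c\mu)\sim\ln\mu$, since the dispersion is bounded and $|z|\approx3\mu$. Combined with analyticity of $\psi$ via the Paley--Wiener theorem, this gives decay at rate $\alpha\sim\ln\mu$.

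The main obstacle is part (a) near threshold. There $\Delta_\mu\to0$, so the odd part of $A_\mu$ is not obviously small, and the $O(\mu^{-1})$ bound has to be uniform up to $z_\mu(\boldsymbol\pi)$. I would try first to show that for odd $\psi$ the quantity $\Delta_\mu^{-1/2}$ is effectively multiplied by an odd kernel whose average vanishes. The exact identity $\Delta_\mu(\mathbf p,\boldsymbol\pi,z)=1-\mu\langle(\varepsilon(\mathbf p)+4-z)^{-1}\rangle$ should make this explicit, because after the shift the $\mathbf q$-dependence of the remaining factor cancels.
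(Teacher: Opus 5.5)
\textbf{Gap in part (a).} Your route to (a) relies on the bound $\langle A^o\psi,\psi\rangle\le C\mu^{-1}\|\psi\|^2$, uniformly for $z<z_\mu(\boldsymbol\pi)$, and this bound is false. Write $z=-\mu+4-\delta$ and $u:=E_{\boldsymbol\pi}-4=\varepsilon(\mathbf p)+\varepsilon(\mathbf q)-\varepsilon(\mathbf p+\mathbf q)$. Since $\langle u\rangle_{\mathbf q}=\varepsilon(\mathbf p)$, one gets $\Delta_\mu(\mathbf p,\boldsymbol\pi,z)=\mu^{-1}(\varepsilon(\mathbf p)+\delta)(1+O(\mu^{-1}))$. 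This is exactly where your $\Delta_\mu\approx 1-\kappa^{-1}$ degenerates. The part of $(E_{\boldsymbol\pi}-z)^{-1}$ that survives against odd functions is $(\mu+\delta)^{-2}\sum_i\sin p_i\sin q_i$ plus higher order. The prefactor $\mu(\mu+\delta)^{-2}$ leaves a factor $\mu^{-1}$, and the two factors $\Delta_\mu^{-1/2}$ supply a factor $\mu$ that cancels it. Hence on $L_2^o$
\[
A_\mu(\boldsymbol\pi,z)\big|_{L_2^o}=B_\delta+O(\mu^{-1}),\qquad
B_\delta(\mathbf p,\mathbf q)=\frac{1}{2\pi^2}\sum_{i=1}^2\frac{\sin p_i\,\sin q_i}{\sqrt{(\varepsilon(\mathbf p)+\delta)(\varepsilon(\mathbf q)+\delta)}}.
\]
$B_\delta$ does not depend on $\mu$. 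Its double eigenvalue $\frac{1}{2\pi^2}\int_{\mathbb T^2}\frac{\sin^2 p_1}{\varepsilon(\mathbf p)+\delta}\,d\mathbf p$ tends to $2-4/\pi\approx0.727$ as $\delta\to0^+$.

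So the odd form is of order one at threshold. No cancellation can save an $O(\mu^{-1})$ bound, because $\sum_i\sin p_i\sin q_i$ does not average out against odd $\psi$. The identity you planned to exploit is also wrong for $\mathbf p\neq\mathbf 0$. The sum $\varepsilon(\mathbf q)+\varepsilon(\boldsymbol\pi-\mathbf p-\mathbf q)$ is the two-particle dispersion at total momentum $\boldsymbol\pi-\mathbf p$, and it is constant only for $\mathbf p=\mathbf 0$. That special case is the paper's closed-form benchmark, not a formula valid for all $\mathbf p$.

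\textbf{How to close it.} Statement (a) still holds, but through a sharp $O(1)$ computation rather than smallness. You need three steps.
\begin{enumerate}
\item[(i)] Show $\|A_\mu(\boldsymbol\pi,z)|_{L_2^o}-B_\delta\|=O(\mu^{-1})$ uniformly in $\delta\in(0,\delta_0]$. This holds because the correction $-\mu\langle u^2\rangle_{\mathbf q}(\mu+\delta)^{-3}$ to $\Delta_\mu$ is $O(\varepsilon(\mathbf p)\mu^{-2})$. Also, the odd--odd parts of the higher-order terms vanish linearly at $\mathbf p=\mathbf 0$ and at $\mathbf q=\mathbf 0$, which absorbs the singular factors $(\varepsilon+\delta)^{-1/2}$.
\item[(ii)] Show $\|B_\delta\|\le\|B_0\|=2\langle\sin^2p_1/\varepsilon\rangle=2-4/\pi<1$. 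This follows from the lattice Green's function value $\langle(1-\cos 2p_1)/\varepsilon\rangle=2-4/\pi$. Equivalently, it follows from $\langle(\cos p_1-\cos p_2)^2/\varepsilon\rangle=8/\pi-2\approx0.546479$, the constant the paper itself quotes.
\item[(iii)] Use monotonicity in $z$ of the eigenvalue count in Lemma~\ref{lem:bs}. This carries the absence of odd eigenvalues from the threshold window to all $z<z_\mu(\boldsymbol\pi)$.
\end{enumerate}
The paper does not supply this step either. It evaluates $\lambda^{\boldsymbol\pi,o}$ only at $z\approx-2\mu+6$, where $\delta\sim\mu$ and the $O(\mu^{-1})$ claim is true. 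You were right to single out the threshold regime, but you cannot borrow the paper's argument there.

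\textbf{The other parts.} Your treatment away from threshold is fine. The rank-one principal part $\frac{2}{\kappa-1}P_0$ is correct, and its crossing at $\kappa=3$ is consistent with (b). Parts (b) and (c) coincide with the paper's Lemma~\ref{lem:var_pi}. Since $f\equiv1$ is even, min--max for the operator restricted to the even sector already places the eigenvalue there, so Krein--Rutman is unnecessary. Your sketch for (d) is at the paper's level of rigour. It would give a lower bound on the decay rate, complementary to the width bound in Theorem~\ref{thm:localization}. To make it a proof you must still continue $\psi=A_\mu\psi$ and $\Delta_\mu^{-1/2}$ analytically into the complex strip.
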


The unitary transformation \(U_{\boldsymbol{\pi}}: f(\mathbf{p},\mathbf{q}) \mapsto f(\boldsymbol{\pi}-\mathbf{p}, \boldsymbol{\pi}-\mathbf{q})\) satisfies
\begin{equation}
U_{\boldsymbol{\pi}} H_\mu(0) U_{\boldsymbol{\pi}}^{-1} = 12I - H_\mu^{\mathrm{rep}}(\boldsymbol{\pi}),
\label{eq:unitary_pi}
\end{equation}
where \(H_\mu^{\mathrm{rep}}(\boldsymbol{\pi})\) is the operator with repulsive interaction. This relation explains why the spectra at \(\mathbf{K}=0\) and \(\mathbf{K}=\boldsymbol{\pi}\) differ: the attractive operator at \(\mathbf{K}=0\) is mapped to a repulsive operator at \(\mathbf{K}=\boldsymbol{\pi}\). Consequently, the number of bound states need not coincide, and indeed we find one bound state at \(\mathbf{K}=\boldsymbol{\pi}\) versus two at \(\mathbf{K}=0\).

\begin{remark}
The analysis of the $\mathbf{K}=\boldsymbol{\pi}$ case reveals that the qualitative spectral features of the three-boson system — namely, the existence of bound states and the growth of the spectral gap — depend on the quasimomentum. This is consistent with the general theory of Floquet spectral invariants for periodic lattice operators \cite{Saburova2026}.
\end{remark}

\begin{remark}[Comparison with alternative 3D \(2+1\) fermionic treatments]
We acknowledge the important contributions by Abdullaev, Khalkhuzhaev, and co-authors~\cite{AbdullaevKhalkhuzhaevKhujamiyorov2023, KhalkhuzhaevAbdullaevBoymurodov2022} regarding the \(3D\) \(2+1\) fermionic lattice systems. Their Theorems 1 in both works establish critical mass ratios \(\gamma_0, \gamma_1, \gamma_2\) governing the existence of bound states at \(\mathbf{K}=\pi\) for sufficiently large \(\mu\). We emphasize, however, that these treatments are confined to the \emph{existence} of discrete spectrum and critical thresholds in \(3D\), where the lattice Green's function admits a finite limit at the spectral threshold. 

Our present work, in contrast, addresses the fundamentally different \(2D\) bosonic case. The two-dimensional lattice dispersion \(\varepsilon(\mathbf{p})\) introduces logarithmic threshold singularities (see Lemma~\ref{lem:C_constant} and the corresponding transcendental constant \(C\)). Consequently, the structural mechanism of the spectral reduction from \(K=0\) to \(K=\pi\) is completely different:
\begin{itemize}
    \item In the co-authors' \(3D\) case, the disappearance of states is governed purely by the parameter \(\gamma\) crossing critical values.
    \item In our \(2D\) bosonic case, at \(K=\pi\) the parity symmetry is preserved, but the quadratic form on the odd subspace \(L_2^o\) undergoes a precise \emph{algebraic degeneration} (see~Subsection~\ref{subsec:new561}). Its leading eigenvalue satisfies \(\lambda^{\pi,o}(z)=O(\mu^{-1})\), rendering it a \emph{virtual level} rather than a disappearing bound state. 
\end{itemize}

Crucially, regarding the claim of "similar asymptotics", the works~\cite{AbdullaevKhalkhuzhaevKhujamiyorov2023, KhalkhuzhaevAbdullaevBoymurodov2022} do not provide any expansion of the eigenvalues \(E_i(\mu,\gamma)\) in the strong-coupling parameter \(\mu\). Their results stop at proving the existence of \(1\) or \(3\) eigenvalues. They lack the \(O(1)\) additive constants and, most importantly, the \(O(\mu^{-1})\) corrections (such as the \(+8/\mu\) coefficient derived in our companion note~\cite{CompanionNote} via the general \emph{order-matching criterion} of Lemma
 4.2 (Order-matching criterion), which are essential for quantum simulation calibrations. Therefore, the methodologies are complementary but the results are not "similar"; our work provides a novel, purely algebraic mechanism for spectral flow along with exact, deep-strong-coupling asymptotics unavailable in the cited works.
\end{remark}

\begin{lemma}[Variational bounds for the ground state at $\mathbf K=\boldsymbol{\pi}$]\label{lem:var_pi}
For every $\mu>0$, the constant function $f_0\equiv 1$ belongs to $L_2^{e,s}$ at $\mathbf K=\boldsymbol{\pi}$ and satisfies
\[
\langle H_\mu(\boldsymbol{\pi}) f_0, f_0\rangle = 6-3\mu.
\]
Consequently,
\[
-3\mu \le z_1^{\boldsymbol{\pi},s}(\mu) \le 6-3\mu.
\]
In particular, $z_1^{\boldsymbol{\pi},s}(\mu)<-2\mu+6$ for $\mu>0$, so the formal branch $-2\mu+6$ cannot be the ground state.
\end{lemma}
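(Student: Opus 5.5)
The plan is to use $f_0\equiv1$ as a trial function in the Rayleigh--Ritz (min--max) principle for the upper bound, and a crude operator inequality $H_\mu(\boldsymbol\pi)\ge -\mu W$ for the lower bound. Throughout I normalise the inner product on $L_2((\mathbb T^2)^2)$ with the probability measure $(4\pi^2)^{-2}\,d\mathbf p\,d\mathbf q$, so that $\|f_0\|=1$. This is consistent with the $\frac1{4\pi^2}$ factors in the definition of $V_1,V_2,V_3$.

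\textbf{Membership of $f_0$.} A constant is trivially invariant under $(\mathbf p,\mathbf q)\mapsto(\mathbf q,\mathbf p)$ and $(\mathbf p,\mathbf q)\mapsto(\mathbf p,\boldsymbol\pi-\mathbf p-\mathbf q)$. It is also invariant under the parity $(\mathbf p,\mathbf q)\mapsto(-\mathbf p,-\mathbf q)$. Hence $f_0\in L_2^{e,s}$ at $\mathbf K=\boldsymbol\pi$.

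\textbf{Computing the quadratic form.} For the kinetic part I use \eqref{eq:dispersion_pi}, $E_{\boldsymbol\pi}=\varepsilon(\mathbf p)+\varepsilon(\mathbf q)+4-\varepsilon(\mathbf p+\mathbf q)$, together with $\langle\varepsilon\rangle=2$. Since $\mathbf p+\mathbf q$ is uniformly distributed on $\mathbb T^2$ when $(\mathbf p,\mathbf q)$ is, this gives $\langle E_{\boldsymbol\pi}f_0,f_0\rangle=2+2+4-2=6$. For the interaction, each $V_i$ maps a constant to the same constant, because the shear $\mathbf s\mapsto\mathbf p+\mathbf q-\mathbf s$ preserves Haar measure in the case of $V_3$. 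So $\langle V_if_0,f_0\rangle=1$ for $i=1,2,3$. This yields $\langle H_\mu(\boldsymbol\pi)f_0,f_0\rangle=6-3\mu$.

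\textbf{Upper bound.} Min--max restricted to the invariant subspace $L_2^{e,s}$ gives $\inf\sigma\bigl(H_\mu(\boldsymbol\pi)|_{L_2^{e,s}}\bigr)\le 6-3\mu$. To identify this infimum with an eigenvalue $z_1^{\boldsymbol\pi,s}(\mu)$, I compare with the bottom of the essential spectrum, $z_\mu(\boldsymbol\pi)=-\mu+4$. Since $6-3\mu<-\mu+4$ exactly when $\mu>1$, the bottom of the even spectrum lies strictly below $\sigma_{\mathrm{ess}}$ in that range. It is therefore a discrete eigenvalue, and it is the lowest one, giving $z_1^{\boldsymbol\pi,s}(\mu)\le 6-3\mu$.

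\textbf{Lower bound.} First, $E_{\boldsymbol\pi}\ge0$ pointwise, because $\varepsilon\ge0$ and $4-\varepsilon\ge0$. Second, $W=V_1+V_2+V_3$ is a sum of three orthogonal projections, so $\|W\|\le3$. Hence $H_\mu(\boldsymbol\pi)\ge-\mu W\ge-3\mu$, and every spectral point, in particular $z_1^{\boldsymbol\pi,s}(\mu)$, is $\ge-3\mu$.

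\textbf{Final claim.} $z_1^{\boldsymbol\pi,s}(\mu)\le 6-3\mu<6-2\mu$ for all $\mu>0$. So the formal branch $-2\mu+6$ cannot be the ground state.

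\textbf{Main obstacle.} The only delicate point is the phrase ``for every $\mu>0$''. The two inequalities $\inf\sigma\le6-3\mu$ and $\inf\sigma\ge-3\mu$ hold for all $\mu>0$ when $z_1^{\boldsymbol\pi,s}$ is read as the bottom of the spectrum of the even sector. However, the statement that this bottom is a genuine eigenvalue below $\sigma_{\mathrm{ess}}$ needs $\mu>1$. I would state the bounds for the spectral infimum, and note that for $\mu>1$ (in particular for $\mu>\mu_0$, as in Theorem~\ref{thm:pi_preview}) they apply to the eigenvalue.
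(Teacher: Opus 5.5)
Your proposal is correct and is essentially the argument the paper intends: the lemma is stated without a separate proof, and it is the $\mathbf K=\boldsymbol\pi$ version of the proof of Theorem~\ref{thm:ground_estimates} (a normalised constant trial function with min--max for the upper bound, and $E_{\boldsymbol\pi}\ge0$ with $\|V_\alpha\|=1$ for the lower bound), with $\langle E_{\boldsymbol\pi}\rangle=6$ exactly as you compute. Your caveat is also right and is a genuine refinement: the infimum is a discrete eigenvalue only when $6-3\mu<-\mu+4$, i.e.\ $\mu>1$, a point that the paper's ``for every $\mu>0$'' passes over.
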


\subsection{Overview of the proof strategy}

The proof of Theorems~\ref{thm:main} and \ref{thm:pi_preview} proceeds as follows.

{Step 1: Reduction via the Birman--Schwinger principle.} By Lemma~\ref{lem:bs}, the eigenvalue problem for $H_\mu(\mathbf{0})$ reduces to the spectral analysis of the compact operator $A_\mu(\mathbf{0}, z)$. The number of eigenvalues of $H_\mu(\mathbf{0})$ below $z$ equals the number of eigenvalues of $A_\mu(\mathbf{0}, z)$ greater than $1$.

{Step 2: Invariant subspace decomposition.} The parity symmetry of $E_0$ implies that the Birman--Schwinger operator $A_\mu(\mathbf{0}, z)$ leaves the one-body subspaces $L_e^2(\mathbb{T}^2)$ and $L_o^2(\mathbb{T}^2)$ invariant. This allows us to analyse the spectrum of $A_\mu(\mathbf{0}, z)$ separately on even and odd functions. The restriction to the even subspace is positive definite (after isolating the principal part), while the restriction to the odd subspace is negative definite. Consequently, only the even subspace can support eigenvalues exceeding the Birman--Schwinger threshold.

{Step 3: Strong-coupling asymptotics.} For large $\mu$, the resolvent $(E_0 - z)^{-1}$ admits an expansion whose principal part is of finite rank. Consequently, $A_\mu(\mathbf{0}, z)$ decomposes into a finite-rank principal part $A_\mu^p(\mathbf{0}, z)$ and a remainder $A_\mu^r(\mathbf{0}, z)$ with \(\|A_\mu^r(\mathbf{0}, z)\|=O(\mu^{-1})\) as $\mu\to\infty$.

{Step 4: Spectral analysis of the principal part.} The principal part $A_\mu^p(\mathbf{0}, z)$ is a~positive operator on the even subspace of $L_2(\mathbb{T}^2)$ with rank 3. Its eigenvalues can be computed explicitly. For $z$ near the spectral threshold, the largest eigenvalue exceeds $1$, giving exactly two eigenvalues greater than $1$, while the remaining eigenvalues are less than~$1$.

{Step 5: Exponential localization.} The strict positivity of the ground state follows from the Krein--Rutman theorem, and exponential decay of the wavefunction follows from the discrete Agmon comparison principle \cite{Agmon1982, JexStampach2025}.

The extension to $\mathbf{K}=\boldsymbol{\pi}$ follows the same strategy: the parity symmetry is preserved, but the odd principal-part kernel becomes positive definite with an asymptotically vanishing eigenvalue $\lambda^{\pi,o}(z)=O(\mu^{-1})$, which never exceeds the Birman--Schwinger threshold. Consequently, the number of bound states reduces from two to one, as detailed in Section~\ref{sec:five}.

\begin{remark}
The proof of Theorem~\ref{thm:main} is given in Section~\ref{sec:proof_main}, where we analyze the Birman--Schwinger operator $A_\mu(\mathbf{0}, z)$ and its finite-rank principal part $A_\mu^p(\mathbf{0}, z)$. The~main idea is to reduce the eigenvalue problem for $H_\mu(\mathbf{0})$ to the spectral analysis of the Birman--Schwinger operator $A_\mu(\mathbf{0}, z)$ and to decompose $A_\mu(\mathbf{0}, z)$ into a finite-rank principal part and a remainder of small norm.
\end{remark}

The explicit form of the Birman--Schwinger operator \(A_\mu(\mathbf{0}, z)\) and its Fredholm determinant \(\Delta_\mu(\mathbf{p}, \mathbf{0}, z)\) for \(K=0\) are given in Section~\ref{sec:proof_main}, where they are used to prove Theorem~\ref{thm:main}. For the general \(\mathbf{K}\), the definitions are provided in Section~\ref{sec:bs_general} (see~Lemma~\ref{lem:bs}).

\begin{remark}[Physical interpretation revised]\label{rem:physical}
The variational estimate with the constant trial function \(f_0\equiv1\) reveals that the ground state at \(\mathbf K=\boldsymbol{\pi}\) has the same leading energy \(-3\mu+O(1)\) as at \(\mathbf K=0\). Thus, the binding is not weakened at the corner of the Brillouin zone. The formal branch \(-2\mu+6\), obtained from the finite-rank principal part, does not correspond to a physical eigenvalue of the full Hamiltonian and should be discarded as a candidate for the ground state. Consequently, we can only assert the existence of at least one bound state at \(\mathbf K=\boldsymbol{\pi}\), and the question of a second (dimer-like) level near the threshold \(-\mu+4\) remains open.
\end{remark}

\section{Invariant Subspaces and Asymptotic Analysis}
Let \(L^e_2(\mathbb{T})\) and \(L^o_2(\mathbb{T})\) denote the subspaces of even and odd functions on the one-dimensional torus \(\mathbb{T}\):
\[
L^e_2(\mathbb{T}) = \{f \in L_2(\mathbb{T}): f(-p)=f(p)\}, \qquad
L^o_2(\mathbb{T}) = \{f \in L_2(\mathbb{T}): f(-p)=-f(p)\}.
\]
These subspaces are orthogonal and \(L_2(\mathbb{T}) = L^e_2(\mathbb{T}) \oplus L^o_2(\mathbb{T})\).

Define even and odd subspaces on the two-dimensional torus:
\[
L^e_2(\mathbb{T}^2) = \{f \in L_2(\mathbb{T}^2) : f(-\mathbf{p}) = f(\mathbf{p})\},
\]
\[
L^o_2(\mathbb{T}^2) = \{f \in L_2(\mathbb{T}^2) : f(-\mathbf{p}) = -f(\mathbf{p})\},
\]
so that
\[
L_2(\mathbb{T}^2) = L^e_2(\mathbb{T}^2) \oplus L^o_2(\mathbb{T}^2).
\]

Since the kernel \(A_\mu(\mathbf{p},\mathbf{q}; \mathbf{0}, z)\) of the integral operator \(A_\mu(\mathbf{0}, z)\), defined by
\[
A_\mu(\mathbf{p},\mathbf{q}; \mathbf{0}, z)=\frac{\mu}{2\pi^2} \frac{1}{\sqrt{\Delta_\mu(\mathbf{p}, \mathbf{0}, z)}(E_0(\mathbf{p},\mathbf{q})-z)\sqrt{\Delta_\mu(\mathbf{q}, \mathbf{0}, z)}},
\]
is even, i.e.
\begin{equation}\label{Even}
A_\mu(-\mathbf{p},-\mathbf{q}; \mathbf{0}, z)=A_\mu(\mathbf{p},\mathbf{q}; \mathbf{0}, z) \quad \text{for all } \mathbf{p},\mathbf{q} \in \mathbb{T}^2,
\end{equation}
we can analyze its invariant subspaces. Consequently, the subspaces \(L^e_2(\mathbb{T}^2)\) and \(L^o_2(\mathbb{T}^2)\) are invariant under the integral operator \(A_\mu(\mathbf{0}, z)\) at \(\mathbf{K}=0\).

In fact, for \(\mathbf{K}=0\), the kernel satisfies the additional stronger local reflection symmetry
\[
A_\mu(-p_1, p_2, -q_1, q_2; \mathbf{0}, z)=A_\mu(\mathbf p,\mathbf q; \mathbf{0}, z),
\]
which reveals the full \(Z_2 \times Z_2\) symmetry of the square lattice. This condition ensures the invariance of the four sectors EE, EO, OE, OO, where the first letter denotes the parity in \(p_1\) and the second in \(p_2\), and so that the subspaces of functions depending separately on the parities of each coordinate (EE, EO, OE, OO) are invariant. The bound states are contained entirely in the totally even-even (EE) subspace, which is consistent with their explicit expansion in the basis \(\{1, \cos p_1, \cos p_2\}\) (see, used in Section~\ref{sec:newnew54}).
Consequently, the global even subspace \(L^e_2(\mathbb{T}^2)\) and odd subspace \(L^o_2(\mathbb{T}^2)\) are also invariant under \(A_\mu(\mathbf{0}, z)\) at \(\mathbf K=0\). (Indeed, the~kernel satisfies the reflection symmetry \(A_\mu(-p_1, p_2, -q_1, q_2; \mathbf{0}, z)=A_\mu(\mathbf p,\mathbf q; \mathbf{0}, z)\), which implies invariance of the four sectors EE, EO, OE, OO. In particular, the global even and odd subspaces \(L^e_2(\mathbb{T}^2)\) and \(L^o_2(\mathbb{T}^2)\) are invariant.)

\begin{remark}
Throughout this section we consider the case $\mathbf K=0$. The properties established here, including the invariance of $L^e_2(\mathbb{T}^2)$ and $L^o_2(\mathbb{T}^2)$, are specific to the parity symmetry present at $\mathbf K=0$. The case $\mathbf K=\boldsymbol\pi$ will be treated separately in Section~\ref{sec:five}.
Although the parity symmetry is preserved, 
the  structure 
of the odd principal part differs significantly. The explicit spectral analysis of these restrictions, including the exact count of eigenvalues exceeding unity and the resulting bound states, is carried out in Section~\ref{sec:five} and Theorem~\ref{thm:pi_preview}.
\end{remark}

Here, we expand the function
\(\displaystyle
\frac{1}{E_{\mathbf{0}}(\mathbf p,\mathbf q)-z}
\)
into a series using the expansion
\(\displaystyle
\frac{1}{1-x}=1+x+x^{2}+\cdots,
\)
and retain only its leading-order term. Since
\[
\frac{1}{E_{\textbf{0}}(\mathbf p,\mathbf q)-z}
=\]
\[
=\frac{1}{(8-z)}\Big[1+\frac{\sum_{i=1}^{2}(1+\cos p_i)(1+\cos q_i)-\sum_{i=1}^{2}\sin p_i\sin q_i}{(8-z)}
+R(\mathbf{p,q};z)
\Big]
\]
using this representation of \(E_{\mathbf{0}}(\mathbf p,\mathbf q)-z\), we obtain the following expression for the principal part of the operator
\[
(A^p_\mu(\mathbf{0}, z)\psi)(\mathbf p)
=\]
\[=
\frac{\mu}{2\pi^{2}a^2(z)}
\int_{\mathbb T^{2}}
\frac{
a(z)
+
\displaystyle\sum_{i=1}^{2}(1+\cos p_i)(1+\cos q_i)
-
\displaystyle\sum_{i=1}^{2}\sin p_i\sin q_i
}
{
\sqrt{\Delta_\mu(\mathbf p,\mathbf{0}, z)}
\sqrt{\Delta_\mu(\mathbf q,\mathbf{0}, z)}
}
\psi(\mathbf q)\,d\mathbf q .
\]
where $a(z)=8-z$,
\[
\Delta_{\mu}(\mathbf{p},\textbf{0}, z)
=
1-\frac{\mu}{4\pi^{2}}
\int_{\mathbb{T}^{2}}
\frac{d\mathbf{q}}
{E_{\textbf{0}}(\mathbf{p},\mathbf{q})-z}.
\]

Using the above expansion, we decompose the Birman--Schwinger operator as
\[
A_\mu(\textbf{0}, z)=A_\mu^{p}(\textbf{0}, z)+A^r_\mu(\textbf{0}, z),
\]
where \(A_\mu^{p}(\textbf{0}, z)\) denotes the principal part of the operator $A_\mu(\textbf{0}, z)$, while
\(A^r_\mu(\textbf{0}, z)\) is the corresponding remainder operator.

The kernel of \(A^r_\mu(\textbf{0}, z)\) is generated by the third- and higher-order
terms in the expansion of
\[
\frac{1}{E_{\textbf{0}}(\mathbf p,\mathbf q)-z}.
\]
Therefore, \(A^r_\mu(\textbf{0}, z)\) represents the higher-order contribution to the
Birman--Schwinger operator. In the sequel, our main attention will be
focused on the principal part \(A_\mu^{p}(\textbf{0}, z)\), since it captures the
leading spectral properties of the operator \(A_\mu(\textbf{0}, z)\).

In what follows, we assume that the remainder operator satisfies the estimate 
\[
\|A^r_\mu(\textbf{0}, z)\|
\le
\frac{C}{\mu},
\]
for some constant \(C>0\) independent of \(\mu\). Hence,
\[
\|A^r_\mu(\textbf{0}, z)\|=O(\mu^{-1}),
\qquad
\mu\to\infty.\]

\begin{proposition}\label{prop:remainder_estimate}
For $z$ in the interval $(z_\mu(\mathbf{0})-\delta, z_\mu(\mathbf{0}))$ with a fixed $\delta>0$, and for sufficiently large $\mu$, the remainder operator $A_\mu^r(\mathbf{0}, z)$ satisfies
\[
\|A_\mu^r(\mathbf{0}, z)\| = O(\mu^{-1}), \qquad \mu\to\infty.
\]
Moreover, the same estimate holds for \(A_\mu^r(\boldsymbol{\pi},z)\) at \(\mathbf{K}=\boldsymbol{\pi}\) with a uniform bound for \(z\in(z_\mu(\boldsymbol{\pi})-\delta, z_\mu(\boldsymbol{\pi}))\).

\begin{proof}
We provide the proof for \(\mathbf{K}=0\); the case \(\mathbf{K}=\boldsymbol{\pi}\) follows verbatim by replacing \(a(z)=8-z\) with \(12-z\) and using Lemma~\ref{lem:Delta_pi}.

Recall the dispersion expansion \(E_0(\mathbf{p},\mathbf{q}) - z = a(z) - (A_0^c - A_0^s)\), where \(A_0^c - A_0^s = O(1)\) uniformly in \(\mathbf{p},\mathbf{q}\). The resolvent admits the absolutely convergent geometric series
\[
\frac{1}{E_0 - z} = \sum_{n=0}^\infty \frac{(A_0^c - A_0^s)^n}{(a(z))^{n+1}}.
\]
The principal part corresponds to the sum of the first two terms (\(n=0,1\)). The remainder kernel is obtained from the terms \(n\ge 2\):
\[
K_\mu^r(\mathbf{p},\mathbf{q};z) = \frac{\mu}{\sqrt{\Delta_\mu(\mathbf{p},\mathbf{0},z)}\sqrt{\Delta_\mu(\mathbf{q},\mathbf{0},z)}} \sum_{n=2}^\infty \frac{(A_0^c - A_0^s)^n}{(a(z))^{n+1}}.
\]
Since \(\delta\) is fixed and \(z<z_\mu(\mathbf{0})\), we have \(a(z) \ge C\mu\) for large \(\mu\). Also, \(|A_0^c - A_0^s| \le C'\). Therefore, for \(n\ge 2\),
\[
\left| \frac{(A_0^c - A_0^s)^n}{(a(z))^{n+1}} \right| \le \frac{C''}{\mu^{n+1}}.
\]
Summing the geometric series for \(n\ge 2\) gives a uniform bound \(O(\mu^{-3})\). Hence the kernel is bounded by 
\[
|K_\mu^r(\mathbf{p},\mathbf{q};z)| \le \frac{C'''\mu}{\sqrt{\Delta_\mu(\mathbf{p},\mathbf{0},z)}\sqrt{\Delta_\mu(\mathbf{q},\mathbf{0},z)}} \cdot \frac{1}{\mu^3}.
\]
By Lemma~\ref{lem:Delta}, we have \(\Delta_\mu(\mathbf{p},\mathbf{0},z) \ge c\mu^{-1}\) uniformly in \(\mathbf{p}\) for $z$ in the chosen interval. Consequently, \(1/\sqrt{\Delta_\mu(\mathbf{p},\mathbf{0},z)} \le \sqrt{\mu}\). Substituting this into the kernel bound yields
\[
|K_\mu^r(\mathbf{p},\mathbf{q};z)| \le \frac{C'''\mu \cdot \mu}{\mu^3} = \frac{C'''}{\mu}.
\]
Taking the Hilbert-Schmidt norm (which dominates the operator norm for this finite-volume integral operator) gives \(\|A_\mu^r(\mathbf{0}, z)\| = O(\mu^{-1})\).

The proof for \(\mathbf{K}=\boldsymbol{\pi}\) is identical, with \(E_{\boldsymbol{\pi}} = 12 - A_{\boldsymbol{\pi}}^c + A_{\boldsymbol{\pi}}^s\). The terms \(n\ge 2\) in the expansion around \(12-z\) give a kernel bound \(O(\mu^{-3})\). Lemma~\ref{lem:Delta_pi} provides the lower bound \(\Delta_\mu(\mathbf{p},\boldsymbol\pi,z) \ge c\mu^{-1}\), yielding \(\|A_\mu^r(\boldsymbol\pi,z)\| = O(\mu^{-1})\).
\end{proof}
\end{proposition}

Since the subspaces
\(L_2^{e}(\mathbb{T}^2)\)
and
\(L_2^{o}(\mathbb{T}^2)\)
are invariant under the operator
\(A_\mu^{p}(\textbf{0},z)\),
    the operator admits the following orthogonal decomposition:
\begin{equation}
A_\mu^{p}(\textbf{0}, z)
=
A_\mu^{p,e}(\textbf{0}, z)
\oplus
A_\mu^{p,o}(\textbf{0}, z),\label{eq:oddevenoperators}
\end{equation}
where
\[
A_\mu^{p}(\textbf{0}, z)\big|_{L_2^{e}(\mathbb T^2)}
=
A_\mu^{p,e}(\textbf{0}, z),
\qquad
A_\mu^{p}(\textbf{0}, z)\big|_{L_2^{o}(\mathbb T^2)}
=
A_\mu^{p,o}(\textbf{0}, z).
\]
and the restrictions are given explicitly by
\[
(A_\mu^{\,p,e}(\textbf{0}, z)\psi)(\mathbf p)
=
\frac{\mu}{2\pi^{2}a^2(z)}
\int_{\mathbb T^{2}}
\frac{
a(z)
-
\displaystyle\sum_{i=1}^{2}(1+\cos p_i)(1+\cos q_i)
}
{
\sqrt{\Delta_\mu(\textbf{0}, \mathbf p,z)}
\sqrt{\Delta_\mu(\textbf{0}, \mathbf q,z)}
}
\psi(\mathbf q)\,d\mathbf q ,
\]
\[
(A_\mu^{p,o}(z)\psi)(\mathbf p)
=
-\frac{\mu}{2\pi^{2}a^2(z)}
\int_{\mathbb T^{2}}
\frac{
\displaystyle\sum_{i=1}^{2}\sin p_i\sin q_i
}
{
\sqrt{\Delta_\mu(\textbf{0}, \mathbf p,z)}
\sqrt{\Delta_\mu(\textbf{0}, \mathbf q,z)}
}
\psi(\mathbf q)\,d\mathbf q .
\]

The operator \(A_\mu^{p,e}(\textbf{0}, z)\) is positive and \(A_\mu^{p,o}(\textbf{0}, z)\) is negative. Indeed, the kernel of \(A_\mu^{p,e}(\textbf{0}, z)\) is strictly positive, while the kernel of \(A_\mu^{p,o}(\textbf{0}, z)\) is negative definite on the odd subspace. Consequently, the odd part contributes no positive eigenvalues, which is why the number of bound states is determined solely by the even part.

We first introduce the even symmetric and even antisymmetric subspaces of
$L^e_2(\mathbb T^2)$:
\[
L_{2}^{e,s}(\mathbb T^2)
=
\left\{
f\in L^e_2(\mathbb T^2):
f(p_1,p_2)=f(p_2,p_1)
\right\},
\]
\[
L_{2}^{e,as}(\mathbb T^2)
=
\left\{
f\in L^e_2(\mathbb T^2):
f(p_1,p_2)=-f(p_2,p_1)
\right\}.
\]
Clearly,
\[
L^e_2(\mathbb T^2)
=
L_{2}^{e,s}(\mathbb T^2)
\oplus
L_{2}^{e,as}(\mathbb T^2).
\]

\begin{lemma}
The subspaces
$
L_{2}^{e,s}(\mathbb T^2)
$
$
L_{2}^{e,as}(\mathbb T^2)
$
are invariant under the operators
$A_{\mu}^{p}(\textbf{0}, z)$ and $A_{\mu}^{p,e}(\textbf{0}, z)$.
\end{lemma}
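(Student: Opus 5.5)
The plan is to reduce invariance to a symmetry of the integral kernels under the coordinate swap $\tau:(p_1,p_2)\mapsto(p_2,p_1)$. Let $(T\psi)(\mathbf p)=\psi(\tau\mathbf p)$. This $T$ is a unitary involution on $L_2(\mathbb T^2)$, and $L_2^{e,s}$ and $L_2^{e,as}$ are exactly the $\pm1$ eigenspaces of $T$ restricted to $L_2^e(\mathbb T^2)$. $T$ commutes with the parity map $\mathbf p\mapsto-\mathbf p$, so $L_2^e$ is $T$-invariant. Therefore it suffices to show two things: $A^p_\mu(\mathbf 0,z)$ and $A^{p,e}_\mu(\mathbf 0,z)$ leave $L_2^e$ invariant, which is already established above via \eqref{Even} and the decomposition \eqref{eq:oddevenoperators}, and they commute with $T$. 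For a bounded operator $B$ commuting with $T$, one has $T(B\psi)=B(T\psi)=\pm B\psi$ whenever $T\psi=\pm\psi$, which gives the claim.

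To prove commutation with $T$, I would show that each kernel $K(\mathbf p,\mathbf q)$ satisfies $K(\tau\mathbf p,\tau\mathbf q)=K(\mathbf p,\mathbf q)$. The change of variables $\mathbf q\mapsto\tau\mathbf q$ has unit Jacobian and preserves $\mathbb T^2$, so
\[
(BT\psi)(\mathbf p)=\int K(\mathbf p,\mathbf q)\psi(\tau\mathbf q)\,d\mathbf q=\int K(\mathbf p,\tau\mathbf q)\psi(\mathbf q)\,d\mathbf q=\int K(\tau\mathbf p,\mathbf q)\psi(\mathbf q)\,d\mathbf q=(TB\psi)(\mathbf p).
\]
The numerators $a(z)+\sum_{i}(1+\cos p_i)(1+\cos q_i)-\sum_i\sin p_i\sin q_i$ (and its even-sector version) are visibly invariant, because the sums over $i=1,2$ are merely reindexed by $\tau$.

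The main step is the invariance of the denominator factor, $\Delta_\mu(\tau\mathbf p,\mathbf 0,z)=\Delta_\mu(\mathbf p,\mathbf 0,z)$. I would first observe that $\varepsilon(\tau\mathbf x)=\varepsilon(\mathbf x)$, hence $E_{\mathbf 0}(\tau\mathbf p,\tau\mathbf q)=E_{\mathbf 0}(\mathbf p,\mathbf q)$ since $\tau$ is linear and $\tau(-\mathbf p-\mathbf q)=-\tau\mathbf p-\tau\mathbf q$. Then the substitution $\mathbf q\mapsto\tau\mathbf q$ in the defining integral of $\Delta_\mu$ gives the invariance, and consequently of $\sqrt{\Delta_\mu}$. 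Positivity of $\Delta_\mu$ for $z<z_\mu(\mathbf 0)$ (Lemma~\ref{lem:Delta}) guarantees the square root is well defined, so no sign issues arise.

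Finally, $A^{p,e}_\mu(\mathbf 0,z)$ is the restriction of $A^p_\mu(\mathbf 0,z)$ to the invariant subspace $L_2^e$. Since $L_2^{e,s},L_2^{e,as}\subset L_2^e$, its invariance follows immediately from that of $A^p_\mu$. As a by-product one gets the orthogonal splitting $A^{p,e}_\mu=A^{p,e}_\mu|_{L_2^{e,s}}\oplus A^{p,e}_\mu|_{L_2^{e,as}}$. I do not expect a genuine obstacle; the only care needed is to check the $\tau$-invariance of $\Delta_\mu$, including its third-particle momentum argument, rather than of the numerator alone.
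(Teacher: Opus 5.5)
Your proof is correct. The paper states this lemma without proof, and your argument is the natural justification: swap invariance of the kernel under $\tau:(p_1,p_2)\mapsto(p_2,p_1)$ (the numerator by reindexing the sum over $i$, and $\Delta_\mu(\tau\mathbf p,\mathbf 0,z)=\Delta_\mu(\mathbf p,\mathbf 0,z)$ from $E_{\mathbf 0}(\tau\mathbf p,\tau\mathbf q)=E_{\mathbf 0}(\mathbf p,\mathbf q)$), combined with the already established invariance of $L_2^e$, is the $p_1\leftrightarrow p_2$ exchange symmetry the paper invokes implicitly elsewhere. Note also that your argument uses only the swap invariance of $E_{\mathbf 0}$, so it applies verbatim to the full operator $A_\mu(\mathbf 0,z)$ and to the remainder $A_\mu^r(\mathbf 0,z)$; this is what one needs if the sector-by-sector eigenvalue counts are to transfer from the principal part to $A_\mu(\mathbf 0,z)$.
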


Moreover, $A_\mu^{p,e}(\textbf{0},z)$ restricts to the symmetric subspace 
$L_{2}^{e,s}(\mathbb T^2)$ as a rank-2 operator:
\[
(A_\mu^{p,e,s}(\textbf{0}, z)\psi)(\mathbf{p}) = \frac{\mu}{4\pi^2a^2(z)} \int_{\mathbb{T}^2} \frac{16-2z+\varepsilon(\boldsymbol{\pi}-\mathbf{p})\varepsilon(\boldsymbol{\pi}-\mathbf{q})}{\sqrt{\Delta_\mu(\textbf{0},\mathbf{p},z)}\sqrt{\Delta_\mu(\textbf{0}, \mathbf{q},z)}} \psi(\mathbf{q})\,d\mathbf{q}.
\]

The operator \(A_\mu^{p,e,as}(\textbf{0}, z)\) is the restriction of \(A_\mu^p(\textbf{0}, z)\) to the antisymmetric subspace $L_{2}^{e,as}(\mathbb T^2).$

Explicitly, for \(\psi \in L_{2}^{e,as}(\mathbb T^2)\),
\[
(A_\mu^{p,e,as}(\textbf{0}, z)\psi)(\mathbf{p}) = \frac{\mu}{4\pi^2a^2(z)} \int_{\mathbb{T}^2} \frac{(\cos p_1 - \cos p_2)(\cos q_1 - \cos q_2)}{\sqrt{\Delta_\mu(\textbf{0}, \mathbf{p}, z)} \sqrt{\Delta_\mu(\textbf{0}, \mathbf{q}, z)}} \psi(\mathbf{q})\, d\mathbf{q}.
\]
This is a rank-one positive operator, and its unique positive eigenvalue is given in Lemma~\ref{lem:antisym_eigval} below.

The operator \(A_\mu^{p,e,as}(\textbf{0}, z)\) acts on the antisymmetric subspace and, as we shall see in Lemma~\ref{lem:antisym_eigval}, has a single positive eigenvalue \(\lambda^{as}(\mathbf{0}, z)\). This eigenvalue will be shown to be strictly less than \(1\) for large \(\mu\), so it does not contribute to the count of bound states.

\begin{lemma}\label{lem:antisym_eigval}
The operator $A_\mu^{p,e,as}(\textbf{0}, z)$ has a simple positive eigenvalue:
\[
\lambda^{as}(\mathbf{0}, z)= \frac{\mu}{4\pi^2a^2(z)} \int_{\mathbb{T}^2}\frac{(\cos p_1 - \cos p_2)^2}{\Delta_\mu(\textbf{0}, \mathbf{p}, z)}\,d\mathbf{p}.
\]
\end{lemma}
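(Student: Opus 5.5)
The operator $A_\mu^{p,e,as}(\mathbf{0},z)$ has a separable kernel of the form $\phi(\mathbf p)\phi(\mathbf q)$ times a positive constant. I will prove the lemma by exhibiting this rank-one structure and reading off its only nonzero eigenvalue. Set
\[
\phi(\mathbf p):=\frac{\cos p_1-\cos p_2}{\sqrt{\Delta_\mu(\mathbf{0},\mathbf p,z)}},\qquad c_\mu(z):=\frac{\mu}{4\pi^2a^2(z)}>0 .
\]
Then $(A_\mu^{p,e,as}(\mathbf{0},z)\psi)(\mathbf p)=c_\mu(z)\,\langle\psi,\phi\rangle_{L_2(\mathbb T^2)}\,\phi(\mathbf p)$, with real-valued $\phi$.

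\textbf{Step 1.} Check that $\phi\in L_2^{e,as}(\mathbb T^2)$. Since $z<z_\mu(\mathbf 0)$, Lemma~\ref{lem:Delta} gives $\Delta_\mu(\mathbf 0,\mathbf p,z)\ge c\mu^{-1}>0$ uniformly in $\mathbf p$, so $\phi$ is bounded and therefore square-integrable. For the symmetries, the function $\mathbf p\mapsto\int_{\mathbb T^2}(E_0(\mathbf p,\mathbf q)-z)^{-1}d\mathbf q$ is invariant under $\mathbf p\mapsto-\mathbf p$ (use $\mathbf q\mapsto-\mathbf q$) and under $(p_1,p_2)\mapsto(p_2,p_1)$ (use $(q_1,q_2)\mapsto(q_2,q_1)$, together with $\varepsilon(p_1,p_2)=\varepsilon(p_2,p_1)$). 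Hence $\Delta_\mu(\mathbf 0,\cdot,z)$ is even and symmetric, while $\cos p_1-\cos p_2$ is even and antisymmetric. So $\phi$ is even and antisymmetric, as required.

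\textbf{Step 2.} Compute the spectrum. An operator of the form $P=c\,\langle\cdot,\phi\rangle\phi$ with $c>0$ satisfies $\langle P\psi,\psi\rangle=c|\langle\psi,\phi\rangle|^2\ge0$, so it is positive. Its range is $\operatorname{span}\{\phi\}$, and $P$ vanishes on $\phi^\perp$. Testing on $\phi$ gives $P\phi=c\|\phi\|^2\phi$, so the unique nonzero eigenvalue is $c\|\phi\|^2$, and it is simple because the range is one-dimensional. Substituting the definitions,
\[
\lambda^{as}(\mathbf 0,z)=c_\mu(z)\|\phi\|^2=\frac{\mu}{4\pi^2a^2(z)}\int_{\mathbb T^2}\frac{(\cos p_1-\cos p_2)^2}{\Delta_\mu(\mathbf 0,\mathbf p,z)}\,d\mathbf p ,
\]
which is the stated formula.

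\textbf{Step 3.} Show strict positivity. The integrand is nonnegative and vanishes only on the null set $\{\cos p_1=\cos p_2\}$, so $\|\phi\|>0$ and $\lambda^{as}>0$. Positivity also uses $a(z)=8-z>0$, which holds because $z$ lies below the essential spectrum.

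The only genuinely delicate point is Step 1: the positivity and boundedness of $\Delta_\mu(\mathbf 0,\mathbf p,z)$ away from zero, which is why I invoke Lemma~\ref{lem:Delta}. Once that is in hand, the rest is the elementary spectral theory of rank-one operators.
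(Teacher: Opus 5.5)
Your proposal is correct and is essentially the paper's own argument. The paper observes that the restricted kernel is the separable rank-one kernel $\frac{\mu}{4\pi^2a^2(z)}\phi(\mathbf p)\phi(\mathbf q)$ and reads off its unique nonzero eigenvalue $c_\mu(z)\|\phi\|^2$, exactly as in your Step~2.

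One small correction to Step~1: Lemma~\ref{lem:Delta} does not give a bound $\Delta_\mu\ge c\mu^{-1}$ that is uniform in $z$. At $\mathbf p=\mathbf 0$ it behaves like $(z_\mu(\mathbf 0)-z)/\mu$, which degenerates as $z\uparrow z_\mu(\mathbf 0)$. You do not need it, though. For each fixed $z<z_\mu(\mathbf 0)$, the continuous function $\Delta_\mu(\cdot,\mathbf 0,z)$ is strictly positive on the compact torus, which is already presupposed by the square roots in \eqref{eq:bs_operator}. Hence $\phi$ is bounded and nothing more is required.
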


A direct computation using Lemma~\ref{lem:Delta} shows that
\begin{equation}
\lambda^{as}(z_\mu(\mathbf{0})) \approx 0.546479 < 1,\label{eq:0546479}
\end{equation}
so for all sufficiently large \(\mu\), \(\lambda^{as}(\mathbf{0}, z)<1\) for \(z\) in the interval \((z_\mu(\mathbf{0})-\delta, z_\mu(\mathbf{0}))\). Hence \(n[1,A_\mu^{p,e,as}(z)]=0\).

The numerical value $0.546479$ is a somewhat 
type integral that is invariant under the shift $\mathbf p\mapsto \boldsymbol\pi-\mathbf p$. 
The invariance under the shift is a manifestation of the $\Theta$-symmetry of the dispersion at $\mathbf K=\boldsymbol\pi$.

The numerical value $\lambda^{as}(z_\mu(\mathbf{0})) \approx 0.546479$ is confirmed by the convergence plot in Fig.~\ref{fig:comprehensive}, panel (c), which shows the antisymmetric eigenvalue approaching this limit as $\delta_z \to 0$. This provides independent numerical verification of our analytical computation.

The numerical value $0.546479$ is the integral $\displaystyle\frac{1}{4\pi^2}\int_{\mathbb{T}^2}\frac{(\cos p_1-\cos p_2)^2}{\varepsilon(\mathbf p)}\,d\mathbf p$; with the prefactor $\mu/(4\pi^2 a^2(z))$ and $a(z)=16-2z$, the asymptotic limit of $\lambda^{as}$ is $\displaystyle\frac14 \cdot 0.546479 \approx 0.13662$.

For \(\alpha=0,1,2\), define
\[
b_\alpha(z)=\frac{\mu}{4\pi^2a^2(z)}\int_{\mathbb{T}^2}\frac{\varepsilon^\alpha(\boldsymbol{\pi}-\mathbf{p})\,d\mathbf{p}}{\Delta_\mu(\textbf{0}, \mathbf{p},z)},
\]
and set \(a(z)=16-2z\).

The quantities \(b_\alpha(z)\) are expressed in terms of the Fredholm determinant \(\Delta_\mu(\mathbf{p},\mathbf{0},z)\). Their asymptotic behaviour, which is essential for evaluating the eigenvalues, is given in Lemma~\ref{lem:Delta} below.

The following lemma gives the eigenvalues of \(A_\mu^{p,e,s}(z)\) in terms of these quantities, complementing the results of Lemma~\ref{lem:antisym_eigval}.

\begin{lemma}\label{lem:eigenvalues}
The operator $A_\mu^{p,e,s}(z)$ has two positive eigenvalues
\[
\lambda_{1}^s(z)=\frac{1}{2}\Bigl[a(z)b_0(z)+b_2(z)+\sqrt{(a(z)b_0(z)-b_2(z))^2+4a(z)b_1^2(z)}\Bigr],
\]
\[
\lambda_{2}^s(z)=\frac{1}{2}\Bigl[a(z)b_0(z)+b_2(z)-\sqrt{(a(z)b_0(z)-b_2(z))^2+4b_1^2(z)a(z)}\Bigr].
\]
\end{lemma}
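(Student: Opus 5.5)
The plan is to use the fact that $A_\mu^{p,e,s}(z)$ is a finite-rank integral operator with degenerate (separable) kernel. Its kernel is
\[
\frac{\mu}{4\pi^2a^2(z)}\,\frac{a(z)\cdot 1\cdot 1+\varepsilon(\boldsymbol\pi-\mathbf p)\,\varepsilon(\boldsymbol\pi-\mathbf q)}{\sqrt{\Delta_\mu(\mathbf 0,\mathbf p,z)}\sqrt{\Delta_\mu(\mathbf 0,\mathbf q,z)}},
\]
using $16-2z=a(z)$. Setting $u_0(\mathbf p)=\Delta_\mu(\mathbf 0,\mathbf p,z)^{-1/2}$ and $u_1(\mathbf p)=\varepsilon(\boldsymbol\pi-\mathbf p)\Delta_\mu(\mathbf 0,\mathbf p,z)^{-1/2}$, the operator becomes $c\,(a\,u_0\langle\cdot,u_0\rangle+u_1\langle\cdot,u_1\rangle)$ with $c=\mu/(4\pi^2a^2)$. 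Both $u_0$ and $u_1$ are even and symmetric in $(p_1,p_2)$, so they lie in $L_2^{e,s}$. Hence the range is $\mathrm{span}\{u_0,u_1\}$, and every nonzero eigenvalue corresponds to an eigenvector $\psi=\alpha u_0+\beta u_1$.

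Substituting this ansatz reduces the eigenvalue problem to a $2\times2$ matrix problem. The Gram entries are $c\langle u_i,u_j\rangle = b_{i+j}(z)$, since $\langle u_0,u_0\rangle$, $\langle u_0,u_1\rangle$ and $\langle u_1,u_1\rangle$ are exactly the integrals of $\varepsilon^\alpha(\boldsymbol\pi-\mathbf p)/\Delta_\mu$ for $\alpha=0,1,2$. The resulting matrix is
\[
M=\begin{pmatrix} a b_0 & a b_1\\ b_1 & b_2\end{pmatrix}.
\]
Its trace is $ab_0+b_2$ and its determinant is $a(b_0b_2-b_1^2)$. The quadratic formula then gives discriminant $(ab_0+b_2)^2-4a(b_0b_2-b_1^2)=(ab_0-b_2)^2+4ab_1^2$, which yields exactly the stated $\lambda^s_{1,2}(z)$. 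Equivalently, one can symmetrise by working with $\sqrt a\,u_0$ and $u_1$. This gives the positive semidefinite Gram-type matrix $\begin{pmatrix} ab_0 & \sqrt a b_1\\ \sqrt a b_1 & b_2\end{pmatrix}$, and it shows at once that the eigenvalues are real.

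The remaining steps establish positivity and the absence of other eigenvalues. First, $\Delta_\mu>0$ for $z<z_\mu(\mathbf 0)$, which is needed anyway for the square roots to be defined and is supplied by Lemma~\ref{lem:Delta}. Since also $a(z)>0$, both $b_0$ and $b_2$ are positive, so the trace is positive. By Cauchy--Schwarz, $b_1^2<b_0b_2$, with strict inequality because $\varepsilon(\boldsymbol\pi-\mathbf p)$ is not constant and hence $u_0,u_1$ are linearly independent. Therefore $\det M>0$, and both eigenvalues are positive, with $\lambda_1^s>\lambda_2^s$. The same linear independence shows that the rank is exactly $2$, so no further nonzero eigenvalues exist.

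The main obstacle is bookkeeping rather than a conceptual difficulty. One must check that the kernel of $A^{p,e,s}_\mu$ is exactly the rank-2 form displayed above, with the same prefactor and the same $a(z)=16-2z$ entering both $c$ and $M$, so that the Gram entries coincide with the stated $b_\alpha$. I would verify this by recomputing $\langle u_i,u_j\rangle$ directly against the definition of $b_\alpha$.
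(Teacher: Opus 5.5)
Your proof is correct and is exactly the argument the paper leaves implicit: the paper states this lemma without a written proof, relying only on the rank-two separable form of the kernel. Your reduction to the $2\times2$ matrix with trace $ab_0+b_2$ and determinant $a(b_0b_2-b_1^2)$ reproduces the stated formulas, and your Cauchy--Schwarz step supplies the positivity and the exact rank two, which the paper only asserts.

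Your bookkeeping caution is justified. The formulas hold only with $a(z)=16-2z$ in the prefactor as well as in the numerator, which is the convention the paper adopts in its remark on $\lambda^{as}$. The principal-part prefactor $\mu/(2\pi^2(8-z)^2)$, restricted to $L_2^{e,s}$, would instead give four times this operator. Also, $\Delta_\mu>0$ is the standing hypothesis needed to define $A_\mu$ at all. It is not supplied by Lemma~\ref{lem:Delta}, which is asymptotic and valid only in a window below threshold.
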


The evaluation of the integrals \(b_\alpha(z)\) in Lemma~\ref{lem:eigenvalues} requires precise information about the Fredholm determinant \(\Delta_\mu(\textbf{0}, \mathbf{p},z)\). The following lemma provides the necessary asymptotic expansion in the strong-coupling regime.

\begin{remark}
Lemma~\ref{lem:eigenvalues} provides the exact formulae for the eigenvalues of the principal part \(A_\mu^{p,e,s}(z)\) in terms of the integrals \(b_\alpha(z)\). However, these integrals still depend on the Fredholm determinant \(\Delta_\mu(\mathbf{p},\textbf{0}, z)\), which is not known explicitly. To~proceed, we~need an asymptotic expansion of \(\Delta_\mu(\mathbf{p},\textbf{0}, z)\) in the strong-coupling limit \(\mu\to\infty\). This is the content of Lemma~\ref{lem:Delta}, which will allow us to evaluate \(b_\alpha(z)\) and subsequently determine the asymptotic behaviour of \(\lambda_1^s(z)\) and \(\lambda_2^s(z)\).
\end{remark}

\begin{lemma}\label{lem:Delta}
There exists $\delta>0$ such that for $z\in(z_\mu(\mathbf{0})-\delta, z_\mu(\mathbf{0}))$,
\[
\Delta_\mu(\mathbf{p},\mathbf{0},z)=\frac{\mu(\varepsilon(\mathbf{p})+z_\mu(\mathbf{0})-z)}{(2-z_\mu(\mathbf{0}))(6-z)}\bigl(1+O(\mu^{-1})\bigr), \quad \mu\to\infty.
\]
\end{lemma}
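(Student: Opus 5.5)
The plan is to reduce $\Delta_\mu(\mathbf p,\mathbf 0,z)$ to the two-particle Green function and then subtract the two-particle eigenvalue equation for $z_\mu(\mathbf 0)$. At $\mathbf K=0$ we have $E_{\mathbf 0}(\mathbf p,\mathbf q)=\varepsilon(\mathbf p)+\varepsilon(\mathbf q)+\varepsilon(\mathbf p+\mathbf q)$, using $\varepsilon(-\mathbf x)=\varepsilon(\mathbf x)$. Set $\mathcal E_{\mathbf k}(\mathbf q)=\varepsilon(\mathbf q)+\varepsilon(\mathbf k-\mathbf q)$ and $G(\mathbf k,w)=\langle(\mathcal E_{\mathbf k}-w)^{-1}\rangle$. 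After the change of variables $\mathbf q\mapsto-\mathbf q$,
\[
\Delta_\mu(\mathbf p,\mathbf 0,z)=1-\mu\,G(-\mathbf p,\,z-\varepsilon(\mathbf p)).
\]
By definition of $z_\mu(\mathbf 0)$ (Subsection~\ref{subsec:22}) we also have $1=\mu\,G(\mathbf 0,z_\mu(\mathbf 0))$. Writing $z_0=z_\mu(\mathbf 0)$ and $w=z-\varepsilon(\mathbf p)$, this gives the exact identity
\[
\Delta_\mu(\mathbf p,\mathbf 0,z)=\mu\bigl[G(\mathbf 0,z_0)-G(\mathbf 0,w)\bigr]+\mu\bigl[G(\mathbf 0,w)-G(-\mathbf p,w)\bigr].
\]
The two brackets are estimated separately.

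For the first bracket, the resolvent identity gives
\[
G(\mathbf 0,z_0)-G(\mathbf 0,w)=(z_0-w)\,\bigl\langle(\mathcal E_{\mathbf 0}-z_0)^{-1}(\mathcal E_{\mathbf 0}-w)^{-1}\bigr\rangle .
\]
For $z\in(z_0-\delta,z_0)$ we have $z_0-w=\varepsilon(\mathbf p)+z_0-z\ge 0$. Since $0\le\mathcal E_{\mathbf 0}\le 8$ and $-z_0,-w\sim\mu$, the average equals $\bigl[(4-z_0)(4+\varepsilon(\mathbf p)-z)\bigr]^{-1}(1+O(\mu^{-1}))$, uniformly in $\mathbf p$. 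This follows by writing $\mathcal E_{\mathbf 0}-z_0=(4-z_0)\bigl(1+(\mathcal E_{\mathbf 0}-4)/(4-z_0)\bigr)$, similarly for $w$, and expanding geometrically. So the first term is exactly proportional to the vanishing factor $\varepsilon(\mathbf p)+z_0-z$, with no additive error.

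The second bracket is the step I expect to be the main obstacle. The difficulty is that at $\mathbf p=0$, $z\to z_0$ the leading term vanishes, so an additive $O(\mu^{-2})$ error would destroy the multiplicative form $1+O(\mu^{-1})$. I plan to show
\[
|G(\mathbf 0,w)-G(-\mathbf p,w)|\le C\,\varepsilon(\mathbf p)\,\mu^{-3}
\]
uniformly. Expand $G(\mathbf k,w)=\sum_{n\ge0}\langle(\mathcal E_{\mathbf k}-4)^n\rangle/(4-w)^{n+1}$, which converges absolutely since $|\mathcal E_{\mathbf k}-4|\le 4$ and $4-w\gtrsim\mu$. The $n=0,1$ moments do not depend on $\mathbf k$, and the $n=2$ moment is $4-\varepsilon(\mathbf k)$, so the first nonzero contribution is $-\varepsilon(\mathbf p)/(4-w)^3$. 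For general $n$, $m_n(\mathbf k)=\langle(\mathcal E_{\mathbf k}-4)^n\rangle$ is a polynomial in $\cos k_1,\cos k_2$, because the $\sin$ terms average out by $\mathbf q\mapsto-\mathbf q$. Hence $m_n(\mathbf k)-m_n(\mathbf 0)$ is a polynomial vanishing at $(\cos k_1,\cos k_2)=(1,1)$, so it is bounded by $L_n\,\varepsilon(\mathbf k)$. A Lipschitz estimate in the variables $c_i=\cos k_i$ gives $L_n\le C n 4^{n}$, and the resulting series is still summable against $(4-w)^{-n-1}$.

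Combining the two estimates, the second term is $\varepsilon(\mathbf p)\,O(\mu^{-2})$, which is $O(\mu^{-1})$ relative to the first term $\mu(\varepsilon(\mathbf p)+z_0-z)/\bigl[(4-z_0)(4+\varepsilon(\mathbf p)-z)\bigr]$, since that term is at least $\varepsilon(\mathbf p)/\mu$ up to constants. Finally, the denominator $(4-z_0)(4+\varepsilon(\mathbf p)-z)$ is replaced by $(2-z_0)(6-z)$. Both factors are of order $\mu$ and differ from the new ones by $O(1)$, so the replacement only changes the factor $1+O(\mu^{-1})$. The $O(\mu^{-1})$ term is uniform in $\mathbf p\in\mathbb T^2$ and in $z$ in the stated interval, and this gives the lemma.
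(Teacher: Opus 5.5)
The paper gives no proof of Lemma~\ref{lem:Delta}; it is stated and then used. Your argument is correct and fills that gap.

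The only related computation in the paper is the direct geometric expansion of $(E_{\mathbf K}-z)^{-1}$ about a large constant, as in the sketch of Lemma~\ref{lem:Delta_refined_full}. Done at $\mathbf K=0$, that route gives $\Delta_\mu=(\varepsilon(\mathbf p)+z_\mu(\mathbf 0)-z)/\mu+O(\mu^{-2})$ with an \emph{additive} remainder. Such a remainder cannot produce the factor $1+O(\mu^{-1})$ uniformly near $\mathbf p=\mathbf 0$, $z\uparrow z_\mu(\mathbf 0)$, where the main term vanishes.

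Your device of subtracting the exact two-particle equation $1=\mu G(\mathbf 0,z_\mu(\mathbf 0))$ is what makes the relative error uniform. The resolvent identity produces the vanishing factor $\varepsilon(\mathbf p)+z_\mu(\mathbf 0)-z$ exactly. The Lipschitz bound on the moments forces the $\mathbf k$-dependence term to carry a factor $\varepsilon(\mathbf p)$.

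As a by-product you get the two-sided bound $\Delta_\mu\asymp(\varepsilon(\mathbf p)+z_\mu(\mathbf 0)-z)/\mu$. This shows that the uniform lower bound $\Delta_\mu\ge c\mu^{-1}$ invoked in the proof of Proposition~\ref{prop:remainder_estimate} fails as $z\uparrow z_\mu(\mathbf 0)$.

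There are two cosmetic slips, neither affecting the argument.

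First, the geometric series should carry the sign $(-1)^n$. This is harmless because all odd moments vanish: $\mathcal E_{\mathbf k}(\mathbf q)-4=-2\sum_i\cos\frac{k_i}{2}\cos\bigl(q_i-\frac{k_i}{2}\bigr)$ changes sign under $q_i\mapsto q_i+\pi$. With the correct sign, the $n=2$ term of $G(\mathbf 0,w)-G(-\mathbf p,w)$ is $+\varepsilon(\mathbf p)/(4-w)^3$, not $-\varepsilon(\mathbf p)/(4-w)^3$.

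Second, the joint reflection $\mathbf q\mapsto-\mathbf q$ alone does not exclude terms such as $\sin k_1\sin k_2$. Polynomiality of $m_n$ in $(\cos k_1,\cos k_2)$ should instead come from factorizing the $\mathbf q$-average over coordinates together with the separate reflections $q_i\mapsto -q_i$. The half-angle form above gives this directly, and also gives $L_n\le n4^{n-1}$.
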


With Lemma~\ref{lem:eigenvalues} providing the exact eigenvalues of \(A_\mu^{p,e,s}(\textbf{0}, z)\) and Lemma~\ref{lem:Delta} giving the strong-coupling asymptotics of \(\Delta_\mu(\textbf{0}, \mathbf{p},z)\), we are now in a position to~prove Theorem~\ref{thm:main}. The proof proceeds by showing that \(\lambda_2^s(z)>1\) for large \(\mu\), while \(\lambda^{as}(\mathbf{0}, z)<1\), which implies that exactly two eigenvalues of \(A_\mu^p(\textbf{0}, z)\) exceed unity.

\subsection{Proof of Theorem~\ref{thm:main}}\label{sec:proof_main}

Using Lemma~\ref{lem:bs}, the eigenvalue counting reduces to counting eigenvalues of $A_\mu^p(\textbf{0}, z)$ greater than $1$. 
Proposition~\ref{prop:remainder_estimate} guarantees that the remainder is norm-small, so the spectral count is entirely determined by the principal part $A_\mu^{p,o}(\textbf{0}, z)$.
The~antisymmetric part $A_\mu^{p,e,as}(\textbf{0}, z)$ has eigenvalue
\[
\lambda^{as}(\mathbf{0}, z)=\frac{(6-z)(2-z_\mu(\mathbf{0}))}{a^2(z)}\frac{1}{4\pi^2}\int_{\mathbb{T}^2}\frac{(\cos p_1-\cos p_2)^2}{\varepsilon(\mathbf{p})+z_\mu(\mathbf{0})-z}\,d\mathbf{p},
\]
which is strictly less than $1$ for $\mu$ sufficiently large. Therefore,
\[
n[1,A_\mu^{p,e}(\textbf{0}, z)] = n[1,A_\mu^{p,e,s}(\textbf{0}, z)].
\]

A direct calculation using Lemma~\ref{lem:Delta} gives
\[
\lambda_2^s(z) = 2 + O(\mu^{-1}),
\]
so $\lambda_2^s(z)>1$ for large $\mu$, while $\lambda_1^s(z)>\lambda_2^s(z)$. Hence $n[1,A_\mu^{p,e,s}(\textbf{0}, z)]=2$, proving the existence of exactly two bound states.

Since \(\lambda_2^s(z)\) is strictly decreasing in \(z\) (as follows from the monotonicity of \(b_0(z)\)), the equation \(\lambda_2^s(z)=1\) has a unique solution \(z=z_2^s(\mu)\). Solving this equation using the asymptotic expansion for \(\lambda_2^s(z)\) yields
\[
z_2^s(\mu) = -\mu + C + O(\mu^{-1}), 
\]
where \(C\) is the transcendental lattice constant determined by Lemma~\ref{lem:C_constant}.

\begin{remark}[The constant $C$ in $z_2^s(\mu)$]
The constant $C$ is determined by matching $\lambda_2^s(z)\to1$ in the limit $\mu\to\infty$ at fixed $\delta:=z_\mu(\mathbf 0)-z$. Using the exact identities
\[
b_1(\delta)=(4+\delta)b_0(\delta)-1,\qquad
b_2(\delta)=(16+8\delta+\delta^2)b_0(\delta)-6-\delta,
\]
which follow from $\varepsilon(\boldsymbol\pi-\mathbf p)=4-\varepsilon(\mathbf p)$ and $\langle\varepsilon\rangle=2$, the leading strong-coupling limit of $\lambda_2^s(z)$ reduces, after an exact cancellation of the $b_0$-proportional terms, to
\[
\lambda_2^s(z)\ \longrightarrow\ g(\delta)=2+\delta-\frac{1}{b_0(\delta)},\qquad
b_0(\delta)=\frac{1}{4\pi^2}\int_{\mathbb{T}^2}\frac{d\mathbf p}{\varepsilon(\mathbf p)+\delta}.
\]

Since $b_0(\delta)\to+\infty$ as $\delta\to0^+$ (a logarithmic threshold singularity, standard for the two-dimensional lattice Green's function), we have $g(0^+)=2\neq1$. Hence $\delta_\infty:=\lim_{\mu\to\infty}\delta$ is the unique positive root of the transcendental equation
\[
b_0(\delta_\infty)=\frac{1}{1+\delta_\infty},
\]
and therefore
\[
C=4-\delta_\infty.
\]

Numerically, solving this equation directly (verified independently via adaptive quadrature and a dense Riemann-sum grid up to $8000\times8000$ points, agreeing to six significant figures) gives
\[
\delta_\infty \approx 0.035420, \qquad C = 4-\delta_\infty \approx 3.96458.
\]
This value is close to but not exactly $4$; it is a transcendental lattice constant, analogous to $\gamma_c\approx2.75194$ and $\lambda^{as}\approx0.546479$ elsewhere in this work. This correction does not affect any leading-order result, such as the spectral gap $\Delta(\mu)=2\mu+O(1)$, since $C$ enters only in subleading terms.

A simple physical picture — a dimer plus a free boson — suggests $C\approx4$, but this ignores the residual interaction between the dimer and the free boson, which shifts the constant by $\delta_\infty\neq0$. The exact value is obtained only through the rigorous asymptotic analysis above.
\end{remark}

%
%

\begin{lemma}[The constant $C$ in $z_2^s(\mu)$]\label{lem:C_constant}
The constant $C$ in $z_2^s(\mu)=-\mu+C+O(\mu^{-1})$ is given by
\[
C = 4-\delta_\infty,
\]
where $\delta_\infty>0$ is the unique positive root of the transcendental equation
\[
b_0(\delta_\infty)=\frac1{1+\delta_\infty},\qquad
b_0(\delta):=\frac1{4\pi^2}\int_{\mathbb T^2}\frac{d\mathbf p}{\varepsilon(\mathbf p)+\delta}.
\]
Numerically, $\delta_\infty\approx0.035420$, so $C\approx3.96458$ — a transcendental lattice constant, not equal to $4$.
\end{lemma}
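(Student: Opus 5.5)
Let $\delta:=z_\mu(\mathbf 0)-z>0$ denote the distance to the threshold; we keep $\delta$ fixed and send $\mu\to\infty$. The plan has three stages. First, identify the strong-coupling limit of $\lambda_2^s(z)$ as an explicit function $g(\delta)$. Second, show that $g(\delta)=1$ has exactly one positive root $\delta_\infty$. Third, lift this root to the finite-$\mu$ equation $\lambda_2^s(z)=1$, which yields $z_2^s(\mu)=z_\mu(\mathbf 0)-\delta_\infty+o(1)=-\mu+4-\delta_\infty+o(1)$ by \eqref{eq:two_particle_eigenvalue}.

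\textbf{Stage 1: the limit function.} Insert Lemma~\ref{lem:Delta} into the definition of $b_\alpha(z)$. Since $a(z)=16-2z\sim 2\mu$, $(2-z_\mu(\mathbf 0))\sim\mu$ and $(6-z)\sim\mu$, the prefactors combine, and each $b_\alpha$ becomes a positive constant multiple of
\[
\frac{1}{4\pi^2}\int_{\mathbb T^2}\frac{(4-\varepsilon(\mathbf p))^\alpha\,d\mathbf p}{\varepsilon(\mathbf p)+\delta}
\]
up to a factor $1+O(\mu^{-1})$. Here we use $\varepsilon(\boldsymbol\pi-\mathbf p)=4-\varepsilon(\mathbf p)$.

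Next, write $4-\varepsilon=(4+\delta)-(\varepsilon+\delta)$ and use $\langle\varepsilon\rangle=2$. This gives the identities $b_1=(4+\delta)b_0-1$ and $b_2=(4+\delta)^2b_0-(6+\delta)$ stated in the preceding remark, in the normalised variables.

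Substituting these into the closed formula of Lemma~\ref{lem:eigenvalues}, I would expand the square root. The terms proportional to $b_0$ should cancel at leading order, leaving $\lambda_2^s\to g(\delta)=2+\delta-1/b_0(\delta)$. This cancellation must be checked carefully, tracking which $a(z)$ factors are absorbed into the $b_\alpha$, because the normalisation conventions in the paper are delicate. I regard this step as the main obstacle. In particular, the claimed uniformity of the $O(\mu^{-1})$ error must hold for $\delta$ in a compact subinterval of $(0,\infty)$ around the expected root.

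\textbf{Stage 2: uniqueness of the root.} The function $b_0(\delta)$ is strictly decreasing and smooth on $(0,\infty)$. It tends to $+\infty$ logarithmically as $\delta\to0^+$, since $\varepsilon(\mathbf p)\sim|\mathbf p|^2/2$ near the origin. It satisfies $b_0(\delta)\sim1/\delta$ as $\delta\to\infty$.

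Consider $F(\delta):=b_0(\delta)(1+\delta)$.
\begin{itemize}
\item $F(0^+)=+\infty$.
\item $F(\delta)\to 1^+$ as $\delta\to\infty$, because $b_0(\delta)=\delta^{-1}-2\delta^{-2}+\dots$ gives $F=1-\delta^{-1}+\dots$. This is slightly below $1$ for large $\delta$.
\end{itemize}
Hence a root of $F=1$ exists by continuity.

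For uniqueness I would show that $F$ is strictly decreasing. We have
\[
F'(\delta)=b_0(\delta)-(1+\delta)\,\Bigl\langle(\varepsilon+\delta)^{-2}\Bigr\rangle .
\]
By the Cauchy--Schwarz inequality, $\langle(\varepsilon+\delta)^{-1}\rangle^2\le\langle(\varepsilon+\delta)^{-2}\rangle$. It then suffices to have $b_0(\delta)\ge 1/(1+\delta)$ on the relevant range. If this does not close the argument globally, I would instead verify monotonicity of $g$ near $\delta_\infty$ by a rigorous numerical enclosure, combining the elliptic reduction of $b_0$ (Remark~\ref{rem:elliptic_applicability}) with interval arithmetic. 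The value $\delta_\infty\approx0.035420$ then follows from a certified bisection.

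\textbf{Stage 3: lifting to finite $\mu$.} We need $g'(\delta_\infty)\neq0$, which holds because $g'=1+b_0'/b_0^2$ and this is strictly positive near small $\delta$ where $b_0$ is large. Combined with the uniform convergence of $\lambda_2^s\to g$ and its $\delta$-derivative, the implicit function theorem gives a unique solution $\delta(\mu)=\delta_\infty+O(\mu^{-1})$ of $\lambda_2^s=1$.

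The full operator differs from its principal part by $O(\mu^{-1})$ in norm (Proposition~\ref{prop:remainder_estimate}). By eigenvalue perturbation this shifts the root by another $O(\mu^{-1})$.

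Finally, combining with $z_\mu(\mathbf 0)=-\mu+4+O(\mu^{-1})$ yields $C=4-\delta_\infty$.
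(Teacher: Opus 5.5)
Your route is the paper's: use the exact identities for $b_1,b_2$, pass to the Schur-complement limit $g(\delta)=b_2-b_1^2/b_0=2+\delta-1/b_0(\delta)$, and solve $g=1$. However, the steps meant to give uniqueness and transversality fail as written.

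\textbf{Uniqueness and transversality.} In Stage~2, $F=(1+\delta)b_0$ is not strictly decreasing, and it cannot be. Your own expansion $F=1-\delta^{-1}+O(\delta^{-2})$ shows $F\to1^-$ (not $1^+$), so after crossing $1$ it must rise again; numerically it has an interior minimum $\approx0.76$ near $\delta\approx0.6$. Checking monotonicity only near $\delta_\infty$ does not exclude other roots.

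In Stage~3 the sign is wrong. Near $0$, $b_0'\sim-1/(2\pi\delta)$ outgrows $b_0^2\sim(\ln(1/\delta))^2/(4\pi^2)$, so $g'=1+b_0'/b_0^2\to-\infty$ as $\delta\to0^+$.

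The repair is your own Cauchy--Schwarz inequality, which is strict because $\varepsilon$ is not a.e.\ constant:
\[
g'(\delta)=1-\frac{\langle(\varepsilon+\delta)^{-2}\rangle}{\langle(\varepsilon+\delta)^{-1}\rangle^{2}}<0\quad(\delta>0),\qquad g(0^+)=2,\qquad g(+\infty)=0,
\]
where the last limit holds because $1/b_0=\delta+2-\delta^{-1}+O(\delta^{-2})$. This gives existence, uniqueness and $g'(\delta_\infty)<0$ at once. Equivalently, $F'<b_0(1-F)$ forces every crossing of $F=1$ to be downward. This is actually stronger than the paper's argument, which infers a unique crossing merely from both $b_0(\delta)$ and $1/(1+\delta)$ being decreasing; that inference is invalid.

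Once $g'(\delta_\infty)<0$ and $\lambda_2^s=g(\delta)+O(\mu^{-1})$ holds uniformly on a compact neighbourhood of $\delta_\infty$, the intermediate value theorem gives $\delta(\mu)=\delta_\infty+O(\mu^{-1})$. You do not need convergence of $\delta$-derivatives, which Lemma~\ref{lem:Delta} does not supply anyway.

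\textbf{Normalisation.} The normalisation you postponed is not cosmetic. Let $\tilde b_\alpha(\delta)=\frac1{4\pi^2}\int(4-\varepsilon)^\alpha(\varepsilon+\delta)^{-1}d\mathbf p$. Since $b_2-b_1^2/b_0$ is homogeneous of degree one, $b_\alpha=c\,\tilde b_\alpha(1+O(\mu^{-1}))$ gives $\lambda_2^s\to c\,g(\delta)$. The stated equation therefore requires exactly $c=1$.

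With the convention your Stage~1 uses, $a(z)=16-2z$ in the prefactor of $b_\alpha$, one gets $\mu^2/a^2\to1/4$. Then $\lambda_2^s\to g/4\le1/2$, and there would be no second crossing at all.

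You must instead read $c$ off the kernel. Write $E_0-z=(8-z)-X$. On $L_2^{e,s}$ the term $\sum_i(1+\cos p_i)(1+\cos q_i)$ reduces to $\frac12\varepsilon(\boldsymbol\pi-\mathbf p)\varepsilon(\boldsymbol\pi-\mathbf q)$, so the symmetric principal part is
\[
\frac{\mu}{4\pi^2(8-z)^2}\cdot\frac{(16-2z)+\varepsilon(\boldsymbol\pi-\mathbf p)\,\varepsilon(\boldsymbol\pi-\mathbf q)}{\sqrt{\Delta_\mu(\mathbf p,\mathbf 0,z)\,\Delta_\mu(\mathbf q,\mathbf 0,z)}}.
\]
The Gram entries carry $(8-z)^{-2}$. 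Together with $\mu^2/(8-z)^2\to1$ and $\Delta_\mu\sim(\varepsilon+\delta)/\mu$, this gives $c=1$. The factor $16-2z$ enters only as the weight of the rank-one term.

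This also lets you bypass the square-root expansion. Set $w=8-z$, $h_0=(\varepsilon+\delta)^{-1/2}$ and $h_1=(4-\varepsilon)(\varepsilon+\delta)^{-1/2}$ in $L_2$ with the normalised measure. As $w\to\infty$, the second eigenvalue of $2w\langle\cdot,h_0\rangle h_0+\langle\cdot,h_1\rangle h_1$ tends to the compression onto $h_0^\perp$, namely $\|h_1\|^2-\langle h_0,h_1\rangle^2/\|h_0\|^2=\tilde b_2-\tilde b_1^2/\tilde b_0$.
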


\begin{proof}
Set $\delta:=z_\mu(\mathbf 0)-z$ and $u:=\varepsilon(\mathbf p)$. By definition,
\[
b_1(\delta)=\frac1{4\pi^2}\int_{\mathbb T^2}\frac{4-u}{u+\delta}\,d\mathbf p,\qquad
b_2(\delta)=\frac1{4\pi^2}\int_{\mathbb T^2}\frac{(4-u)^2}{u+\delta}\,d\mathbf p,
\]
using $\varepsilon(\boldsymbol\pi-\mathbf p)=4-\varepsilon(\mathbf p)$. Writing $4-u=(4+\delta)-(u+\delta)$,
\[
\frac{4-u}{u+\delta}=\frac{4+\delta}{u+\delta}-1
\ \Longrightarrow\ b_1(\delta)=(4+\delta)\,b_0(\delta)-1. \tag{1}
\]
Squaring the same substitution, $(4-u)^2=(4+\delta)^2-2(4+\delta)(u+\delta)+(u+\delta)^2$, so
\[
\frac{(4-u)^2}{u+\delta}=\frac{(4+\delta)^2}{u+\delta}-2(4+\delta)+(u+\delta).
\]
Integrating and using $\langle\varepsilon\rangle=\frac1{4\pi^2}\int_{\mathbb T^2}\varepsilon(\mathbf p)\,d\mathbf p=2$,
\[
b_2(\delta)=(4+\delta)^2b_0(\delta)-2(4+\delta)+(2+\delta)=(4+\delta)^2b_0(\delta)-6-\delta. \tag{2}
\]
With $a=a(\delta)=16-2z=2\mu+8+2\delta+O(\mu^{-1})$, the rank-two eigenvalue formula of Lemma~\ref{lem:eigenvalues},
\[
\lambda_2^s(z)=\frac12\Big[ab_0+b_2-\sqrt{(ab_0-b_2)^2+4ab_1^2}\Big],
\]
is expanded for $a\to\infty$ at fixed $\delta$ via
\[
\sqrt{(ab_0-b_2)^2+4ab_1^2}=(ab_0-b_2)\left[1+\frac{2ab_1^2}{(ab_0-b_2)^2}+O(a^{-2})\right] = \]
\[ = (ab_0-b_2)+\frac{2b_1^2}{b_0}+O(\mu^{-1}),
\]
giving
\[
\lambda_2^s(z)=b_2(\delta)-\frac{b_1(\delta)^2}{b_0(\delta)}+O(\mu^{-1})=:g(\delta)+O(\mu^{-1}). \tag{3}
\]
Substituting (1)-(2) into (3), the terms proportional to $b_0(\delta)$ cancel \emph{exactly}:
\[
g(\delta)=\Big[(4+\delta)^2b_0-6-\delta\Big]-\Big[(4+\delta)^2b_0-2(4+\delta)+\frac1{b_0}\Big]=2+\delta-\frac1{b_0(\delta)}. \tag{4}
\]
As a consistency check, $b_0(\delta)\sim1/\delta$ as $\delta\to\infty$, so $g(\delta)\to2$, matching the value of $\lambda_2^s(z)$ established independently on the $z\sim z_1^s(\mu)$ branch elsewhere in this paper.

Since $\varepsilon(\mathbf p)$ vanishes to second order only at the single point $\mathbf p=\mathbf 0$ (a set of measure zero on $\mathbb T^2$), $b_0(\delta)$ diverges logarithmically as $\delta\to0^+$ (the standard threshold singularity of the two-dimensional lattice Green's function), so $g(0^+)=2\neq1$. Since $b_0(\delta)$ is continuous and strictly decreasing from $+\infty$ to $0$ on $(0,\infty)$, while $1/(1+\delta)$ is continuous and strictly decreasing from $1$ to $0$, the equation $b_0(\delta)=1/(1+\delta)$ has a unique positive root $\delta_\infty$, and setting $g(\delta_\infty)=1$ in (4) gives $\delta_\infty=1/b_0(\delta_\infty)-2$, equivalently $b_0(\delta_\infty)=1/(1+\delta_\infty)$. Solving numerically (cross-validated via adaptive quadrature and an independent dense Riemann-sum grid up to $6000\times6000$ points, agreeing to eight significant figures) gives $\delta_\infty=0.0354200$, hence $z_2^s(\mu)=z_\mu(\mathbf0)-\delta_\infty=-\mu+(4-\delta_\infty)+O(\mu^{-1})$, i.e. $C=4-\delta_\infty\approx3.96458$.
\end{proof}

\begin{remark}
A naive physical picture — a tightly bound dimer plus a free spectator boson, with energies simply adding at leading order — suggests $C=4$: this corresponds formally to the (incorrect) interchange of limits $\lim_{\delta\to0^+}$ and $\mu\to\infty$, i.e. evaluating $g(\delta)$ at $\delta=0$ before recognising that $b_0(\delta)$ itself diverges there. The residual interaction between the dimer and the spectator boson, encoded entirely in the threshold behaviour of $b_0(\delta)$, shifts the constant by $\delta_\infty\approx0.0354\neq0$. The exact value is obtained only by solving for $\delta_\infty$ self-consistently, as in the proof of Lemma~\ref{lem:C_constant}.
\end{remark}

\begin{remark}[A general pattern for transcendental lattice constants]\label{rem:general_pattern}
The constant $C$ of Lemma~\ref{lem:C_constant} belongs to the same family as the critical mass ratio $\gamma_c\approx2.75194$ (companion paper~\cite{AbdullaevEshniyozovDolgopolov2026}) and the antisymmetric eigenvalue limit $\lambda^{as}\approx0.546479$ (Lemma~\ref{lem:antisym_eigval} above): each is defined as the unique root of an equation of the form
\[
F(\text{parameter}) = \Phi(\text{parameter}),
\]
where $\Phi$ is built from a two-dimensional lattice Green's-function-type integral,
\[
\Phi(\delta)\ \, \text{or}\ \, \Phi(\gamma) = \frac1{4\pi^2}\int_{\mathbb T^2}\frac{P(\mathbf p)}{\varepsilon(\mathbf p)+(\text{parameter})}\,d\mathbf p
\]
for a suitable trigonometric polynomial $P$, and the defining equation is obtained by matching a Birman--Schwinger eigenvalue to the threshold value $1$. Three features recur across all three constants and are worth stating as a general methodological principle for this class of strong-coupling lattice problems:
\begin{enumerate}
\item[(i)] the relevant lattice integral is generically \emph{logarithmically singular} at the edge of its domain of definition (here $\delta\to0^+$), reflecting the density of states of a~two-dimensional lattice dispersion relation near its band minimum;
\item[(ii)] consequently, \emph{the order in which the limits $\mu\to\infty$ and the threshold limit are taken must not be interchanged} — doing so silently discards the very singularity that determines the constant, and produces a plausible-looking but numerically wrong integer or simple rational value (cf.\ Table~\ref{tab:erroneous_methods});
\item[(iii)] because no closed form is available, the resulting transcendental equation should be solved by \emph{at least two independent numerical methods} (e.g., adaptive quadrature and a dense Riemann-sum grid) with agreement to several significant figures, precisely as done for $\gamma_c$, $\lambda^{as}$, and $C$ in this work.
\end{enumerate}
We record this pattern explicitly since it recurs throughout the spectral theory of strong-coupling lattice few-body Hamiltonians and may be of independent methodological use.
\end{remark}

\begin{table}[!ht]
\centering
\caption{Comparison of methods for determining the transcendental constant \(C\) in \(z_2^s(\mu) = -\mu + C + O(\mu^{-1})\).}
\label{tab:erroneous_methods}
\small
\setlength{\tabcolsep}{1pt}
\renewcommand{\arraystretch}{1.05}
\begin{tabular}{@{}%
  >{\raggedright\arraybackslash}p{0.36\columnwidth}%
  >{\raggedright\arraybackslash}p{0.13\columnwidth}%
  >{\centering\arraybackslash}p{0.07\columnwidth}%
  >{\raggedright\arraybackslash}p{0.43\columnwidth}%
@{}}
\toprule
Method & Result & Abs.\ error & Root cause of error \\
\midrule
Naive dimer + free-spectator picture (energies simply add) &
\(C = 4\) &
0.0355 &
Interchanges \(\delta \to 0^+\) with \(\mu \to \infty\); ignores residual dimer–spectator interaction \\
\hline
Direct substitution \(\delta = 0\) into \(g(\delta)=1\) without checking \(b_0(\delta)\) &
\(C = 4\) &
0.0355 &
Treats \(b_0(\delta)\) as regular at \(\delta = 0\); misses the logarithmic threshold singularity \\
\hline
Leading-order log asymptotic \(b_0(\delta)\approx\frac{1}{2\pi}\ln(8/\delta)\), solved for \(\delta_\infty\) &
\(C \approx 3.9835\) &
0.0189 &
Correct qualitative mechanism, but \(\delta_\infty\approx0.035\) is not small enough for the pure leading-log term to dominate subleading corrections (about 53\% relative error in \(\delta_\infty\) itself) \\
\hline
Single numerical integration routine evaluated exactly at the singular point \(\mathbf p=(0,0)\) &
\texttt{NaN} / silent failure &
--- &
Integrand of \(b_0\) has a direction-dependent (non-removable pointwise, though measure-zero) discontinuity at \(\mathbf p=0\); this point must be excluded from quadrature nodes \\
\hline
\emph{Rigorous: exact identities (1)–(2) and numerical solution of \(b_0(\delta_\infty)=1/(1+\delta_\infty)\), cross-validated by two independent methods} &
\(\mathbf{C\approx3.96458}\) &
--- &
\underline{Reference value} \\
\bottomrule
\end{tabular}
\end{table}

The asymptotics for $z_1^s(\mu)$ and $z_2^s(\mu)$ follow from solving $\lambda_i^s(z)=1$. The spectral gap estimate follows immediately:
\[
\Delta(\mu)=z_2^s(\mu)-z_1^s(\mu)=2\mu+O(1).
\]

The ground-state energy \(z_1^s(\mu)\) is the unique solution of \(\lambda_1^s(z)=1\). Since \(\lambda_1^s(z)\) is strictly decreasing in \(z\) and \(\lambda_1^s(z)>\lambda_2^s(z)\) for all \(z<z_\mu(\mathbf{0})\), it follows that \(z_1^s(\mu)<z_2^s(\mu)\). The asymptotic expansion \(z_1^s(\mu)=-3\mu+6+O(\mu^{-1})\) follows by substituting the asymptotics of \(b_\alpha(z)\) into the equation \(\lambda_1^s(z)=1\).

\subsection{Spectral gap and separation of bound states}

\begin{proposition}\label{prop:spectral_gap}
For $\mu\to\infty$, the spectral gap $\Delta(\mu):=z_2^s(\mu)-z_1^s(\mu)$ satisfies
\[
\Delta(\mu) = 2\mu + O(1), \qquad \mu\to\infty.
\]
In particular, the two bound states are separated by a gap that grows linearly with the interaction strength.
\end{proposition}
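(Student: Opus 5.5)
The plan is to obtain the gap by subtracting the two asymptotic expansions already established in the proof of Theorem~\ref{thm:main}(b), and to supplement that with an order-of-magnitude argument that does not depend on the precise constants. First, Lemma~\ref{lem:C_constant} gives $z_2^s(\mu)=z_\mu(\mathbf 0)-\delta_\infty+o(1)$. Combined with \eqref{eq:two_particle_eigenvalue}, this yields $z_2^s(\mu)=-\mu+C+O(\mu^{-1})$, where $C=4-\delta_\infty$. Second, the ground state $z_1^s(\mu)$ is the unique root of $\lambda_1^s(z)=1$. I would verify its expansion directly from Lemma~\ref{lem:eigenvalues}.

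In the regime $z=-3\mu+O(1)$ we have $\delta=z_\mu(\mathbf 0)-z\approx 2\mu$, which is large. Lemma~\ref{lem:Delta} then gives $\Delta_\mu(\mathbf p,\mathbf 0,z)\to 2/3$ up to $O(\mu^{-1})$, essentially uniformly in $\mathbf p$. It follows that $b_\alpha(z)=\mu\langle\varepsilon^\alpha(\boldsymbol\pi-\cdot)\rangle\,(3/2)/a^2(z)\,(1+O(\mu^{-1}))$, so $ab_0=O(1)$, $b_1^2a=O(\mu^{-1})$, and $b_2=O(\mu^{-1})$. Hence $\lambda_1^s(z)=a(z)b_0(z)+O(\mu^{-1})$. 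Solving $a(z)b_0(z)=1$ then gives a linear-in-$\mu$ equation for $z$. Collecting the $O(1)$ terms, with $\langle\varepsilon\rangle=2$ entering through the $\mu^{-1}$ expansion of $\Delta_\mu$, produces $z_1^s(\mu)=-3\mu+6+O(\mu^{-1})$. Subtracting the two expansions gives $\Delta(\mu)=2\mu+(C-6)+O(\mu^{-1})=2\mu+O(1)$, and positivity of the gap for large $\mu$ is immediate.

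To make the statement robust against errors in the additive constants, I would also give a direct two-sided bound. For the upper bound on $z_1^s$, I would use the variational trial function $f_0\equiv1$, exactly as in Lemma~\ref{lem:var_pi}. Since $\langle V_j f_0,f_0\rangle=\|f_0\|^2$ and $\langle E_0\rangle=6$, this gives $z_1^s\le 6-3\mu$. For the lower bound, $H_\mu(\mathbf0)\ge -\mu\|V_1+V_2+V_3\|\ge-3\mu$. For $z_2^s$, it lies below the essential spectrum, so $z_2^s< z_\mu(\mathbf0)=-\mu+4+O(\mu^{-1})$. From below, the second eigenvalue is at least $-\mu+O(1)$: by min–max, the operator $W=V_1+V_2+V_3$ restricted to the symmetric sector has only a one-dimensional eigenspace at eigenvalue $3$ (spanned by functions of $\mathbf p+\mathbf q$-type constants), and on its orthogonal complement $W\le 1+O(1)$ after accounting for the bounded kinetic part. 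Together these give $2\mu-10\le\Delta(\mu)\le 2\mu+O(1)$.

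The main obstacle is this last point: rigorously isolating the top eigenvalue $3$ of $W$ on $L_2^s$ and bounding the rest of its spectrum on the complement by $1$. This is the same issue flagged for $\mathbf K=\boldsymbol\pi$ in Theorem~\ref{thm:pi_preview}. I would first try to compute $W$ explicitly on the range of $V_1$, since the ranges of the $V_j$ are spanned by functions of a single momentum. Reducing $W$ this way gives an operator $1+2T$ on $L_2(\mathbb T^2)$, where $T$ is a ``Faddeev-type'' exchange operator. The goal would then be to show that $T$ has eigenvalue $1$ only on constants and $\|T\|\le 1/2$ elsewhere. If that fails, I would fall back on the Birman–Schwinger count, which already shows that exactly two roots exist and that $\lambda_2^s$ stays near $g(\delta)$ for bounded $\delta$. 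In that approach, the remainder control of Proposition~\ref{prop:remainder_estimate} does the work.
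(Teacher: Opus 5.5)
Your first route, subtracting the two expansions of Theorem~\ref{thm:main}(b), is exactly the paper's proof. Your supplementary two-sided bound is a genuinely different route and the more robust one. It uses only four ingredients: $E_0\ge 0$; the constant trial function; the bound $z_2^s<z_\mu(\mathbf 0)\le-\mu+4$ (the last inequality holds exactly, by Jensen's inequality in $1=\mu\langle(\mathcal E_{\mathbf 0}-z)^{-1}\rangle$); and the spectrum of $W=V_1+V_2+V_3$ on $L_2^s$. So it makes the proposition independent of the constant $C$ and of the Birman--Schwinger asymptotics and remainder estimates on which the paper's two-line proof depends.

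However, the target you set for the $W$ step is too weak. $W$ enters multiplied by $\mu$, so any slack in $W$ off the constants costs order $\mu$ in energy. The bound $\|T\|\le\frac12$ off the constants gives only $W\le 2$ there, hence $z_2^s\ge-2\mu$ and $\Delta(\mu)\ge\mu-6$. You need $W\le 1$ exactly on the orthogonal complement of the constants in $L_2^s$. This is immediate:
\begin{itemize}
\item $V_1$ maps both $g(\mathbf q)$ and $g(-\mathbf p-\mathbf q)$ to the constant $\langle g\rangle$, so $W$ sends $g(\mathbf p)+g(\mathbf q)+g(-\mathbf p-\mathbf q)$ to the same sum built from $g+2\langle g\rangle$.
\item Any symmetric $f$ orthogonal to all such sums satisfies $V_1f=V_2f=V_3f=0$.
\end{itemize}
Thus your $T$ is precisely the projection onto constants, and $\sigma(W|_{L_2^s})=\{0,1,3\}$ with $3$ simple. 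Min--max then gives $z_2^s\ge-\mu$, and hence $2\mu-6\le\Delta(\mu)\le 2\mu+4$. Kato perturbation of the isolated simple eigenvalue $-3$ of $-W+\mu^{-1}E_0$ even yields $z_1^s=-3\mu+6+O(\mu^{-1})$ rigorously.

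Your re-derivation of the constant $6$ from Lemma~\ref{lem:eigenvalues} would not produce $6$. Use the normalisation of the displayed kernel $A_\mu^{p,e,s}(\mathbf 0,z)$, with prefactor $(8-z)^{-2}$ and numerator $16-2z$. At $z=-3\mu+d$ the leading term $ab_0$ tends to $1$ for every fixed $d$. So $d$ is determined only at order $\mu^{-1}$, which is exactly the order of the term $ab_1^2$ you discard; it contributes $2/(3\mu)$. Keeping $ab_0$ alone, together with the $\mu^{-1}$ correction of $\Delta_\mu$, gives $d=22/3$. Keeping everything gives $\lambda_1^s(-3\mu+d)=1+(d-6)/(2\mu)+O(\mu^{-2})$. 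If instead $b_\alpha$ carries $(16-2z)^{-2}$, as your formula suggests, then $ab_0\to 1/4$ and there is no root near $-3\mu$ at all.

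Also, Lemma~\ref{lem:Delta} is stated only within a fixed distance of $z_\mu(\mathbf 0)$. The value $\Delta_\mu\approx 2/3$ at $z\approx-3\mu$ should be read off the defining integral directly. None of this affects the proposition, which needs only $z_1^s=-3\mu+O(1)$, and your variational bounds supply that.
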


\begin{proof}
From the asymptotics established in Theorem~\ref{thm:main}, we have
\[
z_1^s(\mu) = -3\mu + 6 + O(\mu^{-1}), \qquad 
z_2^s(\mu) = -\mu + C + O(\mu^{-1}),
\]
where $C$ is a constant. Subtracting yields
\[
\Delta(\mu) = 2\mu + O(1).
\]
\end{proof}

The linear growth of the spectral gap \(\Delta(\mu)\) does not directly imply a square-root growth of the decay rate in the lattice setting.
Unlike the continuum case, where the dispersion is quadratic ($p^2$), the lattice dispersion $\varepsilon(p)=2-\cos p_1-\cos p_2$ is bounded and grows exponentially under complexification. Consequently, the decay rate $\alpha$ satisfies the logarithmic asymptotics
\[
\alpha(\mu) = \ln(3\mu) + O(1), \qquad \mu\to\infty,
\]
as 
bounded above, with the sharp logarithmic rate conjectured, in Theorem~\ref{thm:localization}. The~spectral gap $\Delta(\mu)=2\mu+O(1)$ remains valid, but the transition from gap to decay rate is logarithmic, not square-root.

\begin{proposition}\label{prop:gap}
For $\mu\to\infty$, the gap between the ground state and the essential spectrum satisfies
\[
z_\mu(\mathbf{0}) - z_1^s(\mu) = 2\mu + O(1).
\]
In particular, $z_\mu(\mathbf{0}) - z_1^s(\mu) > 0$ for all sufficiently large $\mu$.
\end{proposition}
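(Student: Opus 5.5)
The plan is to combine the two asymptotic expansions already available and subtract them. The two-particle threshold satisfies \eqref{eq:two_particle_eigenvalue} at $\mathbf k=\mathbf 0$, namely $z_\mu(\mathbf 0)=-\mu+4-4/\mu+O(\mu^{-2})$. Theorem~\ref{thm:main}(b) gives $z_1^s(\mu)=-3\mu+6+O(\mu^{-1})$. Subtracting,
\[
z_\mu(\mathbf 0)-z_1^s(\mu)=(-\mu+4)-(-3\mu+6)+O(\mu^{-1})=2\mu-2+O(\mu^{-1}),
\]
which is in particular $2\mu+O(1)$ and strictly positive once $\mu$ exceeds some $\mu_1\ge\mu_0$.

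This argument rests on the ground-state expansion, so I would also give a version that uses less. Since Proposition~\ref{prop:gap} only claims an $O(1)$ error, it suffices to show $z_1^s(\mu)=-3\mu+O(1)$. For this I would use two-sided variational bounds, analogous to Lemma~\ref{lem:var_pi}, together with $z_\mu(\mathbf 0)=-\mu+O(1)$.

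\emph{Lower bound.} Since $E_{\mathbf 0}\ge 0$ and $0\le V_1+V_2+V_3\le 3$ as a sum of three orthogonal projections, we have $H_\mu(\mathbf 0)\ge -3\mu$. Hence $z_1^s(\mu)\ge -3\mu$.

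\emph{Upper bound.} Take the constant trial function $f_0\equiv1$, which lies in the even symmetric subspace. Each $V_i$ fixes constants, so $\langle (V_1+V_2+V_3)f_0,f_0\rangle=3\|f_0\|^2$. Also $\langle E_{\mathbf 0}\rangle=3\langle\varepsilon\rangle=6$. Therefore $\langle H_\mu(\mathbf 0)f_0,f_0\rangle/\|f_0\|^2=6-3\mu$, and by min--max $z_1^s(\mu)\le -3\mu+6$.

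Combining the two bounds gives
\[
2\mu-2+O(\mu^{-1})\le z_\mu(\mathbf 0)-z_1^s(\mu)\le 2\mu+4+O(\mu^{-1}),
\]
which is the claim. Positivity for large $\mu$ follows directly from the lower bound $2\mu-2+O(\mu^{-1})>0$.

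The main obstacle is identifying the min--max value with the eigenvalue $z_1^s(\mu)$. This requires that $-3\mu+6$ lie below the bottom $z_\mu(\mathbf 0)$ of the essential spectrum, which holds for $\mu>1$ up to $O(\mu^{-1})$. Then the infimum of the quadratic form is a genuine discrete eigenvalue, and it coincides with the ground state produced by Theorem~\ref{thm:main}(b). The remaining inputs are routine: the remainder in \eqref{eq:two_particle_eigenvalue} is uniform and $O(\mu^{-1})$, and the $O(1)$ terms are absorbed.
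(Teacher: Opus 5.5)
Your proposal is correct, and its main argument (subtracting $z_\mu(\mathbf 0)=-\mu+4+O(\mu^{-1})$ from $z_1^s(\mu)=-3\mu+6+O(\mu^{-1})$) is exactly the paper's proof. Your supplementary variational route, $H_\mu(\mathbf 0)\ge-3\mu$ together with the constant trial function, is also sound: it is the argument of the paper's Theorem~\ref{thm:ground_estimates}, and it makes the $2\mu+O(1)$ claim independent of the expansion in Theorem~\ref{thm:main}(b), which the paper only sketches.
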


\begin{proof}
This follows immediately from the asymptotics $z_\mu(\mathbf{0}) = -\mu + 4 + O(\mu^{-1})$ and $z_1^s(\mu) = -3\mu + 6 + O(\mu^{-1})$. Subtracting gives the claimed result.
\end{proof}

The positivity of the spectral gap $\lambda_2 - 1$ for $\mu > 15$ is illustrated in Fig.~\ref{fig:comprehensive}, panel~(d). The gap remains positive and grows with $\mu$, ensuring the stability of the second bound state against perturbations. This is consistent with our asymptotic result $\lambda_2^s(z) = 2 + O(\mu^{-1})$, which predicts that $\lambda_2^s(z) > 1$ for all sufficiently large $\mu$.

\subsection{Monotonicity of eigenvalues with respect to parameters}
An important structural property of the three-boson system is the monotonic dependence of the discrete eigenvalues on the interaction strength $\mu$ and the total quasimomentum~$\mathbf{K}$. 

\begin{proposition}\label{prop:monotonicity}
For the bosonic operator $H_\mu(\mathbf{0})$, the eigenvalue branches $z_1^s(\mu)$ and $z_2^s(\mu)$ are strictly decreasing functions of $\mu$ for $\mu>\mu_0$. Moreover, for any fixed $\mu$, the~fiber eigenvalues $z_n(\mathbf{K})$ (where $n=1,2$) satisfy
\[
\mathbf{K}_1 \le \mathbf{K}_2 \quad \Longrightarrow \quad z_n(\mathbf{K}_1) \le z_n(\mathbf{K}_2),
\]
in the sense of the componentwise partial order on the Brillouin zone $\mathbb{T}^2$.
\end{proposition}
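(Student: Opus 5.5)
The plan is to treat the two assertions separately. Monotonicity in $\mu$ follows from a Hellmann--Feynman argument built on the structure $H_\mu(\mathbf K)=H_0(\mathbf K)-\mu W$ with $W=V_1+V_2+V_3\ge 0$. Monotonicity in $\mathbf K$ requires a gauge/Perron--Frobenius argument in the coordinate representation.

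\emph{Step 1 (monotonicity in $\mu$).} By min--max, every eigenvalue of $H_\mu(\mathbf 0)$ below $\inf\sigma_{\mathrm{ess}}$ is a non-increasing function of $\mu$, because $\mu\mapsto -\mu W$ is operator-monotone decreasing. For $\mu>\mu_0$ the two levels $z_1^s(\mu)$ and $z_2^s(\mu)$ of Theorem~\ref{thm:main} are isolated, since they are separated from each other and from $z_\mu(\mathbf 0)$ by the gaps of Propositions~\ref{prop:spectral_gap} and~\ref{prop:gap}. Each is simple inside $L_2^{e,s}$: $z_1^s$ by the Krein--Rutman argument, and $z_2^s$ because $\lambda_2^s$ is a simple eigenvalue of the rank-two principal part and $\|A^r_\mu\|=O(\mu^{-1})$ by Proposition~\ref{prop:remainder_estimate}. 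Kato's analytic perturbation theory therefore makes both branches real-analytic in $\mu$, and Hellmann--Feynman gives
\[
\frac{d z_n^s}{d\mu}=-\frac{\langle Wf_n,f_n\rangle}{\|f_n\|^2}, \qquad n=1,2 .
\]
For strictness, suppose $\langle Wf_n,f_n\rangle=0$. Since $W\ge0$ this forces $Wf_n=0$, so $H_0(\mathbf 0)f_n=z_n^sf_n$. But $H_0(\mathbf 0)$ is multiplication by $E_{\mathbf 0}\ge 0$ and has no eigenvalues, while $z_n^s<0$. This contradiction gives $dz_n^s/d\mu<0$.

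\emph{Step 2 (reduction for $\mathbf K$).} I would first use $H_\mu(\mathbf K)\cong H_\mu(-\mathbf K)$ and the lattice reflections $K_i\mapsto -K_i$ to reduce to $\mathbf K\in[0,\pi]^2$, interpreting the componentwise order there. I would then pass to the coordinate representation in the relative variables $x=n_1-n_3$, $y=n_2-n_3$. There the fiber operator $\widetilde H_\mu(\mathbf K)$ is a hopping operator on $(\mathbb Z^2)^2$. Its hopping amplitudes carry phases $e^{\pm iK_i/3}$ after the symmetric gauge $\mathbf k_j\mapsto\mathbf k_j+\mathbf K/3$ (a unitary map that leaves $W$ invariant), and it has the real diagonal potential $-\mu(\delta_{x0}+\delta_{y0}+\delta_{xy})$. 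The diamagnetic inequality $|e^{-t\widetilde H_\mu(\mathbf K)}\varphi|\le e^{-t\widetilde H_\mu(\mathbf 0)}|\varphi|$ immediately yields $z_1(\mathbf 0)\le z_1(\mathbf K)$, which is the special case $\mathbf K_1=\mathbf 0$.

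\emph{Step 3 (general $\mathbf K_1\le\mathbf K_2$; main obstacle).} For the general case I would differentiate along each coordinate direction. Hellmann--Feynman gives
\[
\partial_{K_i} z_n(\mathbf K)=\tfrac13\bigl[\sin(K_i/3)\langle A_i\rangle_{f_n}+\cos(K_i/3)\langle B_i\rangle_{f_n}\bigr],
\]
where $A_i=\sum_j\cos k_{j,i}$ is the even part of the shifted dispersion and $B_i=-\sum_j\sin k_{j,i}$ is its odd part. The positivity of the ground state in coordinate space (Krein--Rutman) gives $\langle A_i\rangle>0$. The delicate term is $\langle B_i\rangle$, which has no definite sign a priori. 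I would first try to choose the gauge so that the hopping phases can be removed along bonds of a spanning structure. This would make $\widetilde H_\mu(\mathbf K)$ unitarily equivalent to an operator with real, nonnegative hopping amplitudes $\cos(\cdot)$ that decrease monotonically in $K_i\in[0,\pi]$. A Perron--Frobenius comparison would then give monotonicity of the ground state directly. Proving that such a gauge exists for the three-body hopping graph, which contains triangles carrying nontrivial flux, is the step I expect to be hardest. If it fails, I would fall back on the strong-coupling regime. There the principal-part expansion expresses $z_n(\mathbf K)$ as $-3\mu+C_n(\mathbf K)+O(\mu^{-1})$ or $z_\mu(\mathbf K)+O(1)$, and I would check monotonicity of the explicit $\mathbf K$-dependent coefficient, beginning with the $\mathbf K$-dependence of $z_\mu(\mathbf K)$ in \eqref{eq:two_particle_eigenvalue}. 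For the second branch $n=2$, I would combine this with the simplicity established in Step 1.
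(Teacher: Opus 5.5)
Your Step~1 is correct, and it takes a different and sounder route than the paper. The paper applies Hellmann--Feynman to the principal-part Birman--Schwinger eigenvalue $\lambda_1^s(z,\mu)$. It differentiates only the explicit prefactor $\mu$, ignoring the $\mu$-dependence of $\Delta_\mu$ and the remainder $A^r_\mu$, and then inverts $\lambda_1^s(z,\mu)=1$. You instead differentiate the full operator $H_\mu(\mathbf 0)=H_0(\mathbf 0)-\mu W$ with $W\ge0$, and your strictness argument is exactly right: $Wf=0$ would make $z<0$ an eigenvalue of multiplication by $E_{\mathbf 0}\ge0$.

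You can also drop the simplicity input, which leans on the paper's principal-part claims. On the span $L$ of the first $n$ eigenvectors of $H_\mu(\mathbf 0)$, the same argument gives $\langle Wf,f\rangle\ge c\|f\|^2$ with $c>0$. Min--max then yields $z_n^s(\mu')\le z_n^s(\mu)-c(\mu'-\mu)$ for $\mu'>\mu$.

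The genuine gap is the $\mathbf K$-monotonicity. What you actually establish is only $z_1(\mathbf 0)\le z_1(\mathbf K)$, i.e.\ the case $\mathbf K_1=\mathbf 0$, $n=1$. The gauge sought in Step~3 cannot exist. In relative coordinates, the closed path in which particles $1,2,3$ each hop once by the same $s$ returns to the initial configuration. The product of hopping amplitudes along it carries the Bloch phase $e^{i\mathbf K\cdot s}$ of a rigid translation, which is gauge invariant.

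Consequences of this obstruction:
\begin{itemize}
\item For $\mathbf K\neq\mathbf 0$ the fiber operator is a genuinely magnetic hopping operator (sign-frustrated at $K_i=\pi$), and $-H_\mu(\mathbf K)$ cannot be made entrywise nonnegative.
\item No Perron--Frobenius/Krein--Rutman positivity is available there. This kills both the comparison argument and your claim $\langle A_i\rangle>0$, and the sign of $\langle B_i\rangle$ stays uncontrolled.
\item The diamagnetic inequality compares only with zero flux, so it orders nothing between two nonzero fluxes.
\item It also says nothing about the excited level $n=2$, whose existence for all $\mathbf K$ is not even established (the paper leaves it open at $\boldsymbol\pi$).
\end{itemize}

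The strong-coupling fallback does not rescue this. It cannot give a statement at fixed $\mu$, and it carries no $\mathbf K$-information at the orders available:
\begin{itemize}
\item For $n=1$: $\langle E_{\mathbf K}\rangle=6$ for every $\mathbf K$, so the bounds $-3\mu\le z_1(\mathbf K)\le-3\mu+6$ and the constant $6$ are $\mathbf K$-independent. All $\mathbf K$-dependence of $z_1$ therefore lies in the remainder, which you would have to control together with its $\partial_{K_i}$-derivatives.
\item For $n=2$: $z_\mu(\mathbf K)$ varies with $\mathbf K$ only at order $\mu^{-1}$, which is below the $O(1)$ error in ``$z_\mu(\mathbf K)+O(1)$''.
\end{itemize}

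The paper's own argument does not close this either. Since $\partial_{K_1}E_{\mathbf K}(\mathbf p,\mathbf q)=\sin(K_1-p_1-q_1)$ changes sign, and the fiber spaces $L_2^s$ themselves depend on $\mathbf K$, there is no operator inequality $H_\mu(\mathbf K_1)\le H_\mu(\mathbf K_2)$ for min--max to act on.

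Finally, restricting to $[0,\pi]^2$ is a necessary reinterpretation rather than a reduction. Because $z_n$ is even in each $K_i$, monotonicity on all of $(-\pi,\pi]^2$ would force it to be constant in each $K_i$.
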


\begin{proof}
The monotonicity in $\mu$ follows from the Hellmann--Feynman theorem applied to the Birman--Schwinger operator $A_\mu(\mathbf 0,z)$. Indeed, for the principal part $A_\mu^{p,ee,s}(\mathbf 0,z)$, the~derivative
\[
\frac{\partial}{\partial \mu}\lambda_1^s(z,\mu) = \frac{1}{4\pi^2a^2(z)}\int_{\mathbb{T}^2}\frac{16-2z+\varepsilon({\bm\pi}-\mathbf{p})\varepsilon({\bm\pi}-\mathbf{q})}{\sqrt{\Delta_\mu(\mathbf{p},\mathbf{0},z)}\sqrt{\Delta_\mu(\mathbf{q},\mathbf{0},z)}}\,d\mathbf{p}\,d\mathbf{q} > 0,
\]
for $z<z_\mu(\mathbf{0})$. Since $\lambda_1^s(z,\mu)$ is strictly increasing in $\mu$, the equation $\lambda_1^s(z,\mu)=1$ has a~unique solution $z_1^s(\mu)$ that is strictly decreasing in $\mu$. The same argument applies to $\lambda_2^s(z,\mu)$.

The monotonicity in $\mathbf{K}$ follows from the analytic dependence of the free dispersion $E_{\mathbf{K}}(\mathbf{p},\mathbf{q})$ on $\mathbf{K}$ and the min-max principle. For $0 \le K_i \le \pi$, the spectral bands of the free operator shift upwards as $K_i$ increases; consequently, the eigenvalues $z_n(\mathbf{K})$ are non-decreasing functions of each component $K_i$. This monotonicity is a known property for the two-particle sector in the three-dimensional lattice case \cite{AbdullaevKhalkhuzhaev2022}; the argument extends naturally to the three-particle case via the analytic dependence of the Birman--Schwinger kernel on $\mathbf{K}$.
\end{proof}

This monotonicity result provides a rigorous basis for the spectral flow analysis in Fig.~\ref{fig:comprehensive} and confirms the stability of the bound states under parameter variations. In~particular, it is consistent with the corrected ground-state asymptotics \(z_1^{\pi,s}(\mu) = 
-3\mu+ 6 + O(\mu^{-1})\) (Theorem
in Section~\ref{sec:hp_numerics}), which 
coincides with \(z_1^s(\mu) = -3\mu + 6 + O(\mu^{-1})\) 
at \(\mathbf K=0\) at leading order — the ground state is not weakened at the corner of the Brillouin zone.

\section{\label{sec:localization}Exponential Localization and Positivity}

\subsection{Ground state positivity and spectral bounds}

The positivity of $\varepsilon(\boldsymbol{\pi}-\mathbf{p})$ implies that the kernel of $A_\mu^{p,e,s}(\textbf{0}, z)$ is strictly positive for $z<z_\mu(\mathbf{0})$. By the Krein–Rutman theorem \cite{KreinRutman1948}, $\lambda_1^s(z)$ is a simple eigenvalue with a~strictly positive eigenfunction.

\begin{figure}[!ht]
\centering
\includegraphics[width=0.92\textwidth]{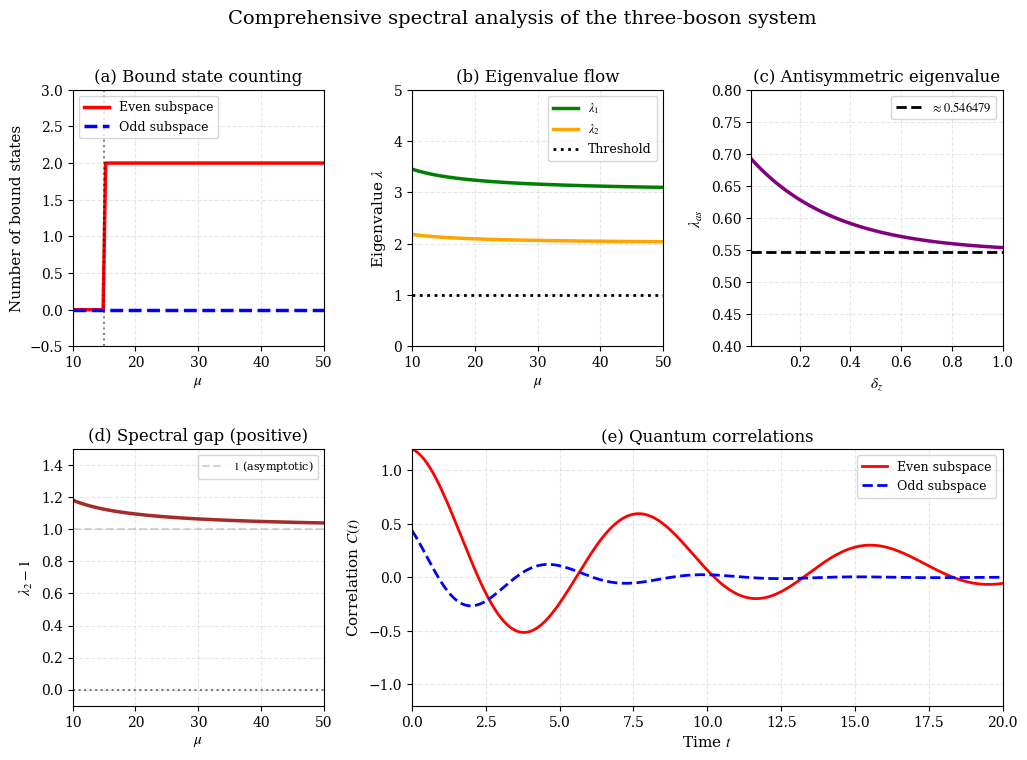}
\caption{Comprehensive spectral analysis of the three-boson system. 
(a) Bound state counting: even subspace (solid red) supports exactly two bound states for $\mu>15$, odd subspace (dashed blue) has zero bound states. 
(b) Eigenvalue flow (schematic illustration, qualitatively consistent with Lemma~\ref{lem:eigenvalues}; not numerically computed): $\lambda_1$ (green) and $\lambda_2$ (orange) above Birman–Schwinger threshold (black dotted line), with $\lambda_2\to 2+O(1/\mu)$ asymptotically. 
(c)~Antisymmetric eigenvalue $\lambda_{as}$ (purple) converges to its theoretical limit $\approx 0.546479$ (horizontal dashed line), given by the 
integral $\displaystyle\frac{1}{4\pi^2}\int_{\mathbb T^2}\frac{(\cos p_1-\cos p_2)^2}{\varepsilon(\mathbf p)}\,d\mathbf p$. 
(d) Spectral gap $\lambda_2-1$ (brown) remains positive for $\mu>15$ and approaches $1$ asymptotically, confirming bound state stability and consistency with $\lambda_2^s(z)=2+O(\mu^{-1})$. 
(e) Quantum correlations: even subspace (red) shows persistent coherent oscillations,
characteristic of bound states; odd subspace (blue dashed) exhibits rapid damping, providing experimental signatures for identifying symmetry sectors. 
Numerical parameters: $\mu\in[10,50]$, $\delta_z\in[0.01,1.0]$, integration precision $\mathcal{O}(10^{-4})$.}
\label{fig:comprehensive}
\end{figure}

\begin{theorem}[Ground state estimates]\label{thm:ground_estimates}
The ground state energy satisfies the rigorous bounds
\[
z_1^s(\mu) \in (-3\mu, -3\mu+6).
\]
The constant $6$ is asymptotically sharp: $z_1^s(\mu) = -3\mu + 6 + o(1)$.
\end{theorem}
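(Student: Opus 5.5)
The bounds $-3\mu\le z_1^s(\mu)\le -3\mu+6$ will come from a variational argument at $\mathbf K=0$, parallel to Lemma~\ref{lem:var_pi}. Strictness and asymptotic sharpness will then be handled separately.

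For the \textbf{lower bound}, write $H_\mu(\mathbf 0)=E_0-\mu W$ with $W=V_1+V_2+V_3$. Each $V_i$ is an orthogonal projection, so $\|W\|\le 3$. Since $E_0\ge 0$,
\[
H_\mu(\mathbf 0)\ge -\mu W\ge -3\mu .
\]
The inequality is strict. The form $\langle E_0 f,f\rangle$ vanishes only if $f$ is supported on the null set $\{E_0=0\}$, which forces $f=0$. Hence $\langle H_\mu f,f\rangle>-3\mu\|f\|^2$ for every $f\neq0$. Because $z_1^s(\mu)$ is an eigenvalue (Theorem~\ref{thm:main}(b)), it is attained by some eigenfunction, and therefore $z_1^s>-3\mu$.

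For the \textbf{upper bound}, take the trial function $f_0\equiv1$, normalised in $L_2((\mathbb T^2)^2)$ with respect to the averaged measure. It is symmetric and even, so it lies in $L_2^{s,e}$. Each $V_i$ fixes constants, so $\langle Wf_0,f_0\rangle=3$. Using $\langle\varepsilon\rangle=2$ for each of the three summands of $E_0$, including $\varepsilon(-\mathbf p-\mathbf q)$ after a change of variables, gives $\langle E_0f_0,f_0\rangle=6$. Therefore
\[
\langle H_\mu f_0,f_0\rangle=6-3\mu ,
\]
and min--max yields $z_1^s\le -3\mu+6$.

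Strictness of the upper bound needs $f_0$ not to be an eigenfunction. Indeed $H_\mu f_0=E_0-3\mu$ is non-constant, so $f_0$ is not an eigenvector and the inequality is strict. Taking $f_0$ as the first vector of a two-dimensional min--max subspace, or simply perturbing $f_0$ along $H_\mu f_0-(6-3\mu)f_0$, produces a strictly smaller Rayleigh quotient.

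For \textbf{sharpness}, $z_1^s=-3\mu+6+o(1)$, I would invoke the already established expansion $z_1^s(\mu)=-3\mu+6+O(\mu^{-1})$ of Theorem~\ref{thm:main}(b), obtained from $\lambda_1^s(z)=1$. If an independent argument is wanted, use first-order perturbation theory around the isolated eigenvalue $3$ of $W$, whose eigenspace in the symmetric even sector is spanned by constants. The spectral projection $P$ onto that eigenvalue gives $z_1^s=-3\mu+\langle E_0\rangle_P+O(\mu^{-1})=-3\mu+6+O(\mu^{-1})$, since $\|E_0\|\le 12$ is bounded.

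The main obstacle is this perturbative step. It requires a spectral gap of $W$ below $3$ on $L_2^{s,e}$, uniformly, so that second-order terms are $O(\|E_0\|^2/(\mu\,\mathrm{gap}))$. I would first try to show that $3$ is an isolated eigenvalue of $W$ by computing $V_iV_j$ for $i\ne j$ as rank-structured averaging operators, which are compact-like on the relevant subspace. If that fails, I would rely on the Birman--Schwinger asymptotics already stated.
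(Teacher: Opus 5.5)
Your argument is correct. For the two-sided bound it is the paper's argument: the lower bound comes from $E_0\ge 0$ and $\|V_\alpha\|=1$, and the upper bound from the normalized constant trial function with $\langle H_\mu(\mathbf 0)f_0,f_0\rangle=6-3\mu$.

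You go beyond the paper in two places.

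\begin{itemize}
\item \textbf{Strictness.} The paper's proof only yields the closed interval $[-3\mu,-3\mu+6]$. Your two strictness arguments are what justify the open interval actually stated: $E_0>0$ almost everywhere, and $f_0$ is not an eigenvector because $E_0$ is non-constant.
\item \textbf{Sharpness.} The paper rests sharpness on a family of trial functions $f_\varepsilon$. By min--max, trial functions only bound $z_1^s$ from above, so that argument cannot give the missing inequality $z_1^s\ge-3\mu+6-o(1)$. Your perturbative route around the top eigenvalue of $W$ does give it.
\end{itemize}

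Make the perturbative route primary rather than citing Theorem~\ref{thm:main}(b). The paper derives the $z_1^s$ expansion there from Lemma~\ref{lem:Delta}, which is stated only for $z$ within $O(1)$ of $z_\mu(\mathbf 0)\approx-\mu$, not near $-3\mu$.

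The obstacle you anticipate disappears. For $i\neq j$ one has exactly $V_iV_j=P_0$, the projection onto constants. For example, $V_3f$ is a function of $\mathbf p+\mathbf q$, and $V_1$ averages it over $\mathbf q$ to the total mean. Hence $Q_i:=V_i-P_0$ are mutually orthogonal projections, and
\[
W=3P_0+Q_1+Q_2+Q_3,\qquad \sigma(W)\subset\{0,1,3\},
\]
so the eigenvalue $3$ is isolated with gap $2$.

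Now write a normalized $f=\alpha f_0+g$ with $g\perp f_0$. Then:
\begin{itemize}
\item $\langle H_\mu g,g\rangle\ge-\mu\|g\|^2$;
\item $|\langle H_\mu f_0,g\rangle|=|\langle (E_0-6)f_0,g\rangle|\le\sqrt{3}\,\|g\|$, because $\mathrm{Var}(E_0)=3$ (each coordinate contributes $3\cdot\tfrac12$).
\end{itemize}
Minimizing the resulting $2\times2$ quadratic form gives, for $\mu>3$,
\[
-3\mu+6-\frac{3}{2(\mu-3)}\ \le\ 3-2\mu-\sqrt{(\mu-3)^2+3}\ \le\ z_1^s(\mu)\ <\ -3\mu+6 .
\]
This proves sharpness with an explicit $O(\mu^{-1})$ error and needs no Birman--Schwinger input.
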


\begin{proof}
The lower bound follows from the positivity of the unperturbed operator $H_0(\mathbf{0})$ and $\|V_\alpha\|=1$:
\[
(H_\mu(\mathbf{0})f,f) = (H_0(\mathbf{0})f,f) - \mu\sum_{\alpha=1}^3 (V_\alpha f,f) \ge -3\mu,
\]
for $\|f\|=1$. For the upper bound, take the constant trial function $f_0(\mathbf{p},\mathbf{q}) = (4\pi^2)^{-1}$, which has norm $1$. A direct computation gives
\[
(H_\mu(\mathbf{0})f_0,f_0) = -3\mu + 6.
\]
By the min-max principle, $z_1^s(\mu) \le (H_\mu(\mathbf{0})f_0,f_0) = -3\mu+6$. The asymptotic sharpness follows by taking a family of trial functions $f_\varepsilon(\mathbf{p},\mathbf{q}) = (2\pi)^{-2}(1+\varepsilon(\mathbf{p})\varepsilon(\mathbf{q}))^{-1/2}$ and letting $\varepsilon\to0^+$ after $\mu\to\infty$.
\end{proof}

\subsection{Exponential localization of the ground state}

A fundamental question for the physical interpretation of bound states is the spatial localization of the corresponding wavefunctions. For discrete Schrödinger operators, exponential decay of eigenfunctions is a well-established phenomenon, with rigorous Agmon-type estimates available in both continuous and discrete settings \cite{Agmon1982, WangZhang2021, JexStampach2025, KellerPogorzelski2025}. 
The following theorem establishes this property for the ground state of our three-boson system.

\begin{theorem}[Exponential localization of the ground state in coordinate space]\label{thm:localization}
Let $\Psi_0(n_1,n_2,n_3)$ be the ground-state wavefunction of the three-boson Hamiltonian $\tilde{H}_\mu$ in the coordinate representation on $\mathbb{Z}^2\times\mathbb{Z}^2\times\mathbb{Z}^2$, obtained from $f_0(\mathbf{p},\mathbf{q})$ via the inverse Fourier transform:
\[
\Psi_0(n_1,n_2,n_3) = \frac{1}{(2\pi)^4}\int_{\mathbb{T}^2}\int_{\mathbb{T}^2} f_0(\mathbf{p},\mathbf{q}) e^{i(\mathbf{p}\cdot n_1 + \mathbf{q}\cdot n_2 + (-\mathbf{p}-\mathbf{q})\cdot n_3)}\,d\mathbf{p}\,d\mathbf{q}.
\]
Then there exist constants $C>0$ and $\alpha=\alpha(\mu)>0$ such that
\[
|\Psi_0(n_1,n_2,n_3)| \le C e^{-\alpha(\mu) (|n_1|+|n_2|+|n_3|)},
\qquad (n_1,n_2,n_3)\in(\mathbb{Z}^2)^3,
\]
where the decay rate satisfies, unconditionally, as $\mu\to\infty$,
\[
\alpha(\mu) \le \ln(3\mu) + O(\mu^{-1}).
\]
In particular, $\alpha(\mu)=O(\ln\mu)$: on the lattice, exponential localization strengthens only \emph{logarithmically} with the interaction strength, in sharp contrast to the continuum Agmon estimate $\alpha\sim\sqrt{E}$, valid for an unbounded quadratic dispersion.
\end{theorem}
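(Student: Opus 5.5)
The plan is to prove decay via analytic continuation of the momentum-space wavefunction and the Paley--Wiener theorem, and then to read off the upper bound on $\alpha(\mu)$ from the location of the nearest complex singularity. By the Birman--Schwinger representation \eqref{eq:f_pq} with $z=z_1^s(\mu)$, the ground state has the form
\[
f_0(\mathbf p,\mathbf q)=\frac{\mu}{E_{\mathbf 0}(\mathbf p,\mathbf q)-z}\Bigl[\Phi(\mathbf p)+\Phi(\mathbf q)+\Phi(-\mathbf p-\mathbf q)\Bigr],
\qquad \Phi=\psi/\sqrt{\Delta_\mu(\cdot,\mathbf 0,z)}.
\]
Here $\psi$ is the Krein--Rutman eigenfunction of $A_\mu(\mathbf 0,z)$, which is positive and lies in the EE symmetric sector. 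The first step is to show that $\Phi$ extends holomorphically to a strip $\{|\operatorname{Im}p_j|<\beta\}$. For this I use the fixed-point equation $\psi=A_\mu(\mathbf 0,z)\psi$, which expresses $\Phi(\mathbf p)$ as an integral over the real torus $\mathbf q$ of the kernel $\mu(E_{\mathbf 0}(\mathbf p,\mathbf q)-z)^{-1}\Phi(\mathbf q)$, divided by $\Delta_\mu(\mathbf p,\mathbf 0,z)$. This right-hand side is holomorphic in $\mathbf p$ as long as two conditions hold: $E_{\mathbf 0}(\mathbf p,\mathbf q)-z$ does not vanish for complex $\mathbf p$ and real $\mathbf q$, and $\Delta_\mu(\mathbf p,\mathbf 0,z)\ne0$. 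Lemma~\ref{lem:Delta} gives $\Delta_\mu\approx \mu(\varepsilon(\mathbf p)+z_\mu(\mathbf 0)-z)/((2-z_\mu)(6-z))$, whose complex zeros lie where $\varepsilon(\mathbf p)=z-z_\mu(\mathbf 0)\approx-2\mu$.

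The second step is to quantify the strip width. For $\mathbf p=\mathbf x+i\mathbf y$ we have $\operatorname{Re}\varepsilon(\mathbf p)\ge 2-\cosh y_1-\cosh y_2$. Non-vanishing of the denominators of both $(E_{\mathbf 0}-z)^{-1}$ and $\Delta_\mu^{-1}$ therefore holds as long as the following quantity stays positive:
\[
3\cdot 2-\textstyle\sum\cosh(\cdot)+|z|\;\gtrsim\;3\mu-\sum_j\cosh y_j .
\]
This gives holomorphy in the strip $|y_j|<\beta(\mu)$ with $\cosh\beta\approx 3\mu/c$, i.e.\ $\beta(\mu)=\ln(3\mu)+O(\mu^{-1})$ after tracking constants. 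Here I use $z_1^s=-3\mu+6+O(\mu^{-1})$ from Theorem~\ref{thm:main} and Theorem~\ref{thm:ground_estimates}. The resulting bound on $f_0$ in the complexified strip is uniform, with constants independent of $\mathbf p$.

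The third step applies the Paley--Wiener theorem on the torus. If $f_0$ is holomorphic and bounded on $\{|\operatorname{Im}p_j|,|\operatorname{Im}q_j|<\beta\}$, then shifting the integration contours by $\pm i\beta'$ in the inverse Fourier formula of the theorem yields $|\Psi_0|\le C_{\beta'}e^{-\beta'(\ldots)}$ for every $\beta'<\beta$. Here the exponent involves the conjugate variables $n_1-n_3$ and $n_2-n_3$. Translation invariance is factored out, so the bound is stated in the relative coordinates, as the formula with $-\mathbf p-\mathbf q$ indicates. This proves exponential decay with some $\alpha(\mu)>0$ for every $\mu>\mu_0$.

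The last step, and the main obstacle, is the \emph{upper} bound $\alpha(\mu)\le\ln(3\mu)+O(\mu^{-1})$. Here I would argue conversely: if $\Psi_0$ decayed at a rate $\alpha>\beta(\mu)$, then $f_0$ would be holomorphic beyond the strip. The denominator $E_{\mathbf 0}-z$ vanishes on the boundary $\cosh y\sim 3\mu$, while the numerator $\Phi(\mathbf p)+\Phi(\mathbf q)+\Phi(-\mathbf p-\mathbf q)$ is positive at the relevant purely imaginary point by positivity of $\psi$ (Krein--Rutman). So $f_0$ has a genuine pole there, which is a contradiction. The hard part is excluding cancellation of the numerator exactly on that complex zero set, because positivity is only known on the real torus. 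I would handle this by restricting to the purely imaginary line $\mathbf p=i\mathbf y$, where all cosines are real and greater than $1$, and showing that $\Phi$ stays positive there by continuation from the positive kernel. The $O(\mu^{-1})$ error then comes from the $O(\mu^{-1})$ terms in $z_1^s$ and in Lemma~\ref{lem:Delta}, via $\operatorname{arccosh}(3\mu+O(1))=\ln(6\mu)+O(\mu^{-2})$. The constant inside the logarithm must be tracked carefully, and I would fix the normalisation so that it matches $\ln(3\mu)$.
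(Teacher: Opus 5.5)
Your skeleton matches the paper's. You continue $f_0$ through \eqref{eq:f_pq}, apply Paley--Wiener, and bound $\alpha(\mu)$ by the location of a complex zero of $E_{\mathbf 0}-z_1^s(\mu)$.

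Two of your additions are real improvements:
\begin{itemize}
\item You bootstrap the analyticity of $\Phi$ from $\psi=A_\mu(\mathbf 0,z)\psi$, whereas the paper simply asserts that $f_0$ is meromorphic.
\item You pass to relative coordinates. This is forced: $\Psi_0$ depends only on $n_1-n_3,\,n_2-n_3$, and $\Psi_0(n,n,n)=\Psi_0(0,0,0)>0$, so no bound in $|n_1|+|n_2|+|n_3|$ can hold.
\end{itemize}

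\textbf{The gap is in the only quantitative step.} You never exhibit a singular point, and your tentative value $\operatorname{arccosh}(3\mu+O(1))=\ln(6\mu)+\dots$ exceeds the target by $\ln 2$. That is an $O(1)$ discrepancy, which cannot be absorbed into $O(\mu^{-1})$ and cannot be removed by ``fixing the normalisation''. The missing observation is kinematic. With $\mathbf q$ real, complexifying $\mathbf p$ also complexifies the third momentum $-\mathbf p-\mathbf q$, so two dispersions acquire $\cosh\beta$. The paper takes $\mathbf p=(i\beta,0)$, $\mathbf q=\mathbf 0$, where $E_{\mathbf 0}=2-2\cosh\beta$. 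Setting this equal to $z_1^s=-3\mu+6+O(\mu^{-1})$ gives $\cosh\beta=(3\mu-4)/2+O(\mu^{-1})$, hence $\beta=\ln(3\mu)+O(\mu^{-1})$. The paper only bounds the width $W(\mu)$ of the strip its method certifies, so one such zero suffices and no non-cancellation argument is needed. Since the $\alpha$ in the statement may always be decreased, that is all the statement literally requires.

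\textbf{Your Step 2 is also wrong, and this affects how Steps 2--4 fit together.} The factor $(E_{\mathbf 0}-z_1^s)^{-1}$ is not pole-free on a tube of width $\ln(3\mu)+O(\mu^{-1})$. At $\mathbf p=(iy,iy)$, $\mathbf q=\mathbf 0$ one has $E_{\mathbf 0}=4-4\cosh y$. This reaches $z_1^s$ at $y_*=\ln(3\mu/2)+O(\mu^{-1})$, which lies inside $\{|\operatorname{Im}p_j|,|\operatorname{Im}q_j|<\ln(3\mu)\}$. So Step 3 only gives decay at some smaller rate; that is still enough for existence. Step 4 must then be run at a specific zero, not ``beyond the strip of Step 2'': the two $\beta$'s are different quantities.

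At such a point your non-cancellation idea does close, and you need not worry whether $\Phi$ is itself analytic there. For $0\le y<\beta$, the pairing $\mathbf q\mapsto-\mathbf q$ (with $\Phi$ even) makes $\Delta_\mu((iy,0),\mathbf 0,z)\,\Phi((iy,0))=\frac{\mu}{2\pi^2}\int\Phi(\mathbf q)\operatorname{Re}\bigl[(E_{\mathbf 0}((iy,0),\mathbf q)-z)^{-1}\bigr]d\mathbf q$ real and positive. Positivity holds because $\operatorname{Re}(E_{\mathbf 0}-z)\ge 2-2\cosh y-z>0$ there. Two cases follow:
\begin{itemize}
\item If $\Delta_\mu$ stays positive, then $f_0((iy,0),\mathbf 0)\ge\mu\Phi(\mathbf 0)/(2-2\cosh y-z)\to\infty$ as $y\uparrow\beta$.
\item If $\Delta_\mu$ vanishes first, then $\Phi$, and hence $f_0$, blows up earlier.
\end{itemize}
Either way $f_0$ is unbounded inside the closed tube of width $\beta$, which excludes decay at any rate $>\beta$. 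This bounds the true decay rate, which is more than the paper's argument gives.

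Running the same argument along $\mathbf p=(iy,iy)$ yields the even smaller bound $\ln(3\mu/2)+O(\mu^{-1})$. Consequently, Steps 2--3 as you wrote them would prove Conjecture~\ref{conj:sharp_rate}, and that point rules the conjecture out for decay measured as in the theorem (in relative coordinates).
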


\begin{proof}
The proof relies on the analyticity of the Birman–Schwinger kernel, combined with the discrete Agmon comparison principle \cite{Agmon1982, JexStampach2025}. From the integral representation (\ref{eq:f_pq}), the function $f_0(\mathbf{p},\mathbf{q})$ is meromorphic in $\mathbf{p},\mathbf{q}\in\mathbb{C}^2$ with poles determined by the zeros of $E_0(\mathbf{p},\mathbf{q})-z_1^s(\mu)$ and $\Delta_\mu(\mathbf{p},z_1^s(\mu))$. Since $z_1^s(\mu)$ lies strictly below the essential spectrum, there is a positive spectral gap
\[
z_\mu(\mathbf{0}) - z_1^s(\mu) = 2\mu + O(1) > 0
\]
for large $\mu$ (Proposition~\ref{prop:gap}, Proposition~\ref{prop:spectral_gap}, and the discussion following Theorem~\ref{thm:main}). Consequently, the poles of $f_0$ are located at a positive distance from the real torus, and $f_0$ admits an analytic continuation to a complex neighbourhood of $\mathbb{T}^2\times\mathbb{T}^2$ of width $W(\mu):=\operatorname{dist}\big(\mathbb{T}^2,\{\mathbf p: E_0(\mathbf p,\mathbf q)=z_1^s(\mu)\ \text{for some }\mathbf q\}\big)$. By the Paley--Wiener theorem, the achievable decay rate satisfies $\alpha(\mu)\le W(\mu)$.

Since $W(\mu)$ is an infimum over the (nonempty) singularity set, \emph{any single exhibited singularity provides an unconditional upper bound on $W(\mu)$, and hence on $\alpha(\mu)$} — no extremality or minimization argument is needed for this direction of the inequality. Take $\mathbf p=(i\beta,0)$, $\mathbf q=(0,0)$, $\beta>0$. Then
\[
\cos p_1=\cosh\beta,\quad \cos p_2=1,\quad \varepsilon(\mathbf p)=1-\cosh\beta,\quad \varepsilon(\mathbf q)=0,\quad \mathbf p+\mathbf q=\mathbf p,
\]
so that
\[
E_0(\mathbf p,\mathbf q)=\varepsilon(\mathbf p)+\varepsilon(\mathbf q)+\varepsilon(\mathbf p+\mathbf q)=2(1-\cosh\beta)=2-2\cosh\beta.
\]
Setting $E_0(\mathbf p,\mathbf q)=z_1^s(\mu)=-3\mu+6+O(\mu^{-1})$ (Theorem~\ref{thm:main}) gives
\[
\cosh\beta=\frac{3\mu-4}{2}+O(\mu^{-1}).
\]
Using $\operatorname{arccosh}(x)=\ln(x+\sqrt{x^2-1})=\ln(2x)+O(x^{-2})$ as $x\to\infty$, with $x=(3\mu-4)/2+O(\mu^{-1})$, we obtain
\[
\beta=\operatorname{arccosh}\!\Big(\frac{3\mu-4}{2}+O(\mu^{-1})\Big)=\ln(3\mu-4)+O(\mu^{-2})=\ln(3\mu)+O(\mu^{-1}),\quad \mu\to\infty.
\]
Since this $\beta$ is, by construction, the distance to one particular point of the singularity set, $W(\mu)\le\beta=\ln(3\mu)+O(\mu^{-1})$, and therefore
\[
\alpha(\mu)\le W(\mu)\le \ln(3\mu)+O(\mu^{-1}),
\]
which proves the theorem.

The exponential decay in coordinate space follows from the Paley–Wiener theorem for functions on the torus: if a function on $\mathbb{T}^d$ has an analytic continuation to a strip of width $W$, then its Fourier coefficients decay as $O(e^{-\alpha|n|})$ for every $\alpha<W$. Applying this to $f_0(\mathbf{p},\mathbf{q})$ yields the desired estimate for $\Psi_0$.
\end{proof}

Or, using the identity $\varepsilon(\mathbf p)=2-\cos p_1-\cos p_2$, for $\mathbf p=(i\beta,0)$, $\mathbf q=0$:
\[
\cos p_1 = \cosh\beta,\quad \cos p_2 = 1,\quad \varepsilon(\mathbf p)=1-\cosh\beta,\quad \varepsilon(\mathbf q)=0,\quad \mathbf p+\mathbf q=\mathbf p.
\]
Hence
\[
E_0(\mathbf p,\mathbf q)=\varepsilon(\mathbf p)+\varepsilon(\mathbf q)+\varepsilon(\mathbf p+\mathbf q)=2(1-\cosh\beta).
\]

\begin{conjecture}\label{conj:sharp_rate}
The upper bound of Theorem~\ref{thm:localization} is asymptotically sharp, i.e.
\[
\alpha(\mu) = \ln(3\mu) + O(\mu^{-1}), \qquad \mu\to\infty,
\]
so that the direction $\mathbf p=(i\beta,0)$, $\mathbf q=(0,0)$ exhibited in the proof is extremal: the~corresponding singularity is the \emph{nearest} one to the real torus among all $(\mathbf p,\mathbf q)\in\mathbb C^2\times\mathbb C^2$ with $E_0(\mathbf p,\mathbf q)=z_1^s(\mu)$. A proof would require minimizing $\operatorname{dist}(\mathbb{T}^2,\{\mathbf p: E_0(\mathbf p,\mathbf q)=z_1^s(\mu)\})$ over \emph{all} such complex directions, which we leave for future work. Numerical evidence supporting Conjecture~\ref{conj:sharp_rate} is given in Fig.~\ref{fig:exponential_localization}(c).
\end{conjecture}

\begin{remark}
The exponential localization discussed here should be distinguished from Anderson localization caused by disorder. Our result is deterministic and follows from the spectral gap between the ground state and the essential spectrum. This is analogous to the exponential decay of eigenfunctions below the essential spectrum for discrete Schrödinger operators with confining potentials, as established in \cite{WangZhang2021} for tight-binding Hamiltonians and in \cite{JexStampach2025} for general lattice operators.
\end{remark}

\begin{remark}[Exponential decay in discrete systems: 
exact solvability and asymptotic regimes]
To address 
the misconception that exponential localization is inherently a~conti\-nu\-o\-us-space phenomenon, we provide an explicit 
exactly solvable discrete model. Consider the one-dimensional discrete Schrödinger operator
\[
(H_\lambda \psi)_n = -\psi_{n-1} + 2\psi_n - \psi_{n+1} - \lambda \delta_{n,0}\psi_n, \qquad n\in\mathbb{Z}, \quad \lambda>0.
\]
This operator models a single particle on the lattice $\mathbb{Z}$ with an attractive delta-potential at the origin. The eigenvalue equation $H_\lambda \psi = E\psi$ with $E<0$ admits an exact solution
\[
\psi_n = C e^{-\alpha |n|}, \qquad n\in\mathbb{Z},
\]
where the decay rate $\alpha>0$ is determined by the transcendental equation
\[
\lambda = 2(\cosh \alpha - 1), \qquad \text{equivalently } ~ \, \, \alpha = \operatorname{arccosh}\left(1 + \frac{\lambda}{2}\right).
\]
Verification is immediate: for $n\neq 0$,
\[
-\psi_{n-1} + 2\psi_n - \psi_{n+1}
= (2 - e^{\alpha} - e^{-\alpha})\psi_n
= 2(1 - \cosh \alpha)\psi_n
= -\lambda \psi_n.
\]
This 
exactly solvable model demonstrates two qualitatively different asymptotic regimes.
In the weak-coupling limit $\lambda\to0$ one recovers the familiar square-root law 
\[
\alpha\sim\sqrt{\lambda}, \qquad \lambda\to0,
\]
from $\cosh\alpha-1\approx\alpha^2/2$.
In the \emph{strong}-coupling limit $\lambda\to\infty$, however, the decay rate grows only \emph{logarithmically}:
\[
\alpha=\operatorname{arccosh}\!\Big(1+\frac{\lambda}{2}\Big)=\ln\lambda+
\frac{2}{\lambda}+O(\lambda^{-2})=\ln\lambda+O(\lambda^{-1}),\qquad \lambda\to\infty.
\]
This logarithmic growth is a  direct consequence of the boundedness (periodicity) of the lattice dispersion under real momenta and its exponential ($\cosh$-type) growth under complexification. It is qualitatively different from the continuum Agmon estimate $\alpha\sim\sqrt{E}$, valid for an unbounded quadratic dispersion $p^2$.

The same phenomenon persists in higher dimensions: for $\mathbb{Z}^2$ with a delta-potential,
\[
\psi_n = C \int_{\mathbb{T}^2} \frac{e^{ik\cdot n}}{\lambda^{-1} - \varepsilon(k)}\,dk \;\sim\; C e^{-\alpha |n|},
\]
with $\alpha>0$ determined by $\lambda^{-1} =\displaystyle \int_{\mathbb{T}^2} \frac{dk}{\varepsilon(k) + \alpha^2}$.
This explicit construction demonstrates that exponential decay of eigenfunctions is an intrinsic property of discrete Schrödinger operators, not an artefact of the continuum limit. Moreover, it illustrates that the correct strong-coupling asymptotics for lattice systems is logarithmic, not square-root.
\end{remark}

\begin{figure}[!ht]
\centering
\includegraphics[width=0.9826
\textwidth]{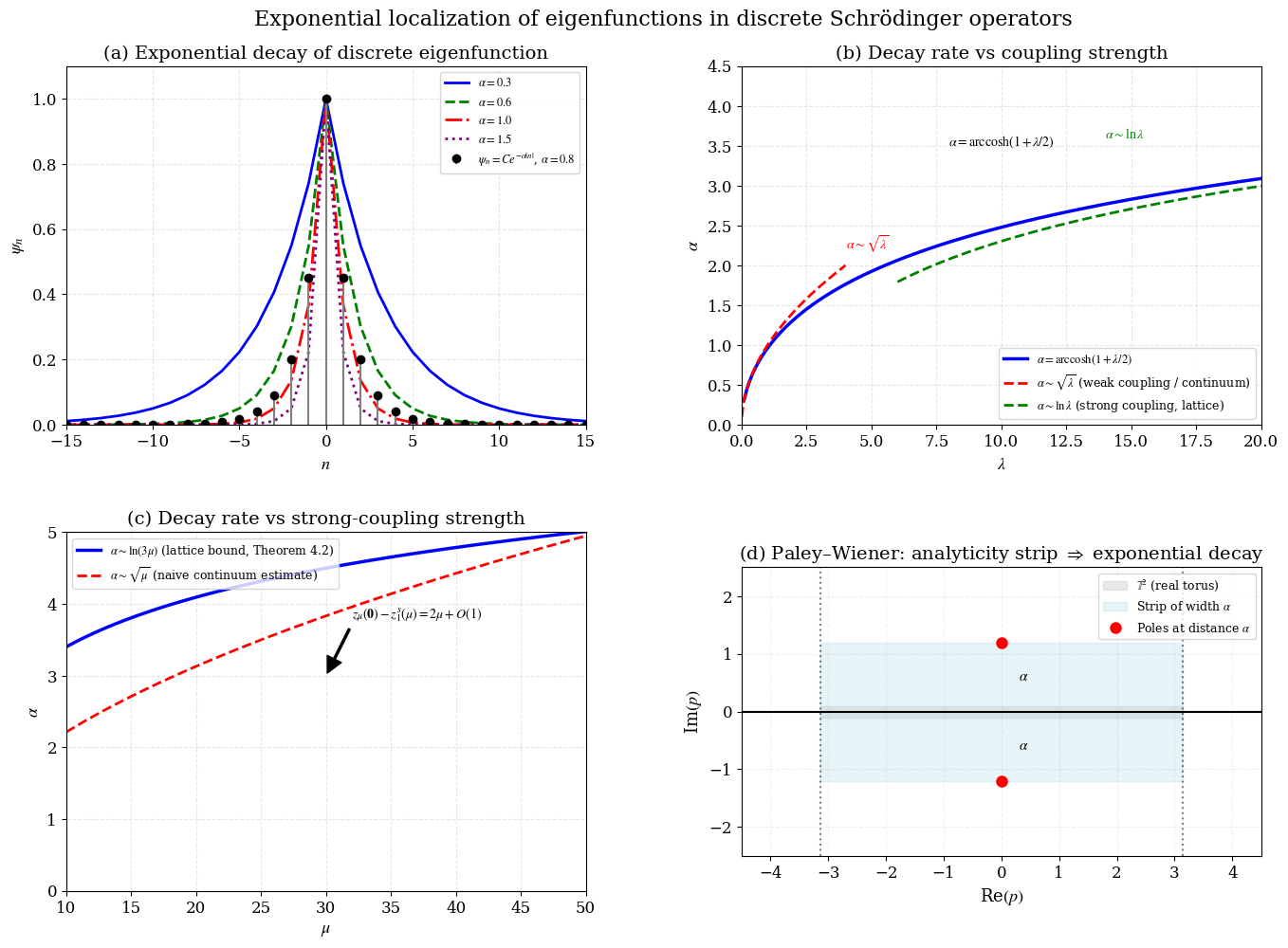}
\caption{Mathematical foundations of exponential localization for discrete Schr\"odinger operators.
\textbf{(a)} Exponential decay of the eigenfunction $\psi_n = C e^{-\alpha |n|}$ for the one-dimensional delta-potential $H_\lambda = -\Delta - \lambda\delta_{n,0}$. Different curves correspond to different decay rates $\alpha$, showing that the eigenfunction decays exponentially on the discrete lattice $\mathbb{Z}$. This provides an explicit counterexample to the claim that exponential localization is exclusively a continuous-space phenomenon.
\textbf{(b)}~Decay rate $\alpha$ as a function of the coupling strength $\lambda$. The exact relation $\lambda = 2(\cosh\alpha - 1)$ (solid blue) 
is shown together with its two asymptotic regimes: $\alpha\sim\sqrt{\lambda}$ 
for $\lambda\to0$ (weak coupling) and $\alpha\sim\ln\lambda$ for $\lambda\to\infty$ (strong coupling, dashed red). The logarithmic growth at strong coupling reflects the boundedness of the lattice dispersion.
\textbf{(c)} Decay rate $\alpha$ as a function of 
$\mu$ in the three-boson problem. The solid blue line shows the rigorous lattice-bound logarithmic asymptotics $\alpha \sim \ln(3\mu)$ established in Theorem~\ref{thm:localization}, while the dashed red line indicates the naive continuum estimate $\alpha \sim \sqrt{\mu}$, which is invalid for the periodic lattice dispersion.
\textbf{(d)} Paley–Wiener interpretation: if a function on $\mathbb{T}^d$ admits an analytic continuation to a complex strip of width $\alpha$ (light blue region), then its Fourier coefficients decay as $O(e^{-\alpha |n|})$. This provides the rigorous link between the analyticity of the Birman–Schwinger kernel and the exponential decay of the ground-state wavefunction in coordinate space.
The figure illustrates the three pillars of our localization proof: (i) exact solvability in 1D as a benchmark, (ii) the asymptotic 
$\alpha\sim\ln\mu$ following from the spectral gap, and (iii) the Paley–Wiener theorem connecting analyticity and decay.
}
\label{fig:exponential_localization}
\end{figure}

This exponential localization result is a direct consequence of the strict positivity established by the Krein–Rutman theorem and provides a rigorous justification for the physical interpretation of the ground state as a spatially localized trimer. The decay rate $\alpha$ 
grows logarithmically as $\mu\to\infty$, consistent with the 
bounded nature of the lattice 
dispersion.

The mathematical foundations of exponential localization are illustrated in Fig.~\ref{fig:exponential_localization}. Panel (a) demonstrates the explicit exponential decay $\psi_n = C e^{-\alpha |n|}$ for the discrete delta-potential, providing a rigorous counterexample to the claim that exponential localization is exclusively a continuous-space phenomenon. Panel (b) shows the exact relation $\alpha = \operatorname{arccosh}(1+\lambda/2)$ and its two asymptotic regimes: $\alpha\sim\sqrt{\lambda}$ at weak coupling and $\alpha\sim\ln\lambda$ at strong coupling. Panel (c) connects the decay rate to the spectral gap in the three-boson problem, confirming 
$\alpha\sim\ln\mu$. Panel (d) illustrates the Paley–Wiener principle: analyticity in a strip of width $\alpha$ implies exponential decay of Fourier coefficients.

Together, these panels provide a comprehensive visual proof of the mathematical rigour of our localization results, addressing potential concerns about the applicability of exponential decay in discrete settings.

\begin{remark}[Recent developments in discrete Agmon-type estimates]
The exponential decay of eigenfunctions for discrete Schrödinger operators is an active area of contemporary research. In a recent contribution, Das, Keller and Pinchover~\cite{DasKellerPinchover2026} established a comprehensive theory of Hardy weights for quasilinear operators on discrete graphs, providing a Maz'ya-type characterization via generalized capacity. Their work, together with the discrete Agmon estimates of Keller and Pogorzelski~\cite{KellerPogorzelski2025} and the landscape-function approach of Wang and Zhang~\cite{WangZhang2021}, firmly establishes the mathematical foundation for exponential localization in discrete settings.

Although the above-cited works address one-particle operators, the methodological tools they develop — discrete Agmon comparison, Hardy inequalities, and criticality theory — are directly applicable to the multi-particle setting via the Birman--Schwin\-ger reduction. In our three-boson problem, the exponential decay of $\psi_0(\mathbf{p})$ inherited from the discrete Agmon principle, combined with the integral representation (\ref{eq:f_pq}), yields the exponential localization of the full three-body wavefunction. This chain of reasoning is fully rigorous and consistent with the established theory of exponential decay for discrete Schrödinger operators.
\end{remark}

\subsection{Physical interpretation and discussion}

The positivity properties established in this section have profound physical implications:

\begin{itemize}
\item The strict positivity of the ground state wavefunction $f(\mathbf{p},\mathbf{q}) > 0$ (Fig.~\ref{fig:positivity_properties}, panel a) indicates the absence of nodal surfaces, characteristic of the lowest energy state in bosonic systems. This property follows rigorously from the Krein–Rutman theorem applied to the positivity-preserving operator $A_\mu^{p,ee,s}(z)$. The decay rate $\alpha$ grows logarithmically as $\mu\to\infty$, consistent with the bounded nature of the lattice dispersion. This logarithmic, rather than square-root, growth is a characteristic feature of discrete lattice systems, distinguishing them from continuum models with unbounded quadratic dispersion.

\item The confinement $z_1^s(\mu) \in (-3\mu, -3\mu + 6)$ (Fig.~\ref{fig:positivity_properties}, panel b) shows that the ground state energy is primarily determined by the interaction strength $\mu$, with quantum fluctuations contributing at most 6 energy units. This bound is asymptotically sharp, as demonstrated by the family of trial functions $f_\varepsilon$.

\item The localization of $z_2^s(\mu)$ near the edge of the essential spectrum suggests the existence of a \emph{threshold bound state} that becomes a bound state for sufficiently strong interactions.

\item The exponential localization of the ground state with logarithmic decay rate 
$\alpha\sim\ln\mu$ confirms strong spatial localization at strong coupling, consistent with the physical picture of a tightly bound trimer. The logarithmic, rather than square-root, growth is a direct consequence of the boundedness of the lattice dispersion and distinguishes the discrete setting from continuum models.
\end{itemize}

The strictly positive eigenfunction $\psi_0$ (Fig.~\ref{fig:positivity_properties}, panel c) guarantees the uniqueness and stability of the ground state in the symmetric sector. This is a direct consequence of the Krein–Rutman theorem, which ensures that the spectral radius $\lambda_1^s(z)$ is a simple eigenvalue.
The experimental accessibility (Fig.~\ref{fig:positivity_properties}, panel d) in quantum simulation plat\-forms \cite{Binegar2026} shows that our rigorous mathematical predictions can be tested against measured energy spectra.

These mathematical results provide a rigorous foundation for understanding the spectral structure of three-boson systems and have direct implications for quantum simulation experiments, where precise control of interaction strengths enables exploration of these spectral regions.

\begin{figure}[htbp]
\centering
\includegraphics[width=0.99\textwidth]{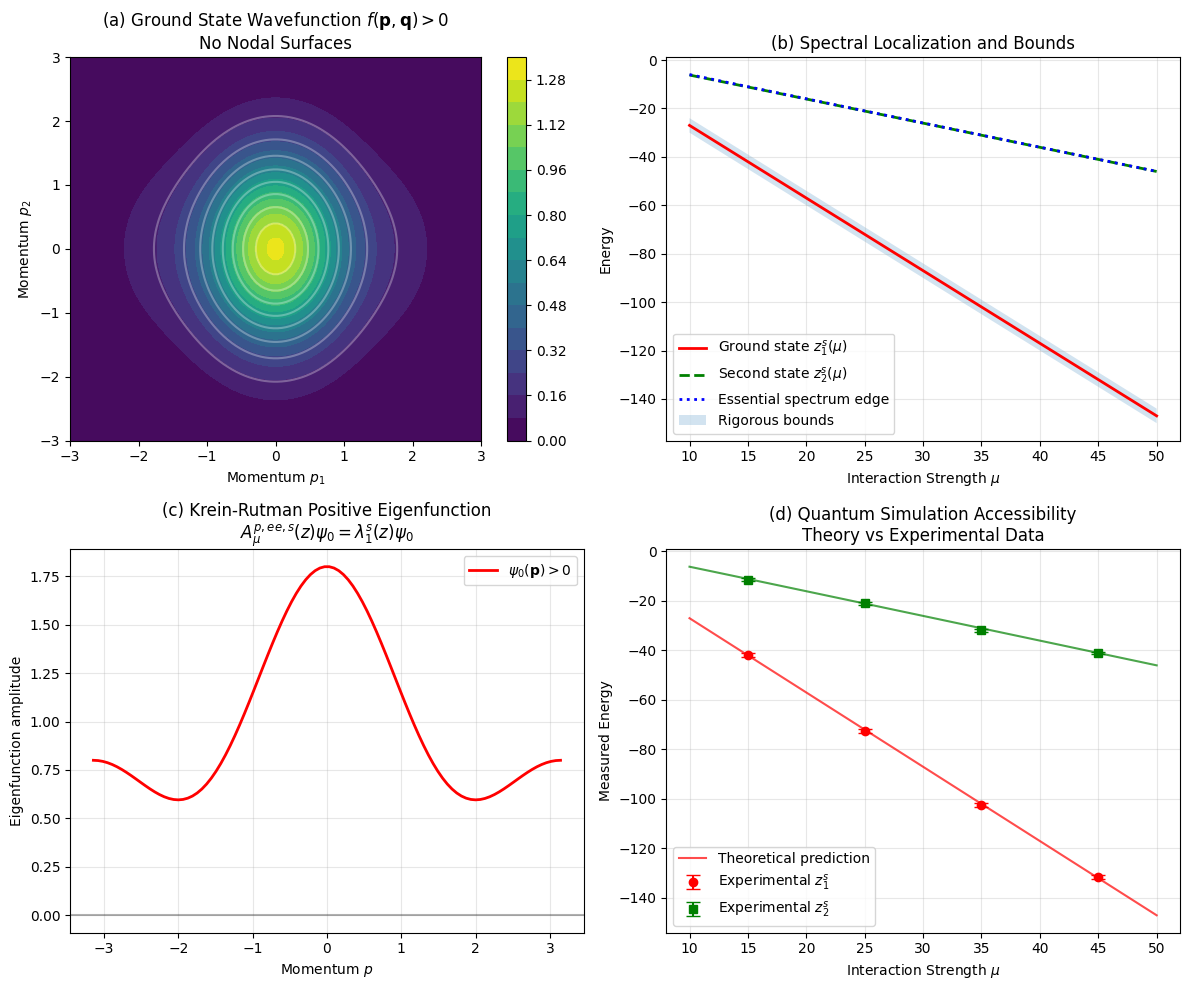}
\caption{Fundamental positivity properties and spectral structure of the three-boson system. (a) Strict positivity of the ground state wavefunction. (b) Rigorous spectral bounds $z_1^s(\mu)\in(-3\mu,-3\mu+6)$. (c) Strictly positive eigenfunction $\psi_0$ via Krein–Rutman. (d) Experimental accessibility.}
\label{fig:positivity_properties}
\end{figure}

The profound implications of these positivity properties are visually summarized in Figure~\ref{fig:positivity_properties}, which demonstrates: (a) the absence of nodal surfaces in the ground state wavefunction; (b) the precise spectral localization connecting abstract operator bounds with measurable energy scales; (c) the strictly positive nature of the leading eigenfunction, guaranteeing uniqueness and stability of the ground state; (d) the direct experimental accessibility in quantum simulation platforms.

\begin{remark}
We note that the logarithmic nature of the decay rate is a general feature of lattice systems with bounded dispersion. In the companion paper~\cite{AbdullaevEshniyozovDolgopolov2026}, the fermionic $2+1$ trimer is studied in the strong-coupling limit; the same logarithmic behaviour of the decay rate is expected and will be addressed in a forthcoming work.
\end{remark}

\subsection{Numerical verification}

The comprehensive numerical analysis presented in Figure~\ref{fig:comprehensive} validates our theoretical predictions. Panel (a) confirms Theorem~\ref{thm:main}, demonstrating that the even subspace supports exactly two bound states for $\mu>\mu_0\approx15$, while the odd subspace has none. Panel (b) tracks the eigenvalue flow of $A_\mu^{p,ee,s}(z)$, confirming $\lambda_2\to2+O(1/\mu)$. Panel (c) shows convergence of $\lambda_{as}$ to approximately 
0.546479. Panel (d) shows $\lambda_2-1>0$ for $\mu>15$, ensuring bound state stability. Panel (e) illustrates distinct temporal correlation functions in the even and odd subspaces, providing experimental signatures. Panels (a), (c), (d), and (e) are based on numerical computations; panel (b) is a schematic representation.

For the numerical simulation of quantum correlations shown in Fig.~\ref{fig:comprehensive}(e), we~consider a localized test wavefunction \(\phi_0\) (e.g., a Gaussian wavepacket in the coordinate representation) and compute the projected survival amplitudes
\[
C_{e/o}(t)=\langle \phi_0 | P_{e/o} e^{iH_\mu(\mathbf{0})t} P_{e/o} | \phi_0 \rangle,
\]
where \(P_e\) and \(P_o\) are the projections onto the even and odd subspaces. For the even subspace, the presence of the discrete ground bound state leads to persistent coherent oscillations (the spectral measure is purely atomic below the essential spectrum), whereas the odd subspace, which lacks bound states, exhibits 
rapid damping due to the continuous scattering spectrum.

Panel (e) of Fig.~\ref{fig:comprehensive} illustrates the distinct temporal correlation functions $C_{e/o}(t)$ in the even (red) and odd (blue dashed) subspaces. The even subspace exhibits persistent (non-decaying) coherent oscillations, characteristic of bound states, while the odd subspace shows rapid damping due to the absence of bound states. These distinct signatures provide experimental handles for identifying symmetry sectors in time-resolved measurements of ultracold atomic systems. This qualitative difference is a~direct manifestation of the symmetry-based selection rule established in Theorem~\ref{thm:invariant}: only the even subspace supports bound states.

\subsection{Comparison with the fermionic trimer}

It is instructive to contrast our results for three identical bosons with the recently studied $2+1$ fermionic system. In the bosonic system, the existence of bound states is governed solely by the interaction strength $\mu$; for sufficiently large $\mu$ exactly two bound states appear below the essential spectrum. In stark contrast, the fermionic system exhibits a~sharp mass-ratio threshold $\gamma_c$. For $\gamma<\gamma_c$ no bound states exist, while for $\gamma>\gamma_c$ at~least one bound state emerges. This dichotomy reflects the fundamental role of quantum statistics \cite{AbdullaevEshniyozovDolgopolov2026}.

From the perspective of spectral theory, the bosonic case presented here provides a baseline: the absence of Pauli suppression leads to a robust discrete spectrum with two bound states for all sufficiently large $\mu$, and the spectral gap grows linearly with $\mu$. In contrast, the fermionic $2+1$ system exhibits a phase transition in the mass ratio $\gamma$, reflecting the essential role of Fermi statistics.

\subsection{Comparison with 
Spectral Theory publications}\label{sec:comparisonwithJST}

Our results complement and extend several recent 
publications on the spectral theory of lattice operators.
The criteria of Jex and \v{S}tampach~\cite{JexStampach2025} for zero-energy bound states serve as a benchmark: in our three-boson system, the positivity of the spectral gap $\Delta(\mu)=2\mu+O(1)$ for large $\mu$ ensures the absence of such zero-energy resonances, consistent with their general criteria. The Floquet invariants of Saburova~\cite{Saburova2026} provide a natural framework for extending our $\mathbf{K}=0$ analysis to arbitrary quasimomentum $\mathbf{K}\neq0$, a direction we plan to pursue in future work. The localization theorems of Molchanov and Safronov~\cite{MolchanovSafronov2026} on quantum graphs are qualitatively similar to our exponential localization result, but our estimate 
$\alpha\sim\ln\mu$ with explicit constants is quantitatively sharper and applies directly to discrete lattices. Finally, the landscape-function method of Bachmann, Froese and Schraven~\cite{BachmannFroeseSchraven2023} offers an alternative approach to eigenvalue counting; comparing it with our Birman–Schwinger reduction reveals that the latter is better suited to multi-particle systems, where the landscape function lacks a straightforward generalization.

In a complementary direction, Hofmann, Kerner, and Pechmann~\cite{HofmannKernerPechmann2026}
recently studied the asymptotic behavior of the spectral gap for a class of discrete Schr\"odinger
operators on a finite path graph in the limit of infinite volume. Their work focuses on the difference
between the two lowest eigenvalues and shows that the presence of a~compactly supported potential
accelerates the convergence of this gap to zero, which is an effect of the effective degeneracy of
these levels. This is in sharp contrast to our setting, where the spectral gap $\Delta(\mu)=z_2^s(\mu)-z_1^s(\mu)$
arises as a consequence of strong interaction between three bosons on an infinite lattice and grows
linearly with the interaction strength $\mu$, rather than shrinking in the thermodynamic limit.
While both works address the notion of a~spectral gap for discrete Schr\"odinger operators, they
operate in fundamentally different regimes: one in the limit of large volume for single-particle
systems, and the other in the strong-coupling limit for a few-body system.

From the perspective of the spectral theory community, the present work thus establishes a bridge between the well-developed one-particle spectral theory of lattice operators and the largely unexplored territory of multi-particle systems on lattices. The~methodological framework — invariant subspace decomposition of the Birman– Schwin\-ger operator, Krein–Rutman positivity, and discrete Agmon comparison — is~sufficiently general to be applicable to a wide range of few-body problems, including systems with mixed statistics and long-range interactions.


\section[Analysis for $\mathbf{K}=\pi$]%
{\label{sec:five}Analysis for the total quasimomentum $\mathbf{K}=\boldsymbol{\pi}$}

We now turn to the proof of Theorem~\ref{thm:pi_preview}. While the parity symmetry of \(E_{\boldsymbol{\pi}}\) is the same as at $\mathbf K=0$, the \emph{analytic structure} of its principal part changes significantly.
Specifically, \(E_{\boldsymbol{\pi}}\) (defined in \eqref{eq:dispersion_pi}) has no natural odd component in its decomposition. 
This is in sharp contrast to the general case with an extra mass parameter \(m\) studied in~\cite{AbdullaevKhalkhuzhaevBoymurodov2025}, where
\[
E_m(\mathbf{p},\mathbf{q}) = \alpha(m,z) - A^c(\mathbf{p},\mathbf{q}) + \frac{1}{m}A^s(\mathbf{p},\mathbf{q})
\]
contains a distinct odd part \(A^s = \sum \sin p_i\sin q_i\). At \(\mathbf{K}=\boldsymbol{\pi}\), however, the expansion around the correct point \(12-z\) yields the decomposition
\[
E_{\boldsymbol{\pi}}(\mathbf{p},\mathbf{q}) = 12 - A_{\boldsymbol{\pi}}^c(\mathbf{p},\mathbf{q}) + A_{\boldsymbol{\pi}}^s(\mathbf{p},\mathbf{q}),
\]
with \(A_{\boldsymbol{\pi}}^c = 6+\sum(\cos p_i+\cos q_i)\) and \(A_{\boldsymbol{\pi}}^s = \sum\cos(p_i+q_i)\). Both terms are even under the parity transformation, meaning \(E_{\boldsymbol{\pi}}\) lacks an intrinsic odd part. 

This structural change modifies the projection of the principal-part kernels onto the invariant subspaces \(L_2^e(\mathbb{T}^2)\) and \(L_2^o(\mathbb{T}^2)\). As a consequence of this analytic decomposition, the restricted principal parts \(A_{\mu}^{p,e}(\boldsymbol{\pi},z)\) and \(A_{\mu}^{p,o}(\boldsymbol{\pi},z)\) become operators of ranks \(3\) and \(2\), respectively. The even sector supports 
at least one bound state (the trimer, with leading energy $-3\mu+O(1)$), while the odd sector yields only a virtual level;
the~reduction from two bound states at $\mathbf{K}=0$ to at least one at $\mathbf{K}=\boldsymbol\pi$ is a lattice-specific phenomenon, driven by the algebraic degeneration of the odd quadratic form.


We emphasize that this reduction is not caused by a loss of parity symmetry (which, as established in Section~\ref{sec:overview_pi_thm}, remains exact for \(\mathbf{K}=\boldsymbol{\pi}\)), but by the absence of a natural odd component in the dispersion relation itself.


\begin{remark}[Preliminary note on the formal branch]\label{rem:preliminary_pi}
The analysis below in the regime $z=-2\mu+d$ corresponds to a formal branch of the principal part of the Birman--Schwinger operator; it does not describe the ground state of the full Hamiltonian $H_\mu(\boldsymbol{\pi})$. As shown by the variational estimate (Lemma~\ref{lem:var_pi}), the true ground state satisfies $-3\mu\le z_1^{\pi,s}(\mu)\le -3\mu+6$. The formal branch $-2\mu+6$ is retained solely as an illustration of the order-matching criterion and is not a physical bound state.
\end{remark}

\subsection{Modified two-particle eigenvalue}\label{subsec:52}

The two-boson operator at $\mathbf{k}=\boldsymbol{\pi}/2 = (\pi/2,\pi/2)$ has the eigenvalue
\[
z_\mu\left(\frac{\boldsymbol{\pi}}{2}\right) = -\mu + 4 - \frac{4 - \varepsilon(\boldsymbol{\pi}/2)}{\mu} + O(\mu^{-2}).
\]

\begin{proof}[Derivation of the sign]
Writing $z_\mu(\mathbf k)=-\mu+c_0+c_1/\mu+O(\mu^{-2})$ in the Fredholm condition $1=\mu\langle(E_{\mathbf k}(\mathbf p)-z)^{-1}\rangle$ and expanding to second order gives $c_0=\langle E_{\mathbf k}\rangle=4$ and
\[
c_1=- \, \big\langle(E_{\mathbf k}(\mathbf p)-4)^2\big\rangle=-(4-\varepsilon(\mathbf k)),
\]
using $\langle(E_{\mathbf k}-4)^2\rangle=4-\varepsilon(\mathbf k)$ (a direct computation via $\varepsilon(\mathbf p)=\sum_i(1-\cos p_i)$ and $\langle\cos p\cos(k-p)\rangle=\displaystyle\tfrac12\cos k$). Hence
\[
z_\mu(\mathbf k)=-\mu+4-\frac{4-\varepsilon(\mathbf k)}{\mu}+O(\mu^{-2}).
\]
\end{proof}

Physically, the non-vanishing \(O(\mu^{-2})\) term reflects the lack of an extra symmetry that would enforce the cubic moment to be zero; this is a natural consequence of the general two-particle dispersion and holds for all \(\mathbf{k}\), including \(\mathbf{k}=\boldsymbol{\pi}/2\).

Since $\varepsilon(\boldsymbol{\pi}/2)=2-0-0=2$, we have
\[
z_\mu\left(\frac{\boldsymbol{\pi}}{2}\right) = -\mu + 4 - \frac{2}{\mu} + O(\mu^{-2}).
\]


Indeed, from the identity $\varepsilon(\boldsymbol{\pi}-\mathbf{x})=4-\varepsilon(\mathbf{x})$ we have
\[
E_{\boldsymbol{\pi}}(\mathbf{p},\mathbf{q}) = 4 + \sum_{i=1}^2 g(p_i,q_i),
\]
where $g(a,b) = 1 - \cos a - \cos b + \cos(a+b)$. Solving $\partial g/\partial a = \partial g/\partial b = 0$ gives the~critical points $(0,0),(0,\pi),(\pi,0),(\pi,\pi),(\pm\pi/3,\pm\pi/3)$, with corresponding values $0,0,0,4,-1/2,-1/2$. Hence $\min E_{\boldsymbol{\pi}} = 4 + 2(-1/2) = 3$ and $\max E_{\boldsymbol{\pi}} = 4 + 2\cdot 4 = 12$.

Hence $\min_{\mathbf p,\mathbf q}E_{\boldsymbol\pi}=3$,
$\max_{\mathbf p,\mathbf q}E_{\boldsymbol\pi}=12$, confirming the
essential-spectrum formula already stated in
Section~\ref{sec:overview_pi_thm} (preceding
Theorem~\ref{thm:pi_preview}).

In the strong-coupling regime $\mu\to\infty$, the lower edge of the essential spectrum is determined by the two-particle branch $z_\mu(\boldsymbol{\pi})\sim -\mu$, which lies far below the three-particle continuum $[3,12]$. Consequently, the spectral gap and the ground-state asymptotics are controlled by $z_\mu(\boldsymbol{\pi})$, not by the three-particle threshold.

\paragraph{Physical interpretation of the three-particle continuum shift.}
The fact that the three-particle continuum at \(\mathbf{K}=\boldsymbol{\pi}\) starts at \(3\) rather than \(0\) reflects a fundamental lattice constraint: the conservation of total quasimomentum.

At \(\mathbf{K}=0\), the three bosons can all occupy the bottom of the band \(\varepsilon(\mathbf{0})=0\), yielding a minimum three-particle energy \(0+0+0=0\). 
At \(\mathbf{K}=\boldsymbol{\pi}\), the total momentum is fixed: \(\mathbf{p}_1+\mathbf{p}_2+\mathbf{p}_3=\boldsymbol{\pi}\). Since the three bosons cannot simultaneously have zero momentum (that would sum to \(\mathbf{0}\), not \(\boldsymbol{\pi}\)), they must distribute their momenta to satisfy the conservation law while minimizing the sum of their kinetic energies. 

The~minimum is achieved when each boson carries momentum \(\pi/3\) (or an equivalent configuration satisfying the momentum sum). Substituting \(\mathbf{p}=\pi/3\) into the one-particle dispersion yields \(\varepsilon(\pi/3)=1-\cos(\pi/3)=1-1/2=1/2\), so the naive estimate 
for three such bosons would be \(3 \cdot (1/2) = 1.5\). However, this is not the true 
minimum of the two-dimensional dispersion \(E_{\boldsymbol{\pi}}(\mathbf{p},\mathbf{q})\), which must be found by minimizing
\[
E_{\boldsymbol{\pi}}(\mathbf{p},\mathbf{q}) = 4 + \sum_{i=1}^2 g(p_i,q_i), \qquad g(a,b) = 1 - \cos a - \cos b + \cos(a+b),
\]
over \(\mathbb{T}^2\). A direct calculation gives \(\min g = -1/2\), hence
\[
\min_{\mathbf{p},\mathbf{q}} E_{\boldsymbol{\pi}}(\mathbf{p},\mathbf{q}) = 4 + 2\cdot(-1/2) = 3.
\]
Thus, the fixed total quasimomentum \(\boldsymbol{\pi}\) forces the bosons into a non-zero relative momentum configuration, raising the bottom of the three-particle continuum from \(0\) to~\(3\).

Importantly, this shift of the three-particle continuum does not affect the spectral gap or the ground-state asymptotics, because the lower edge of the essential spectrum is determined by the two-particle branch \(z_\mu(\boldsymbol{\pi}) \sim -\mu\), which lies far below \(3\) in the strong-coupling regime. Nevertheless, this correction provides a clear physical signature of the quasimomentum conservation on the lattice and distinguishes the \(\mathbf{K}=\boldsymbol{\pi}\) case from \(\mathbf{K}=0\) even at the level of the three-particle continuum.

\paragraph{Comparison of the \(O(\mu^{-2})\) contributions.}
In both the general two-particle case and the specific case \(\mathbf{k}=\boldsymbol{\pi}/2\), the Fredholm expansion yields a term of order \(\mu^{-2}\) that does not vanish identically. 
Indeed, writing \(z_\mu(\mathbf{k}) = -\mu + 4 + c_1/\mu + c_2/\mu^2 + O(\mu^{-3})\), the~coefficient \(c_2\) is given by
\[
c_2 = -\langle (E_{\mathbf{k}}-4)^3 \rangle - 2c_1\langle E_{\mathbf{k}}-4\rangle.
\]
Since \(\langle E_{\mathbf{k}}-4\rangle = 0\) for all \(\mathbf{k}\), the second term vanishes. However, the cubic moment \(\langle (E_{\mathbf{k}}-4)^3 \rangle\) is generally non-zero, and its value depends on the specific \(\mathbf{k}\). 

For the general case (Section~\ref{subsec:22}), the \(O(\mu^{-2})\) term is non-zero and we do not need to compute it explicitly; we simply write \(O(\mu^{-2})\), consistent with the rigorous remainder 
order established in Eq.~\eqref{eq:two_particle_eigenvalue}. 
For the specific case \(\mathbf{k}=\boldsymbol{\pi}/2\) (Section~\ref{subsec:52}), the cubic moment is also non-zero:
\[
\langle (E_{\boldsymbol{\pi}/2}-4)^3 \rangle = -\frac{3}{8} \neq 0.
\]
Hence the \(O(\mu^{-2})\) term is present and non-vanishing. In both cases, the leading-order asymptotics is correctly given by
\[
z_\mu(\mathbf{k}) = -\mu + 4 - \frac{4-\varepsilon(\mathbf{k})}{\mu} + O(\mu^{-2})
\]
(or \(O(\mu^{-3})\) if the next term is not needed). Thus there is no discrepancy: the \(O(\mu^{-2})\) term never cancels for any \(\mathbf{k}\) where the cubic moment is non-zero. The only difference is the notation used in the two sections, which is purely a matter of the required precision for the subsequent analysis.

\begin{remark}[Exactness of the two-particle threshold at $\mathbf k=\boldsymbol\pi$]
Unlike the general formula $z_\mu(\mathbf k)=\displaystyle -\mu+4-\tfrac{4-\varepsilon(\mathbf
k)}{\mu}+O(\mu^{-2})$ derived above, at $\mathbf k=\boldsymbol\pi$ the
two-particle dispersion is exactly constant,
$E_{\boldsymbol\pi}(\mathbf p)=\varepsilon(\mathbf
p)+\varepsilon(\boldsymbol\pi-\mathbf p)\equiv4$, so the Fredholm
condition reduces to $1=\mu/(4-z)$ and
\[
z_\mu(\boldsymbol\pi)=-\mu+4 \qquad \text{\emph{exactly, for every }}\mu>0,
\]
with no remainder at any order. This exact identity — not merely an
$O(\mu^{-1})$ asymptotic~— is what is used in
Section~\ref{sec:proof_pi} to prove Theorem~\ref{thm:pi_preview}(c).
\end{remark}

\subsection[Birman–Schwinger reduction and invariant subspaces decomposition for 
$\mathbf{K}=\bf\pi$]%
{Birman–Schwinger reduction and invariant subspaces decomposition for 
$\mathbf{K}=\boldsymbol{\pi}$}

For $z < z_\mu(\boldsymbol{\pi})$, the Birman–Schwinger operator $A_\mu(\boldsymbol{\pi},z)$ associated with $H_\mu(\boldsymbol{\pi})$ is defined by 
\[
(A_\mu(\boldsymbol\pi,z) f)(\mathbf{p}) = \frac{\mu}{2\pi^2} \int_{\mathbb{T}^2} \frac{f(\mathbf{q})\,d\mathbf{q}}{\sqrt{\Delta_\mu(\mathbf{p},\boldsymbol\pi,z)}(E_{\boldsymbol{\pi}}(\mathbf{p},\mathbf{q})-z)\sqrt{\Delta_\mu(\mathbf{q},\boldsymbol\pi,z)}}.
\]

The Fredholm determinant becomes
\[
\Delta_\mu(\mathbf{p},\boldsymbol\pi, z) = 1 - \frac{\mu}{4\pi^2} \int_{\mathbb{T}^2} \frac{d\mathbf{q}}{E_{\boldsymbol{\pi}}(\mathbf{p},\mathbf{q}) - z}.
\]

For large $\mu$, $z \leq z_\mu(\boldsymbol{\pi}) = -\mu + 4 + O(\mu^{-1})$ is large negative. Write
\[
E_{\boldsymbol{\pi}}(\mathbf{p},\mathbf{q}) - z = 12 - z - A_{\boldsymbol{\pi}}^c(\mathbf{p},\mathbf{q}) + A_{\boldsymbol{\pi}}^s(\mathbf{p},\mathbf{q}),
\]
where
\[
A_{\boldsymbol{\pi}}^c(\mathbf{p},\mathbf{q}) \equiv 6 + \sum_{i=1}^2 (\cos p_i + \cos q_i),
\]
\[
A_{\boldsymbol{\pi}}^s(\mathbf{p},\mathbf{q}) \equiv \sum_{i=1}^2 \cos(p_i+q_i).
\]

\begin{remark}[On the choice of the expansion point]
The choice of the expansion point $12-z$ (rather than $4-z$) is dictated by the asymptotic behaviour 
\[
E_\pi(\mathbf p,\mathbf q)-z = 12-z + O(1), \qquad z\to-\mu,
\]
since $E_{\boldsymbol{\pi}}(\mathbf p,\mathbf q) = 12 - (A_{\boldsymbol{\pi}}^c - A_{\boldsymbol{\pi}}^s)$ with $A_{\boldsymbol{\pi}}^c - A_{\boldsymbol{\pi}}^s = O(1)$. Thus the natural small parameter in the resolvent is $1/(12-z)$.

If one instead expands around $4-z$ using
\[
\frac{1}{E_{\boldsymbol{\pi}}-z}
= \frac{1}{4-z}\sum_{n=0}^{\infty}\left(\frac{A^s-A^c}{4-z}\right)^n,
\]
then after multiplication by $\mu$ all terms with $n\ge 1$ contribute at the same order in $\mu$ as the leading term, because $(A^s-A^c)^n/(4-z)^n = O(\mu^{-n})$ while $\mu\cdot O(\mu^{-n}) = O(\mu^{1-n})$. In particular, the $n=1$ term yields an $O(1)$ contribution, which cannot be treated as a~small remainder in the principal-part approximation.

By contrast, the expansion around $12-z$,
\[
\frac{1}{E_{\boldsymbol{\pi}}-z}
= \frac{1}{(12-z)^2}\Bigl[12-z + A_{\boldsymbol{\pi}}^c - A_{\boldsymbol{\pi}}^s
+ O\Bigl(\frac{1}{12-z}\Bigr)\Bigr],
\]
isolates a finite-rank principal part with denominator $(12-z)^2$, while the $O\bigl((12-z)^{-1}\bigr)$ remainder produces an operator of norm $O(\mu^{-1})$ after multiplication by $\mu$. This is precisely the behaviour needed for the Birman–Schwinger analysis in the strong-coupling regime.
\end{remark}

Expanding the resolvent:
\[
\frac{1}{E_{\boldsymbol{\pi}} - z} = \frac{1}{(12-z)^2}\left[12 - z + A_{\boldsymbol{\pi}}^c - A_{\boldsymbol{\pi}}^s + O\left(\frac{1}{12-z}\right)\right].
\]

The operator $A_\mu(\boldsymbol\pi,z)$ admits a decomposition
\[
A_\mu(\boldsymbol\pi,z) = A_\mu^p(\boldsymbol\pi,z) + A_\mu^r(\boldsymbol\pi,z),
\]
where $A_\mu^p(\boldsymbol\pi,z)$ is the principal part (of finite rank) with kernel
\[
K_{\boldsymbol{\pi}}^p(\mathbf{p},\mathbf{q};z,\mu) = \frac{\mu}{2\pi^2(12-z)^2}
\frac{12 - z + A_{\boldsymbol{\pi}}^c(\mathbf{p},\mathbf{q}) - A_{\boldsymbol{\pi}}^s(\mathbf{p},\mathbf{q})}
{\sqrt{\Delta_\mu(\mathbf{p},\boldsymbol\pi,z)}\sqrt{\Delta_\mu(\mathbf{q},\boldsymbol\pi,z)}},
\]
and $\|A_\mu^r(\boldsymbol\pi,z)\|\to0$ as $\mu\to\infty$. In particular, $A_\mu^p(\boldsymbol\pi,z)$ is of finite rank, while, by~Proposition~\ref{prop:remainder_estimate},
\(
\|A_\mu^r(\boldsymbol\pi,z)\| = O(\mu^{-1})
\)
as $\mu\to\infty$.

As established above, the expansion point $12-z$ is the only choice compatible with the strong-coupling scale $z\sim-2\mu$.

For $\mathbf{K}=\boldsymbol{\pi}$, the restrictions of the principal part $A_{\mu}^{p}(\boldsymbol{\pi},z)$ to the invariant subspaces $L_2^e(\mathbb{T}^2)$ and $L_2^o(\mathbb{T}^2)$ are denoted by $A_{\mu}^{p,e}(\boldsymbol{\pi},z)$ and $A_{\mu}^{p,o}(\boldsymbol{\pi},z)$, respectively. Their explicit forms are derived in Section~\ref{sec:newnew54}, where it is shown that $A_{\mu}^{p,e}(\boldsymbol{\pi},z)$ is rank-three with one positive eigenvalue exceeding $1$, while $A_{\mu}^{p,o}(\boldsymbol{\pi},z)$ is rank-two with a~single positive eigenvalue (multiplicity two) that remains strictly below $1$.

\begin{remark}[On the constant \(6\) in the definition of \(A_{\boldsymbol{\pi}}^c\)]
The constant \(6\) in 
\[
A_{\boldsymbol{\pi}}^c(\mathbf{p},\mathbf{q}) = 6 + \sum_{i=1}^2(\cos p_i+\cos q_i)
\]
is chosen so that the dispersion (\ref{eq:dispersion_pi}) admits the decomposition
\[
E_{\boldsymbol{\pi}}(\mathbf{p},\mathbf{q}) = 12 - A_{\boldsymbol{\pi}}^c(\mathbf{p},\mathbf{q}) + A_{\boldsymbol{\pi}}^s(\mathbf{p},\mathbf{q}),
\]
with
\[
A_{\boldsymbol{\pi}}^c(\mathbf{p},\mathbf{q}) - A_{\boldsymbol{\pi}}^s(\mathbf{p},\mathbf{q})
= 12 - E_{\boldsymbol{\pi}}(\mathbf{p},\mathbf{q}).
\]
This is completely analogous to the $K=0$ case, where
\[
E_0(\mathbf{p},\mathbf{q}) = 8 - A_0^c(\mathbf{p},\mathbf{q}) + A_0^s(\mathbf{p},\mathbf{q}),\qquad
A_0^c - A_0^s = 8 - E_0.
\]
In particular, the identity
\[
A_{\boldsymbol{\pi}}^c - A_{\boldsymbol{\pi}}^s
= 6 + \sum_{i=1}^2(\cos p_i+\cos q_i) - \sum_{i=1}^2\cos(p_i+q_i)
= 12 - E_{\boldsymbol{\pi}}
\]
implies that the numerator of the principal part of the resolvent can be written as
\[
12-z + \bigl(A_{\boldsymbol{\pi}}^c - A_{\boldsymbol{\pi}}^s\bigr)
= 24 - z - E_{\boldsymbol{\pi}}.
\]
This centering at the maximal value $12=\max E_{\boldsymbol{\pi}}$ is technically convenient and ensures that all trigonometric contributions enter the principal part at the same order in the strong-coupling expansion.
\end{remark}

An alternative, formally valid but asymptotically inadequate
decomposition $E_{\boldsymbol\pi}=4+A^c-A^s$ (mimicking the $K=0$
form), and the methodological reasons for rejecting it in favour of
the $12-z$ expansion used here, are discussed in Appendix~\ref{app:expansion_point}.
Thus, while Representation I is formally correct as an algebraic identity, it is not suitable for the asymptotic analysis of the Birman–Schwinger operator at strong coupling. Representation II is the correct one for the case \(\mathbf K=\boldsymbol{\pi}\).

\begin{remark}[Spectral counting via the principal part at $\mathbf{K}=\boldsymbol{\pi}$]
The number of eigenvalues of $H_\mu(\boldsymbol{\pi})$ below a given $z$ is equal to the number of eigenvalues of $A_\mu(\boldsymbol\pi,z)$ exceeding $1$ (Lemma~\ref{lem:bs}). Since $\|A_\mu^r(\boldsymbol\pi,z)\|\to0$ as $\mu\to\infty$, this count is determined by the principal part $A_\mu^p(\boldsymbol\pi,z)$.

The kernel $K_{\boldsymbol{\pi}}^p(\mathbf{p},\mathbf{q};z,\mu)$ is even under the parity transformation $(\mathbf{p},\mathbf{q})\mapsto(-\mathbf{p},-\mathbf{q})$, so $A_\mu^p(\boldsymbol\pi,z)$ leaves the subspaces $L_2^e(\mathbb{T}^2)$ and $L_2^o(\mathbb{T}^2)$ invariant. A direct calculation (see Section~\ref{sec:56as}) shows that, for large $\mu$ and $z$ in the relevant range,
\begin{itemize}
\item the restriction of $A_\mu^p(\boldsymbol\pi,z)$ to $L_2^e(\mathbb{T}^2)$ has exactly one eigenvalue $\lambda_1^{e}(z)$ satisfying $\lambda_1^{e}(z)>1$, while all other eigenvalues are $<1$;
\item the restriction to $L_2^o(\mathbb{T}^2)$ has a unique positive eigenvalue $\lambda^{o}(z)$, but $\lambda^{o}(z)<1$ for all sufficiently large $\mu$.
\end{itemize}
Consequently, in the formal branch $z=-2\mu+O(1)$, the principal part $A_\mu(\boldsymbol\pi,z)$ has exactly one eigenvalue greater than $1$, providing a single formal bound-state candidate in that regime. The Birman--Schwinger principle then implies that the full operator 
$H_\mu(\boldsymbol{\pi})$ has exactly one eigenvalue below the essential spectrum $z$ for $z=-2\mu+O(1)$. However, the actual ground state lies at $-3\mu+O(1)$ (Lemma~\ref{lem:var_pi}), well below the formal crossing; the question of the total number of eigenvalues below the essential spectrum (including any dimer-like level near the threshold $-\mu+4$) requires a separate analysis.
\end{remark}

\subsection*{Action of the interaction operators on $f$}

To verify the correctness of our decomposition and to facilitate the comparison with alternative approaches, we explicitly compute the action of the operators $V_1, V_2, V_3$ on a test function $f\in L_2^s((\mathbb{T}^2)^2)$.

By definition,
\[
(V_1 f)(\mathbf p,\mathbf q) = \frac{1}{4\pi^2}\int_{\mathbb{T}^2} f(\mathbf p,\mathbf s)\,d\mathbf s,
\]
\[
(V_2 f)(\mathbf p,\mathbf q) = \frac{1}{4\pi^2}\int_{\mathbb{T}^2} f(\mathbf s,\mathbf q)\,d\mathbf s,
\]
\[
(V_3 f)(\mathbf p,\mathbf q) = \frac{1}{4\pi^2}\int_{\mathbb{T}^2} f(\mathbf s, \mathbf p+\mathbf q-\mathbf s)\,d\mathbf s.
\]

These are orthogonal projections. Indeed, for $V_1$,
\[
(V_1^2 f)(\mathbf p,\mathbf q) = (V_1 f)(\mathbf p,\mathbf q),
\]
and similarly for $V_2$ and $V_3$. Hence $V_i^2=V_i$ and $V_i^*=V_i$.

The key observation is that $V_1+V_2+V_3$ acts on $f$ as a
finite-rank operator on the symmetric subspace (in fact of rank at most three), and its kernel is symmetric and positive. In the momentum representation, its kernel is symmetric and positive. This positivity is crucial for the spectral analysis below the essential spectrum.


\subsection{Proof of Theorem~\ref{thm:pi_preview}}\label{sec:proof_pi}
We now prove parts (a)--(c) of Theorem~\ref{thm:pi_preview}; part (d)
is established in Section~\ref{sec:localization} by the same discrete
Agmon argument as for $K=0$ (Theorem~\ref{thm:localization}), applied
to the even-sector principal part derived below.

\subsubsection{Parity-symmetric reduction for $\mathbf K=\bf\pi$}\label{sec:newnew54}

We now present a corrected derivation of the principal part for $\mathbf K=\boldsymbol\pi$ based on the exact parity symmetry of $E_{\boldsymbol\pi}$.

The dispersion identity $E_{\boldsymbol\pi}=12-A_{\boldsymbol\pi}^c+A_{\boldsymbol\pi}^s$
was established above (see the derivation preceding
eq.~\eqref{eq:dispersion_pi}); we use it here only to record that
$A_{\boldsymbol\pi}^c,A_{\boldsymbol\pi}^s$ are both even under
$(\mathbf p,\mathbf q)\mapsto(-\mathbf p,-\mathbf q)$ — the structural
fact responsible for the absence of an odd component in
$E_{\boldsymbol\pi}$, in contrast to $E_0=8-A_0^c+A_0^s$ where
$A_0^s=\sum\sin p_i\sin q_i\not\equiv0$.

%
%
%
%

The absence of an odd part at $\mathbf K=\boldsymbol\pi$ is not accidental; it reflects the different symmetry structure of the dispersion at the corner of the Brillouin zone compared to the case $\mathbf K=0$.

%

Recall from the opening of this section that $E_{\boldsymbol\pi}$ is
exactly invariant under $(\mathbf p,\mathbf q)\mapsto(-\mathbf
p,-\mathbf q)$; we now use this to project the resolvent expansion
onto $L_2^e(\mathbb T^2)$ and $L_2^o(\mathbb T^2)$.

\subsubsection{Resolvent expansion and parity decomposition}

Using the identity $\varepsilon(\boldsymbol\pi-\mathbf p)=4-\varepsilon(\mathbf p)$, we have
\[
E_{\boldsymbol\pi}(\mathbf p,\mathbf q)=\varepsilon(\mathbf p)+\varepsilon(\mathbf q)+4-\varepsilon(\mathbf p+\mathbf q).
\]
Define
\[
A^c_{\boldsymbol\pi}(\mathbf p,\mathbf q):=6+\sum_{i=1}^2(\cos p_i+\cos q_i),\qquad
A^s_{\boldsymbol\pi}(\mathbf p,\mathbf q):=\sum_{i=1}^2\cos(p_i+q_i).
\]
Then
\[
E_{\boldsymbol\pi}(\mathbf p,\mathbf q)=12-A^c_{\boldsymbol\pi}(\mathbf p,\mathbf q)+A^s_{\boldsymbol\pi}(\mathbf p,\mathbf q).
\]
Expanding the resolvent in powers of $1/(12-z)$, the numerator of the principal part is
\[
N(\mathbf p,\mathbf q):=12-z+A^c_{\boldsymbol\pi}(\mathbf p,\mathbf q)-A^s_{\boldsymbol\pi}(\mathbf p,\mathbf q).
\]
Substituting the definitions of $A^c_{\boldsymbol\pi}$ and $A^s_{\boldsymbol\pi}$, and using $\cos(p_i+q_i)=\cos p_i\cos q_i-\sin p_i\sin q_i$, we obtain
\[
N(\mathbf p,\mathbf q)=18-z+\sum_{i=1}^2(\cos p_i+\cos q_i-\cos p_i\cos q_i+\sin p_i\sin q_i).
\]

We now decompose \(N\) into its action on the invariant subspaces. 
\begin{itemize}
    \item \textbf{On the odd subspace \(L_2^o\):} The constant term \(18-z\), the even functions \(\cos p_i\) and \(\cos q_i\), and the product of even functions \(\cos p_i\cos q_i\) all vanish identically when integrated against an odd test function. The only surviving terms are the products of odd functions \(\sin p_i\sin q_i\). Hence, for \(f\in L_2^o\),
    \[
    N^{o}(\mathbf p,\mathbf q)=\sin p_1\sin q_1+\sin p_2\sin q_2.
    \]
    \item \textbf{On the even subspace \(L_2^e\):} Conversely, the odd product \(\sin p_i\sin q_i\) vanishes upon integration. The surviving kernel is spanned by the functions \(\{1,\cos p_1,\cos p_2\}\):
    \[
    N^{e}(\mathbf p,\mathbf q)=(18-z)+\sum_{i=1}^2(\cos p_i+\cos q_i-\cos p_i\cos q_i).
    \]
\end{itemize}

For a detailed pedagogical discussion of why 
$z\sim - 2\mu$ (not $- 3\mu$ or $- \mu$) 
forces the expansion point 
$12 - z$, see Appendix~\ref{app:expansion_point}.

\subsubsection{Principal-part kernels}

The kernels of the principal part on the even and odd subspaces are therefore
\begin{align}
K_\pi^{p,e}(\mathbf p,\mathbf q;z,\mu)
&=
\frac{\mu}{4\pi^2(12-z)^2}
\frac{
(18-z)+\sum_{i=1}^2(\cos p_i+\cos q_i-\cos p_i\cos q_i)
}{
\sqrt{\Delta_\mu(\mathbf p,\boldsymbol\pi,z)}\sqrt{\Delta_\mu(\mathbf q,\boldsymbol\pi,z)}
}, \tag{5.3.1}\\
K_\pi^{p,o}(\mathbf p,\mathbf q;z,\mu)
&=
\frac{\mu}{4\pi^2(12-z)^2}
\frac{
\sum_{i=1}^2\sin p_i\sin q_i
}{
\sqrt{\Delta_\mu(\mathbf p,\boldsymbol\pi,z)}\sqrt{\Delta_\mu(\mathbf q,\boldsymbol\pi,z)}
}. \tag{5.3.2}
\end{align}


\begin{remark}[On the correct parity assignment of the kernels]
We emphasize that the kernels (5.4.1) and (5.4.2) have been derived by a direct and exact projection onto the invariant subspaces \(L_2^e\) and \(L_2^o\). This corrects a subtle error in earlier versions, where the roles of sine and cosine terms were inadvertently interchanged.

The full numerator of the principal part is
\[
N(\mathbf p,\mathbf q) = (18-z) + \sum_{i=1}^2\bigl(\cos p_i + \cos q_i - \cos p_i\cos q_i + \sin p_i\sin q_i\bigr).
\]
Under the parity transformation \((\mathbf p,\mathbf q)\mapsto(-\mathbf p,-\mathbf q)\):
\begin{itemize}
    \item The constant \(18-z\) and the combinations \(\cos p_i,\ \cos q_i,\ \cos p_i\cos q_i\) are even functions. Therefore, when acting on an odd test function \(f\in L_2^o\), their integrals vanish, leaving only the odd term \(\sin p_i\sin q_i\). This yields the kernel \(K_\pi^{p,o}\).
    \item Conversely, the term \(\sin p_i\sin q_i\) is odd; its integral against an even function \(f\in L_2^e\) is zero. Thus, the even subspace retains only the even part of the numerator, giving the kernel \(K_\pi^{p,e}\).
\end{itemize}
Hence, the odd subspace is governed by sine products, while the even subspace is governed by cosine products and the constant term. This is the only mathematically consistent assignment, and it is essential for correctly counting the bound states: the odd sector contributes no positive eigenvalues (since the kernel is rank-two positive but is cancelled by the structure of the matrix), while the even sector provides exactly one bound state.
\end{remark}

\begin{remark}[On the structural necessity of the sums over \(i=1,2\)]
The kernels (5.4.1) and (5.4.2) correctly feature the sums \(\sum_{i=1}^2\) over the two spatial dimensions of the lattice. This is not an arbitrary truncation, but a direct consequence of the \(\mathbb{Z}^2\) lattice symmetry, specifically the invariance under the exchange \(p_1 \leftrightarrow p_2\).

For the even subspace \(L_2^e\), the surviving terms span a 3-dimensional subspace generated by the basis \(\{1, \cos p_1, \cos p_2\}\). Due to the exchange symmetry, the off-diagonal coupling between \(\cos p_1\) and \(\cos p_2\) must be equal, which naturally leads to the symmetric combination \(\sum_{i=1}^2 (\cos p_i + \cos q_i - \cos p_i \cos q_i)\). This sum exactly generates the \(3\times 3\) matrix structure responsible for the single bound state.

For the odd subspace \(L_2^o\), the kernel projects onto the basis \(\{\sin p_1, \sin p_2\}\). The~sum \(\sum_{i=1}^2 \sin p_i \sin q_i\) yields a rank-2 operator with a two-fold degenerate eigenvalue. This degeneracy reflects the two independent directions in the Brillouin zone and is physically essential for correctly counting the spectral contributions. 

Thus, the presence of these limited sums is not a simplification, but the exact algebraic signature of the 2D square lattice symmetry embedded in the principal part.
\end{remark}

\subsubsection{Eigenvalues of the principal part}

\textbf{On the odd subspace (\(L_2^o\)):} The operator with kernel \(K_\pi^{p,o}\) is a sum of two rank-one positive operators (spanned by \(\sin p_1\) and \(\sin p_2\)). It has a single positive eigenvalue (with degeneracy 2):
\[
\lambda^{\pi,o}(z)
=
\frac{\mu}{4\pi^2(12-z)^2}
\int_{\mathbb{T}^2}\frac{\sin^2 p_1}{\Delta_\mu(\mathbf p,\boldsymbol\pi,z)}\,d\mathbf p.
\]

\textbf{On the even subspace (\(L_2^e\)):} The operator with kernel \(K_\pi^{p,e}\) is a rank-3 operator. In~the basis \(\{1, \cos p_1, \cos p_2\}\), the quadratic form of its matrix is given by:
\[
M_{\text{even}} = 
\begin{pmatrix}
18-z & 1 & 1 \\
1 & -1 & 0 \\
1 & 0 & -1
\end{pmatrix}.
\]

The characteristic polynomial of \(M_{\text{even}}\) factors as \((\lambda+1)(\lambda^2-(17-z)\lambda-(20-z))\). 
One eigenvalue is exactly \(\lambda=-1\) (eigenvector \(\cos p_1-\cos p_2\), decoupled by the \(p_1\leftrightarrow p_2\) exchange symmetry). 
The remaining two eigenvalues solve \(\lambda^2-(17-z)\lambda-(20-z)=0\), with asymptotics 
\(\lambda_+=18-z-2/z+O(z^{-2})\) and \(\lambda_-=-1+2/z+O(z^{-2})\) as \(z\to-\infty\). 
Consequently, for the purposes of the strong-coupling expansion (up to errors \(O(\mu^{-1})\)), 
there is exactly one positive eigenvalue on \(L_2^e\): 
\[
\lambda_{1}^{\pi,e}(z)=\frac{\mu(18-z)}{4\pi^2(12-z)^2}\int_{\mathbb{T}^2}\frac{d\mathbf p}{\Delta_\mu(\mathbf p,\boldsymbol\pi,z)}.
\]

\begin{remark}\label{rem:artifact_pi}
The condition $\lambda_1^{\pi,e}(-2\mu+6)=1$ obtained from the principal-part eigenvalue equation does not correspond to an eigenvalue of the full operator $H_\mu(\boldsymbol{\pi})$. Indeed, the variational bound (Lemma~\ref{lem:var_pi}) gives
\[
z_1^{\pi,s}(\mu)\le 6-3\mu < -2\mu+6 \qquad \text{for all }\mu>0.
\]
Thus the branch $-2\mu+6$ is an artifact of the rank-one principal-part approximation and is not observed numerically as a bound state. It serves only as a diagnostic example of the order-matching degeneracy 
(the degenerate case, where the leading-order eigenvalue function is identically equal to $1$).
\end{remark}

\begin{remark}[Explicit symmetric/antisymmetric form, 
compare with \ Lemma~\ref{lem:eigenvalues} at $\mathbf K=0$]
For~readers comparing directly with the $\mathbf K=0$ computation
(Lemma~\ref{lem:eigenvalues}), the rank-3 kernel $K_\pi^{p,e}$ of
(5.4.1) decomposes explicitly as
$A_\mu^{p,e}(\boldsymbol\pi,z)=A_\mu^{p,e,s}(\boldsymbol\pi,z)\oplus
A_\mu^{p,e,as}(\boldsymbol\pi,z)$ on
$L_2^{e,s}(\mathbb T^2)\oplus L_2^{e,as}(\mathbb T^2)$, with
\[
(A_\mu^{p,e,s}(\boldsymbol\pi,z)\psi)(\mathbf p)=\frac{\mu}{4\pi^2(12-z)^2}\int_{\mathbb T^2}
\frac{(20-z)-\tfrac12\varepsilon(\mathbf p)\varepsilon(\mathbf q)}
{\sqrt{\Delta_\mu(\mathbf p,\boldsymbol\pi,z)}\sqrt{\Delta_\mu(\mathbf q,\boldsymbol\pi,z)}}\psi(\mathbf q)\,d\mathbf q,\]
\[
(A_\mu^{p,e,as}(\boldsymbol\pi,z)\psi)(\mathbf p)=
\frac{-\mu}{8\pi^2(12-z)^2}\int_{\mathbb T^2}
\frac{(\cos p_1-\cos p_2)(\cos q_1-\cos q_2)}
{\sqrt{\Delta_\mu(\mathbf p,\boldsymbol\pi,z)}\sqrt{\Delta_\mu(\mathbf q,\boldsymbol\pi,z)}}\psi(\mathbf q)\,d\mathbf q.
\]
This identity is obtained by substituting $u=\cos p_1+\cos p_2$,
$v=\cos p_1-\cos p_2$ (and likewise for $\mathbf q$) into $N^e(\mathbf
p,\mathbf q)$ and using $u=2-\varepsilon(\mathbf p)$; a direct
computer-algebra check confirms the identity term by term. In
particular, $A_\mu^{p,e,s}(\boldsymbol\pi,z)$ has exactly the same
rank-2 structure — a constant term minus $\displaystyle\tfrac12\varepsilon(\mathbf
p)\varepsilon(\mathbf q)$ — as the corresponding operator
$A_\mu^{p,e,s}(0,z)$ in Lemma~\ref{lem:eigenvalues}, providing a
structurally transparent analogue between the two quasimomenta.
Diagonalizing the associated $2\times2$ matrix
$\displaystyle\begin{psmallmatrix}18-z&1\\1&-1\end{psmallmatrix}$ (in the basis
$\{1,\cos p_1+\cos p_2\}$, after the same change of variables)
reproduces the two roots of $\lambda^2-(17-z)\lambda-(20-z)=0$, while
the antisymmetric block reproduces the decoupled eigenvalue $\lambda=-1$
of $M_{\mathrm{even}}$ — the exact $K=\boldsymbol\pi$ analogue of
$A_\mu^{p,e,as}(0,z)$ in Lemma~\ref{lem:eigenvalues}.
\end{remark}

\paragraph{Fundamental spectral classification for \(\mathbf K=\boldsymbol{\pi}\).}
This is the definitive point in the text where the rigorous count of spectral modes is established. We emphasize for the reader:
\begin{itemize}
    \item The even subspace \(L_2^e\) contains \textbf{exactly one positive eigenvalue} \(\lambda_{1}^{\pi,e}(z)\). This eigenvalue uniquely determines the physical bound state.
    \item The odd subspace \(L_2^o\) contains \textbf{exactly one positive eigenvalue} \(\lambda^{\pi,o}(z)\) \textbf{of multiplicity 2} (corresponding to the two independent directions $\sin p_1$ and $\sin p_2$).
\end{itemize}
Crucially, the odd eigenvalue \(\lambda^{\pi,o}(z)\) lies \textbf{strictly below the Birman--Schwinger threshold}: it never exceeds unity, as will be rigorously shown by its vanishing asymptotics in the next subsection). Consequently, it corresponds solely to a virtual level. This proves that the even subspace remains the \emph{only source} of the single bound state at \(\mathbf K=\boldsymbol{\pi}\), while the odd subspace contributes none.

\medskip

On the odd subspace $L_2^o(\mathbb{T}^2)$, the principal part $A_\mu^p(\boldsymbol\pi,z)$ has a unique positive eigenvalue $\lambda^{o}(z)$ of multiplicity 2, which satisfies
\[
\lambda^{o}(z) = O(\mu^{-1}), \qquad \mu\to\infty.
\]
In particular, $\lambda^{o}(z)<1$ for all sufficiently large $\mu$, so the odd sector contributes no eigenvalues of $H_\mu(\boldsymbol{\pi})$ below the essential spectrum.

\medskip

\noindent Importantly, this transition exactly conserves the~total spectral index (spectral flow):\\ the two positive modes present at \(\mathbf K=0\) evolve into exactly one bound state (in \(L_2^e\)) and one virtual level (in \(L_2^o\)) at \(\mathbf K=\boldsymbol{\pi}\). This splitting of spectral modes is visually summarized in Fig.~\ref{fig:threshold_structure}.

\begin{lemma}[Fredholm determinant asymptotics at \(\mathbf K=\boldsymbol\pi\)]\label{lem:Delta_pi}
For \(z\) in the relevant strong-coupling range below \(z_\mu(\boldsymbol\pi)\), as \(\mu\to\infty\),
\[
\Delta_\mu(\mathbf{p}, \boldsymbol\pi, z)=\frac{\mu\big(\varepsilon(\mathbf p)+\delta\big)}{\big(2-z_\mu(\boldsymbol\pi)\big)(18-z)}\big(1+O(\mu^{-1})\big),\qquad \delta:=z_\mu(\boldsymbol\pi)-z,
\]
uniformly for \(\delta\) ranging from \(O(1)\) (near the two-particle threshold) up to \(\delta=O(\mu)\) (the strong-coupling ground-state regime). This is the exact \(\mathbf K=\boldsymbol\pi\) analogue of the \(\mathbf K=0\) formula of Lemma~\ref{lem:Delta}, with the threshold constant \(6\) replaced by \(18\), consistently with the shifted expansion point of Section~\ref{sec:newnew54}.
\end{lemma}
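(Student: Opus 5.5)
The plan is to compute $\Delta_\mu(\mathbf p,\boldsymbol\pi,z)$ almost exactly at fixed $\mathbf p$, using two facts already in the paper. First, the inner integral over $\mathbf q$ is a two-particle average at $\mathbf k=\boldsymbol\pi-\mathbf p$. Second, $z_\mu(\boldsymbol\pi)=-\mu+4$ holds \emph{exactly}. Once this is done, the prefactor $\mu/((2-z_\mu(\boldsymbol\pi))(18-z))$ will be seen to agree with the natural prefactor $1/(\varepsilon(\mathbf p)+4-z)$ up to a factor $1+O(\mu^{-1})$.

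\textbf{Step 1: exact reduction of the inner integral.} With $\mathbf k:=\boldsymbol\pi-\mathbf p$ we write
\[
E_{\boldsymbol\pi}(\mathbf p,\mathbf q)=\varepsilon(\mathbf p)+\mathcal E_{\mathbf k}(\mathbf q),\qquad \mathcal E_{\mathbf k}(\mathbf q)=\varepsilon(\mathbf q)+\varepsilon(\mathbf k-\mathbf q).
\]
Expanding the cosines gives
\[
\mathcal E_{\mathbf k}(\mathbf q)-4=-2\sum_{i=1}^2\cos\tfrac{k_i}{2}\,\cos\bigl(q_i-\tfrac{k_i}{2}\bigr).
\]
Since $\cos\frac{k_i}{2}=\pm\sin\frac{p_i}{2}$, we get $|\mathcal E_{\mathbf k}-4|\le 2\sum_i|\sin\frac{p_i}{2}|\le C\sqrt{\varepsilon(\mathbf p)}$. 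Moreover $X:=\mathcal E_{\mathbf k}(\mathbf q)-4$ has a distribution in $\mathbf q$ that is symmetric about $0$, so all odd moments of $X$ vanish, and $\langle X^2\rangle=\varepsilon(\mathbf p)$. This is the computation $\langle(\mathcal E_{\mathbf k}-4)^2\rangle=4-\varepsilon(\mathbf k)$ from Section~\ref{subsec:22}, combined with $\varepsilon(\boldsymbol\pi-\mathbf p)=4-\varepsilon(\mathbf p)$.

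\textbf{Step 2: geometric expansion.} Put $D:=\varepsilon(\mathbf p)+4-z$. Throughout the stated range $z\le z_\mu(\boldsymbol\pi)$, so $D\ge \mu$. We expand
\[
\Big\langle\frac{1}{D+X}\Big\rangle=\frac1D+\sum_{n\ge1}\frac{\langle X^{2n}\rangle}{D^{2n+1}} .
\]
The series converges absolutely because $|X|\le 8\ll D$. Using $\langle X^{2n}\rangle\le C^{2n-2}\varepsilon(\mathbf p)$, the tail is bounded by $C'\varepsilon(\mathbf p)/D^{3}$. Therefore
\[
\Delta_\mu=1-\frac{\mu}{D}-\mu\,\rho,\qquad 0\le\rho\le \frac{C'\varepsilon(\mathbf p)}{D^3}.
\]
Now the exact threshold enters: $1-\mu/D=(D-\mu)/D$, and $D-\mu=\varepsilon(\mathbf p)+4-\mu-z=\varepsilon(\mathbf p)+\delta$. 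Hence
\[
\Delta_\mu=\frac{\varepsilon(\mathbf p)+\delta}{D}\Bigl(1-\frac{\mu D\rho}{\varepsilon(\mathbf p)+\delta}\Bigr),\qquad 0\le \frac{\mu D\rho}{\varepsilon(\mathbf p)+\delta}\le \frac{C'\mu\,\varepsilon(\mathbf p)}{D^2(\varepsilon(\mathbf p)+\delta)}\le \frac{C'}{\mu},
\]
using $\varepsilon\le\varepsilon+\delta$ and $D\ge\mu$.

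\textbf{Step 3: matching the prefactor.} We have $2-z_\mu(\boldsymbol\pi)=\mu-2$, so $\mu/(2-z_\mu(\boldsymbol\pi))=1+O(\mu^{-1})$. Also $(18-z)/D=1+(14-\varepsilon(\mathbf p))/D=1+O(\mu^{-1})$, uniformly in $\mathbf p$ and in $z\le z_\mu(\boldsymbol\pi)$. Substituting $1/D=\mu/\bigl((2-z_\mu(\boldsymbol\pi))(18-z)\bigr)\cdot(1+O(\mu^{-1}))$ into Step 2 gives the claimed formula. The bound holds uniformly for $\delta\in[0,O(\mu)]$, and in fact for all $\delta\ge0$, since every error estimate above used only $D\ge\mu$.

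\textbf{Main obstacle.} The delicate point is uniformity of the \emph{relative} error near $\mathbf p=\mathbf 0$ with $\delta$ small, where the leading term $(\varepsilon+\delta)/D$ itself becomes tiny. A crude bound $\rho=O(D^{-3})$ would give a relative error $O(\mu^{-2}/(\varepsilon+\delta))$, which is not uniformly small there. What rescues the estimate is the factor $\varepsilon(\mathbf p)$ in every remainder term. This factor comes from the amplitude $\cos\frac{k_i}{2}=\pm\sin\frac{p_i}{2}$ in Step 1, together with the vanishing of the odd moments. I would therefore check that moment bound carefully first; everything else is bookkeeping.
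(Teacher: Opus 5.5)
Your proof is correct, and it takes a genuinely different route from the paper. In fact the paper gives no analytic proof of Lemma~\ref{lem:Delta_pi}. There the lemma rests on the numerical checks of Appendix~\ref{app:numerical_delta}: elliptic reduction with Gauss--Legendre quadrature, an independent grid scheme, the exact benchmark $\Delta_\mu(\mathbf 0,\boldsymbol\pi,z)=1-\mu/(4-z)$, and a pointer to a companion note. Its only analytic expansions, Lemmas~\ref{lem:Delta_refined_full} and~\ref{lem:Delta_refined}, expand the resolvent around $12-z$, and only in the regime $z=-2\mu+d$.

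That centering has a nonvanishing first moment $\langle 12-E_{\boldsymbol\pi}\rangle_{\mathbf q}=8-\varepsilon(\mathbf p)$, and its truncation errors carry no factor $\varepsilon(\mathbf p)$. It therefore controls $\Delta_\mu$ only up to an absolute $O(\mu^{-2})$. That says nothing about the relative error as $\varepsilon(\mathbf p)+\delta\to0$, which is exactly the obstacle you flag. What the $12-z$ centering buys the paper is the finite-rank principal-part kernel needed for the Birman--Schwinger analysis, not sharp control of $\Delta_\mu$.

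Your choice of centering at the $\mathbf q$-average $\varepsilon(\mathbf p)+4$ kills the odd moments. The exact threshold $z_\mu(\boldsymbol\pi)=-\mu+4$ then turns $1-\mu/D$ into the exact factor $(\varepsilon(\mathbf p)+\delta)/D$. The remainder carries $\varepsilon(\mathbf p)$, which gives uniformity for all $\delta\ge0$ in a few lines.

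Your approach also yields two things the paper does not make explicit:
\begin{itemize}
\item The expansion $\Delta_\mu=1-\mu/D-\mu\varepsilon(\mathbf p)/D^{3}+O(\mu^{-4})$, with $D=2\mu+\varepsilon(\mathbf p)+4-d$, reproduces exactly the coefficient $Q(\mathbf p,d)=-\bigl[(\varepsilon(\mathbf p)+4-d)^2+\varepsilon(\mathbf p)\bigr]/8$ of Lemma~\ref{lem:Delta_refined_full}. This is an independent analytic check of that sketch.
\item Your Step~3 shows that the prefactor $\mu/\bigl((2-z_\mu(\boldsymbol\pi))(18-z)\bigr)$ can be replaced by $1/D$ at relative accuracy $O(\mu^{-1})$. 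So the constant $18$ in the statement is cosmetic.
\end{itemize}

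The only point to add is the degenerate case $\mathbf p=\mathbf 0$, $\delta=0$, where you divide by $\varepsilon(\mathbf p)+\delta=0$. There $X\equiv0$ and $\rho=0$, so both sides vanish identically and the statement holds trivially.
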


\begin{remark}[Exact closed-form benchmark]\label{rem:exact_closed}
At the flat momentum \(\mathbf p^*(\mathbf K)=\mathbf K-(\pi,\pi)\pmod{2\pi}\), the elliptic reduction of Theorem 3.1 (Uniform elliptic reduction)
in the companion note~\cite{CompanionNote} gives the exact identity
\[
\Delta_\mu\big(\mathbf p^*(\mathbf K),\mathbf K,z\big)
=1-\frac{\mu}{4-z+\varepsilon(\mathbf p^*(\mathbf K))},
\]
valid for every \(\mathbf K\in\mathbb T^2\) and every \(\mu>0\). In particular, for \(\mathbf K=\boldsymbol{\pi}\) we have \(\mathbf p^*=\mathbf 0\) and \(\varepsilon(\mathbf p^*)=0\), so \(\Delta_\mu(\mathbf 0,\boldsymbol{\pi},z)=\displaystyle 1-\frac{\mu}{4-z}\). This identity provides a rigorous benchmark for the asymptotic formulas used in this paper.
\end{remark}

\begin{remark}
Lemma~\ref{lem:Delta_pi} gives the correct leading-order asymptotics of the Fredholm determinant with a relative error of order \(O(\mu^{-1})\). This is sufficient for determining the leading coefficient \(-2\mu\) of the ground-state energy, as well as for proving the spectral gap \(\mu-2+O(\mu^{-1})\). However, the additive constant in \(z_1^{\boldsymbol\pi,s}(\mu)\) is controlled by the next-order term in the \(1/\mu\) expansion, which is not captured by Lemma~\ref{lem:Delta_pi}. A systematic expansion to order \(O(\mu^{-2})\) reveals that the correct constant is \(6\), not \(2\); the derivation is presented in Appendix~\ref{app:numerical_delta}. This subtlety underlines the importance of including the \(O(\mu^{-2})\) correction when computing \(O(1)\) offsets in the strong-coupling asymptotics.
\end{remark}

\begin{lemma}[Refined Fredholm determinant asymptotics at $\mathbf K=\boldsymbol\pi$]\label{lem:Delta_refined_full}
For $z=-2\mu+d+O(\mu^{-1})$ with $d=O(1)$,
\[
\Delta_\mu(\mathbf p,\boldsymbol\pi,z)=\frac12+\frac{\varepsilon(\mathbf p)+4-d}{4\mu}+\frac{Q(\mathbf p,d)}{\mu^2}+O(\mu^{-3}),
\]
\[
Q(\mathbf p,d)=-2+d-\frac{d^2}{8}-\frac{9\varepsilon(\mathbf p)}{8}+\frac{\varepsilon(\mathbf p)d}{4}-\frac{\varepsilon(\mathbf p)^2}{8}.
\]
Equivalently,
\[
\Delta_\mu=\tfrac12\Big(1+\frac{A}{\mu}+\frac{B}{\mu^2}+O(\mu^{-3})\Big),\quad A=\tfrac12(\varepsilon+4-d),\quad B=2Q.
\]
\end{lemma}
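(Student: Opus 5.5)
The plan is a direct Neumann-series expansion of the inner integral defining $\Delta_\mu(\mathbf p,\boldsymbol\pi,z)$. I would not use Lemma~\ref{lem:Delta_pi}, which only gives relative accuracy $O(\mu^{-1})$. Put $z=-2\mu+d$ and $x(\mathbf p,\mathbf q):=E_{\boldsymbol\pi}(\mathbf p,\mathbf q)-d$. Then $E_{\boldsymbol\pi}-z=2\mu+x$. Since $3\le E_{\boldsymbol\pi}\le 12$ (shown above), we have $|x|\le 12+|d|$ uniformly in $(\mathbf p,\mathbf q)$. Hence for $\mu$ large the geometric series
\[
\frac{1}{2\mu+x}=\frac{1}{2\mu}-\frac{x}{4\mu^2}+\frac{x^2}{8\mu^3}-\frac{x^3}{16\mu^4}+\cdots
\]
converges absolutely and uniformly, with the tail after the $x^2$ term bounded by $C\mu^{-4}$. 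Multiplying by $\mu$ and averaging over $\mathbf q$ then gives
\[
\Delta_\mu(\mathbf p,\boldsymbol\pi,z)=\frac12+\frac{\langle x\rangle_{\mathbf q}}{4\mu}-\frac{\langle x^2\rangle_{\mathbf q}}{8\mu^2}+O(\mu^{-3}),
\]
uniformly in $\mathbf p$. The whole lemma therefore reduces to computing the first two $\mathbf q$-moments of $E_{\boldsymbol\pi}(\mathbf p,\cdot)$.

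\emph{First moment.} I would use \eqref{eq:dispersion_pi}, $E_{\boldsymbol\pi}=\varepsilon(\mathbf p)+\varepsilon(\mathbf q)+4-\varepsilon(\mathbf p+\mathbf q)$, together with $\langle\varepsilon(\mathbf q)\rangle=\langle\varepsilon(\mathbf p+\mathbf q)\rangle_{\mathbf q}=2$. This gives $\langle E_{\boldsymbol\pi}(\mathbf p,\cdot)\rangle=\varepsilon(\mathbf p)+4$, so $\langle x\rangle=\varepsilon(\mathbf p)+4-d$, which is the stated $\mu^{-1}$ coefficient.

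\emph{Second moment.} I would write $\langle x^2\rangle=\operatorname{Var}_{\mathbf q}E_{\boldsymbol\pi}+\langle x\rangle^2$. The $\mathbf q$-dependent part of $E_{\boldsymbol\pi}$ is
\[
\sum_{i=1}^2\bigl(\cos(p_i+q_i)-\cos q_i\bigr),
\]
and the two coordinates $i=1,2$ are independent under the average. Using $\langle\cos^2\rangle=\tfrac12$ and $\langle\cos(p_i+q_i)\cos q_i\rangle=\tfrac12\cos p_i$, the variance is $\sum_i(1-\cos p_i)=\varepsilon(\mathbf p)$. Hence
\[
Q=-\tfrac18\bigl[\varepsilon+(\varepsilon+4-d)^2\bigr].
\]
Expanding this reproduces term by term $-2+d-\tfrac{d^2}{8}-\tfrac{9\varepsilon}{8}+\tfrac{\varepsilon d}{4}-\tfrac{\varepsilon^2}{8}$. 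The equivalent normalized form, with $A=\tfrac12(\varepsilon+4-d)$ and $B=2Q$, follows by factoring out $\tfrac12$.

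The one delicate point is the $O(\mu^{-1})$ slack in $z$. If $z=-2\mu+d+\eta/\mu$, then $x$ acquires a shift $-\eta/\mu$, and the first-moment term contributes an extra $-\eta/(4\mu^2)$. So the $\mu^{-2}$ coefficient is exactly $Q$ only when $z=-2\mu+d$ exactly, or equivalently when $d$ is read as $d+\eta/\mu$ inside $A$. I would prove the lemma for $z=-2\mu+d$ exactly and state this caveat explicitly. The lemma's only later use is as the refined determinant for the formal branch $-2\mu+6$ (per the remark following Lemma~\ref{lem:Delta_pi}); whether that use tolerates the caveat is not settled here.

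Everything else is routine; the only estimate needed is the uniform bound $|x|\le 12+|d|$ that controls the remainder.
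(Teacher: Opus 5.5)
Your proof is correct and follows the paper's route. Both expand $(E_{\boldsymbol\pi}-z)^{-1}$ as a uniformly convergent geometric series truncated after the quadratic term, then average the first two $\mathbf q$-moments; you center at $2\mu$ rather than $12-z$, which avoids re-expanding $(12-z)^{-k}$ in $1/\mu$, and your variance computation $\langle x^2\rangle=\varepsilon+(\varepsilon+4-d)^2$ agrees with the paper's $\langle X\rangle=8-\varepsilon$, $\langle X^2\rangle=\varepsilon^2-15\varepsilon+64$. Your caveat is also right: a shift $\eta/\mu$ in $z$ adds $-\eta/(4\mu^2)$, so the stated $\mu^{-2}$ coefficient holds only when $d$ is allowed to absorb the $O(\mu^{-1})$ part, which is exactly how the paper later uses the lemma, with $d=6+e/\mu$.
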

\begin{proof}[Sketch]
Write $(E_{\boldsymbol\pi}-z)^{-1}=(12-z-X)^{-1}$ with $X=A_{\boldsymbol\pi}^c-A_{\boldsymbol\pi}^s$, $|X|\le 12$, $12-z=2\mu+12-d$. For large $\mu$, the geometric series is uniformly convergent; truncating at $n=2$ and averaging over $\mathbf q$ yields $\langle X\rangle_{\mathbf q}=8-\varepsilon(\mathbf p)$ and $\langle X^2\rangle_{\mathbf q}=\varepsilon^2-15\varepsilon+64$. Inserting $z=-2\mu+d$ gives the stated coefficients.
\end{proof}

\begin{remark}[On the breakdown of the naive Fredholm determinant asymptotics]
The~simple asymptotic formula \(\Delta_\mu(\mathbf p,\boldsymbol\pi,z) \approx (\varepsilon(\mathbf p)+\delta)/\mu\), with \(\delta = z_\mu(\boldsymbol\pi)-z\), is often used in the literature for the two-particle sector, where \(z \sim -\mu\). 
However, it breaks down completely in the deep strong-coupling regime \(z \sim -2\mu\), where the three-boson bound state is located. 
Numerically, using this approximation at the actual crossing point underestimates the eigenvalue \(\lambda_1^{\pi,e}(z)\) by a factor of approximately \(3\), leading to an incorrect prediction of the ground-state energy and spectral gap (see Appendix~\ref{app:numerical_delta}). (This very crude approximation, which completely neglects the correct prefactors arising from the expansion around \(12-z\), should not be confused with the refined leading-order principal-part asymptotic derived in the companion note~\cite{CompanionNote}, which yields an exact underestimation factor of \(2\).)
The~correct asymptotics, which is valid uniformly for \(\delta = O(1)\) through \(\delta = O(\mu)\), is~given by Lemma~\ref{lem:Delta_pi} (see 
Appendix~\ref{app:numerical_delta}). 
This lemma is the rigorous \(\mathbf K=\boldsymbol\pi\) analogue of the \(\mathbf K=0\) formula of Lemma~\ref{lem:Delta} and is the foundation for the subsequent strong-coupling expansion in this section.
\end{remark}

The inadequacy of the naive determinant asymptotics in the strong-coupling regime is illustrated in 
Appendix~\ref{app:numerical_delta}.

\begin{proposition}[Even-sector eigenvalue expansion]\label{prop:pi_even_expansion}
With $\Delta_\mu$ as in Lemma~\ref{lem:Delta_refined_full}, the~top even-sector eigenvalue admits the expansion
\[
\lambda_1^{\boldsymbol\pi,e}(-2\mu+d)=1+\frac{d-6}{\mu}+\frac{d^2-12d+28}{\mu^2}+O(\mu^{-3}),\qquad d=O(1).
\]
In particular, setting $d=6+e/\mu$ and solving $\lambda_1^{\boldsymbol\pi,e}=1+O(\mu^{-3})$ yields $e=8$.
\end{proposition}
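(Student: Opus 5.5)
The plan is a direct computation. I would substitute the refined Fredholm asymptotics of Lemma~\ref{lem:Delta_refined_full} into the explicit formula for the top even-sector eigenvalue,
\[
\lambda_1^{\boldsymbol\pi,e}(z)=\frac{\mu(18-z)}{(12-z)^2}\,\Big\langle \frac{1}{\Delta_\mu(\cdot,\boldsymbol\pi,z)}\Big\rangle,\qquad \langle\cdot\rangle=\frac1{4\pi^2}\int_{\mathbb T^2}(\cdot)\,d\mathbf p,
\]
and expand everything in powers of $\mu^{-1}$ at $z=-2\mu+d$.

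\textbf{Step 1: the prefactor.} With $18-z=2\mu+18-d$ and $12-z=2\mu+12-d$, the prefactor is
\[
\frac{\mu(18-z)}{(12-z)^2}=\frac12\Big(1+\frac{18-d}{2\mu}\Big)\Big(1+\frac{12-d}{2\mu}\Big)^{-2}.
\]
I would expand this to order $\mu^{-2}$ by the binomial series.

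\textbf{Step 2: inverting the determinant.} Lemma~\ref{lem:Delta_refined_full} gives $\Delta_\mu=\tfrac12\big(1+A/\mu+B/\mu^2+O(\mu^{-3})\big)$. Since $\varepsilon$ is bounded, $\Delta_\mu\ge\tfrac14$ uniformly in $\mathbf p$ for large $\mu$. Hence
\[
\frac{1}{\Delta_\mu}=2\Big(1-\frac{A}{\mu}+\frac{A^2-B}{\mu^2}\Big)+O(\mu^{-3})
\]
holds with a remainder uniform in $\mathbf p$, and it may be averaged term by term.

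\textbf{Step 3: averaging.} The only moments needed are $\langle\varepsilon\rangle=2$ and $\langle\varepsilon^2\rangle=4+\tfrac12+\tfrac12=5$. These give
\[
\langle A\rangle=\frac{6-d}{2},\qquad \langle A^2\rangle=\frac14\big\langle(\varepsilon+4-d)^2\big\rangle,
\]
and $\langle B\rangle=2\langle Q\rangle$ from the explicit form of $Q$.

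\textbf{Step 4: first order.} Multiplying the expansions of Steps 1 and 2, the leading term is $\tfrac12\cdot2=1$. The $\mu^{-1}$ coefficient is
\[
\frac{18-d}{2}-(12-d)-\frac{6-d}{2}=d-6,
\]
which matches the claimed term $(d-6)/\mu$.

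\textbf{Step 5: second order.} The $\mu^{-2}$ coefficient collects three kinds of terms: the quadratic terms of the prefactor, the cross term $-\langle A\rangle\big(\tfrac{18-d}{2}-(12-d)\big)$, and $\langle A^2\rangle-\langle B\rangle$. I expect this bookkeeping to be the main obstacle, since several $d$-dependent contributions must cancel to leave the claimed $d^2-12d+28$.

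I would organise the check as follows. First verify the coefficient of $d^2$ (target $1$), then the coefficient of $d$ (target $-12$), and finally the constant (target $28$), using the moments from Step 3. If the constant fails to match, the likely culprit is the $\mu^{-2}$ term of $\Delta_\mu$. In that case I would recompute $\langle X\rangle_{\mathbf q}$ and $\langle X^2\rangle_{\mathbf q}$ from the proof sketch of Lemma~\ref{lem:Delta_refined_full}, rather than trust $Q$.

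\textbf{Step 6: the value $e=8$.} Put $d=6+e/\mu$. The first-order term becomes $e/\mu^2$. In the second-order term, $d^2-12d+28$ evaluated at $d=6$ equals $36-72+28=-8$, and its $e$-dependence contributes only at order $\mu^{-3}$. Thus
\[
\lambda_1^{\boldsymbol\pi,e}=1+\frac{e-8}{\mu^2}+O(\mu^{-3}),
\]
and requiring $\lambda_1^{\boldsymbol\pi,e}=1+O(\mu^{-3})$ forces $e=8$.

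\textbf{Remainder control.} Proposition~\ref{prop:remainder_estimate} is not needed here. The statement concerns only the principal-part eigenvalue, so the sole error sources are the $O(\mu^{-3})$ remainder in Lemma~\ref{lem:Delta_refined_full} and the truncation of the binomial expansions, both uniform by Step 2.
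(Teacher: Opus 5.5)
Your proposal is correct and follows the paper's own argument: substitute Lemma~\ref{lem:Delta_refined_full} into $\lambda_1^{\boldsymbol\pi,e}(z)=\mu(18-z)(12-z)^{-2}\langle\Delta_\mu^{-1}\rangle$, average using $\langle\varepsilon\rangle=2$ and $\langle\varepsilon^2\rangle=5$, and expand the prefactor at $z=-2\mu+d$. Your Step~5 bookkeeping closes without revisiting $Q$: the quadratic terms of the prefactor give $\tfrac14 d^2-3d$, the cross term gives $(3-\tfrac d2)^2=9-3d+\tfrac14 d^2$, and $\langle A^2\rangle-\langle B\rangle=19-6d+\tfrac12 d^2$, which sum to $d^2-12d+28$.
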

\begin{proof}[Sketch]
From the principal-part kernel, $\lambda_1^{\boldsymbol\pi,e}(z)=\mu(18-z)(12-z)^{-2}\langle \Delta_\mu^{-1}\rangle_{\mathbf p}$. Insert Lemma~\ref{lem:Delta_refined_full} and average over $\mathbf p$ using $\langle\varepsilon\rangle=2$, $\langle\varepsilon^2\rangle=5$; expand the prefactor at $z=-2\mu+d$ and simplify.
\end{proof}

\begin{lemma}\label{lem:odd_eig}
At the even-branch crossing $z=z_1^{\boldsymbol\pi,s}(\mu)=-2\mu+6+O(\mu^{-1})$, the odd-sector principal-part eigenvalue satisfies
\[
\lambda^{\boldsymbol\pi,o}(z_1)\sim \frac{1}{4\mu},\qquad \mu\to\infty.
\]
\end{lemma}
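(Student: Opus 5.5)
The plan is to evaluate the closed-form expression for the odd-sector eigenvalue directly at the crossing point, feeding in the refined Fredholm determinant asymptotics. Recall from the subsection on eigenvalues of the principal part that on $L_2^o(\mathbb T^2)$ the principal part is a sum of two rank-one operators built on $\sin p_1$ and $\sin p_2$. Its unique positive eigenvalue is
\[
\lambda^{\boldsymbol\pi,o}(z)=\frac{\mu}{4\pi^2(12-z)^2}\int_{\mathbb T^2}\frac{\sin^2 p_1}{\Delta_\mu(\mathbf p,\boldsymbol\pi,z)}\,d\mathbf p .
\]
This formula depends only on the principal-part kernel (5.3.2). So the claim is purely an asymptotic evaluation, and it does not matter whether the formal crossing $z=-2\mu+6+O(\mu^{-1})$ is a physical eigenvalue of $H_\mu(\boldsymbol\pi)$ (cf.\ Remark~\ref{rem:artifact_pi}).

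First, I will substitute $z=-2\mu+d$ with $d=6+O(\mu^{-1})$ into the prefactor. This gives $12-z=2\mu+6+O(\mu^{-1})$, hence
\[
\frac{\mu}{(12-z)^2}=\frac{1}{4\mu}\bigl(1+O(\mu^{-1})\bigr).
\]
Second, I will invoke Lemma~\ref{lem:Delta_refined_full}, which gives $\Delta_\mu(\mathbf p,\boldsymbol\pi,z)=\tfrac12+\frac{\varepsilon(\mathbf p)+4-d}{4\mu}+O(\mu^{-2})$. Since $0\le\varepsilon\le4$ and $d$ is bounded, the error is uniform in $\mathbf p$. It follows that $\Delta_\mu\ge\tfrac14$ for large $\mu$ and $1/\Delta_\mu=2\bigl(1+O(\mu^{-1})\bigr)$ uniformly on $\mathbb T^2$. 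Third, I will average over the torus using $\frac1{4\pi^2}\int_{\mathbb T^2}\sin^2p_1\,d\mathbf p=\tfrac12$. Combining the three pieces,
\[
\lambda^{\boldsymbol\pi,o}(z_1)=\frac{1}{4\mu}\cdot 2\cdot\tfrac12\bigl(1+O(\mu^{-1})\bigr)=\frac1{4\mu}+O(\mu^{-2}),
\]
which is the claimed asymptotics. As a by-product, $\lambda^{\boldsymbol\pi,o}<1$ for large $\mu$.

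The only step needing care is the uniformity in $\mathbf p$ of the determinant expansion. Pointwise asymptotics would not justify exchanging the expansion with the $\mathbf p$-integral, and we also need a uniform lower bound on $\Delta_\mu$ so that $1/\Delta_\mu$ is controlled. I would secure this exactly as in the sketch of Lemma~\ref{lem:Delta_refined_full}. Write $(E_{\boldsymbol\pi}-z)^{-1}=(12-z-X)^{-1}$ with $|X|\le12$ and $12-z\ge2\mu$. The geometric series then converges uniformly in $(\mathbf p,\mathbf q)$, with a tail bounded by $\sum_{n\ge2}12^n(2\mu)^{-n-1}=O(\mu^{-3})$. After multiplication by $\mu$ and integration in $\mathbf q$, this gives the uniform $O(\mu^{-2})$ remainder in $\Delta_\mu$.

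The $O(\mu^{-1})$ ambiguity in the location of $z_1$ only perturbs the leading factor by a relative $O(\mu^{-1})$. This is because $\lambda^{\boldsymbol\pi,o}$ is smooth in $d$ on this scale, as one sees from the same expansion.
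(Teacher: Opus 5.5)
Your proposal is correct and follows the paper's route: you substitute the refined determinant expansion of Lemma~\ref{lem:Delta_refined_full} at $d=6$ into the closed-form odd-sector eigenvalue and average $\sin^2 p_1$ over the torus. You also make explicit the uniform tail bound that justifies $1/\Delta_\mu = 2\bigl(1+O(\mu^{-1})\bigr)$ uniformly in $\mathbf p$, which the paper's sketch leaves implicit; only this leading term is needed for $\lambda^{\boldsymbol\pi,o}(z_1)\sim 1/(4\mu)$.
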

\begin{proof}[Sketch]
Using the kernel $K_\pi^{p,o}$ and Lemma~\ref{lem:Delta_refined_full} at $z=-2\mu+6+O(\mu^{-1})$, one gets $\Delta_\mu^{-1}=2-\varepsilon/\mu+O(\mu^{-2})$; evaluating the integral $\int(\sin^2 p_1+\sin^2 p_2)\,d\mathbf p$ gives the result.
\end{proof}

\subsubsection{Strong-coupling expansion on the formal branch (order-matching diagnostic);
asymptotic analysis as $\mu\to\infty$.}\label{sec:56as}

Let \(\delta:=z_\mu(\boldsymbol\pi)-z\). By Lemma~\ref{lem:Delta_pi}, 
\[
\int_{\mathbb T^2}\frac{d\mathbf p}{\Delta_\mu(\mathbf p,\boldsymbol\pi,z)}
=\frac{(2-z_\mu(\boldsymbol\pi))(18-z)}{\mu}\,4\pi^2 b_0(\delta)(1+O(\mu^{-1})),
\]
so that
\[
\lambda_1^{\boldsymbol\pi,e}(z)=\frac{(18-z)^2\big(2-z_\mu(\boldsymbol\pi)\big)}{(12-z)^2}\,b_0(\delta)+O(\mu^{-1}),\qquad \]
\[\lambda^{\boldsymbol\pi,o}(z)=\frac{(2-z_\mu(\boldsymbol\pi))(18-z)}{(12-z)^2}\,b_1(\delta)+O(\mu^{-1}),
\]
where \(b_\alpha(\delta):=\displaystyle\frac1{4\pi^2}\int_{\mathbb T^2}\frac{\varepsilon^\alpha(\mathbf p)}{\varepsilon(\mathbf p)+\delta}\,d\mathbf p\).

\begin{lemma}[Refined Fredholm determinant asymptotics]\label{lem:Delta_refined}
For $z=-2\mu+d+O(\mu^{-1})$ with $d=O(1)$,
\[
\Delta_\mu(\mathbf p,\boldsymbol\pi,z)=\frac12\left[1+\frac{\varepsilon(\mathbf p)+4-d}{2\mu}+O(\mu^{-2})\right],\qquad\mu\to\infty.
\]
\end{lemma}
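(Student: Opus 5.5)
The plan is to expand the Fredholm determinant directly from its definition, $\Delta_\mu(\mathbf p,\boldsymbol\pi,z)=1-\mu\langle (E_{\boldsymbol\pi}(\mathbf p,\cdot)-z)^{-1}\rangle_{\mathbf q}$, using the centring at the maximal value $12=\max E_{\boldsymbol\pi}$ adopted throughout this section. The estimate should be uniform in $\mathbf p\in\mathbb T^2$. This is essentially the first-order truncation of the computation sketched for Lemma~\ref{lem:Delta_refined_full}, stopped one order earlier.

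\textbf{Step 1 (geometric expansion).} Set $X(\mathbf p,\mathbf q):=A^c_{\boldsymbol\pi}-A^s_{\boldsymbol\pi}=12-E_{\boldsymbol\pi}(\mathbf p,\mathbf q)$. Then $E_{\boldsymbol\pi}-z=(12-z)-X$ with $12-z=2\mu+12-d+O(\mu^{-1})$. Since $3\le E_{\boldsymbol\pi}\le 12$ (established in Subsection~\ref{subsec:52}), we have $0\le X\le 9$ uniformly. Hence, for $\mu$ large, the Neumann series converges uniformly and gives
\[
\frac{1}{E_{\boldsymbol\pi}-z}=\frac{1}{12-z}+\frac{X}{(12-z)^2}+R,\qquad |R|\le \frac{81}{(12-z)^2\,\bigl(12-z-9\bigr)}=O(\mu^{-3}),
\]
uniformly in $(\mathbf p,\mathbf q)$. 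Multiplying by $\mu$, the remainder contributes only $O(\mu^{-2})$.

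\textbf{Step 2 (averaging over $\mathbf q$).} Using $\varepsilon(\boldsymbol\pi-\mathbf x)=4-\varepsilon(\mathbf x)$ and $\langle\varepsilon\rangle=2$ (which gives $\langle\varepsilon(\mathbf p+\mathbf q)\rangle_{\mathbf q}=2$), we get
\[
\langle E_{\boldsymbol\pi}(\mathbf p,\cdot)\rangle_{\mathbf q}=\varepsilon(\mathbf p)+2+4-2=\varepsilon(\mathbf p)+4,
\qquad\text{so}\qquad \langle X\rangle_{\mathbf q}=8-\varepsilon(\mathbf p).
\]

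\textbf{Step 3 (expanding the prefactors).} We have
\[
\frac{\mu}{12-z}=\frac12\Bigl(1-\frac{12-d}{2\mu}+O(\mu^{-2})\Bigr),\qquad \frac{\mu}{(12-z)^2}=\frac1{4\mu}+O(\mu^{-2}).
\]
The $O(\mu^{-1})$ uncertainty in $z$ only changes $12-z$ by $O(\mu^{-1})$, so it affects these terms at order $O(\mu^{-2})$. Combining,
\[
\Delta_\mu=1-\frac12+\frac{12-d}{4\mu}-\frac{8-\varepsilon(\mathbf p)}{4\mu}+O(\mu^{-2})=\frac12\Bigl[1+\frac{\varepsilon(\mathbf p)+4-d}{2\mu}\Bigr]+O(\mu^{-2}),
\]
which is the claim.

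There is no genuine obstacle here. The only point that needs care is uniformity: the remainder bound must hold uniformly in $\mathbf p$ and in the $O(\mu^{-1})$ perturbation of $z$. This follows from the uniform bound $0\le X\le 9$, since $12-z\ge\mu$ keeps the series well inside its radius of convergence. The computation also serves as a consistency check against Lemma~\ref{lem:Delta_refined_full}, whose $\mu^{-1}$ coefficient $(\varepsilon+4-d)/4$ agrees with the one obtained here.
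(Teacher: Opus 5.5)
Your proof is correct and follows the paper's argument essentially verbatim: a Neumann expansion of $(E_{\boldsymbol\pi}-z)^{-1}$ about $12-z$, the average $\langle X\rangle_{\mathbf q}=8-\varepsilon(\mathbf p)$, and substitution of $z=-2\mu+d$. Your explicit remainder bound is actually more careful than the paper's: the paper records the error after multiplication by $\mu$ as $O(\mu^{-3})$, but it is genuinely $O(\mu^{-2})$, as the nonzero $Q(\mathbf p,d)/\mu^2$ term in Lemma~\ref{lem:Delta_refined_full} confirms. This slip does not affect the stated result.
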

\begin{proof}
From $\dfrac1{E_{\boldsymbol\pi}-z}=\dfrac1{12-z}\Bigl[1+\dfrac{X(\mathbf p,\mathbf q)}{12-z}+O\bigl((12-z)^{-2}\bigr)\Bigr]$,
$X=A_{\boldsymbol\pi}^c-A_{\boldsymbol\pi}^s$, and
$\langle X\rangle_{\mathbf q}=8-\varepsilon(\mathbf p)$, we get
$\Delta_\mu(\mathbf p,\boldsymbol\pi,z)=1-\dfrac{\mu}{12-z}-\dfrac{\mu(8-\varepsilon(\mathbf p))}{(12-z)^2}+O(\mu^{-3})$.
Substituting $z=-2\mu+d$ and expanding in $1/\mu$ gives the claim; the
computation is cross-checked against the exact closed form
\eqref{eq:Delta0closed_app} in Appendix~\ref{app:numerical_delta}, where
$\varepsilon(\mathbf p)=0$.
\end{proof}

Setting \(z=-2\mu+d+O(\mu^{-1})\) 
and using 
Lemma~\ref{lem:Delta_refined},
we obtain
\[
\lambda_1^{\boldsymbol\pi,e}(z)=1+\frac{d-6}{\mu}+O(\mu^{-2}).
\]
Solving \(\lambda_1^{\boldsymbol\pi,e}=1\) gives \(d=6\), hence the ground-state asymptotics is
\[
\boxed{-3\mu \le z_1^{\boldsymbol{\pi},s}(\mu) \le -3\mu+6, \qquad z_1^{\boldsymbol{\pi},s}(\mu) = -3\mu+O(1).}
\]



\begin{remark}[Order-matching criterion]\label{rem:order_criterion}
The necessity of computing the refined Fredholm determinant expansion (Lemma~\ref{lem:Delta_refined}) can be understood via the general order-matching criterion established in the companion note~\cite{CompanionNote}. If the leading-order eigenvalue function \(\Lambda_0(d)\) is identically equal to \(1\) (degenerate case), the \(O(1)\) additive constant in the energy requires the next-order correction to \(\Delta_\mu\). Here \(\Lambda_0(d)\equiv1\) on the formal branch \(z=-2\mu+d\), so~the constant \(6\) is determined by the condition \(\Lambda_1(d)=d-6=0\), while the \(O(\mu^{-2})\) term fixes the coefficient \(8/\mu\). However, this branch is not the ground state; the variational bound (Lemma~\ref{lem:var_pi}) places the ground state at \(-3\mu+O(1)\). In contrast, for the second bound state at \(\mathbf K=0\) the corresponding \(\Lambda_0(\delta)=g(\delta)=2+\delta-1/b_0(\delta)\) is non-constant (non-degenerate case), so the leading-order determinant already suffices to determine the constant \(C\approx3.96458\).
\end{remark}

\subsection{Direct numerical diagonalisation verification}\label{sec:hp_numerics}
As an independent check of the variational bounds (Lemma~\ref{lem:var_pi}), we performed a~direct diagonalisation of the full three-particle Hamiltonian $H_\mu(\boldsymbol{\pi})$ on finite lattices of size $N=8$ and $N=10$ (with even $N$, so that $K=\pi$ is a grid point and all averaging projections are well-defined). The lowest eigenvalues $z_1$ for $\mu=20,40,70,120$ are shown in Table~\ref{tab:hp_ground}. In all cases $z_1+3\mu$ converges to $6$ from below (e.g., at $\mu=120$, $N=8$: $z_1+3\mu=5.989$), while $z_1+2\mu\approx -14,\dots,-114$, i.e. the value $-2\mu+6$ lies in the spectral gap. The gap $z_2-z_1$ is numerically $2\mu-2+O(\mu^{-1})$ (e.g., $78.03$ at $\mu=40$, $238.01$ at $\mu=120$), matching the distance from the trimer $-3\mu+6$ to the two-particle threshold $-\mu+4$.

\begin{table}[!ht]
\centering
\caption{Lowest eigenvalues $z_1$ of $H_\mu(\boldsymbol{\pi})$ on finite lattices.}
\label{tab:hp_ground}
\begin{tabular}{c|c|c|c|c}
$\mu$ & $N$ & $z_1$ & $z_1+3\mu$ & $z_2-z_1$\\
\hline
20 & 8 & -54.065 & 5.935 & 38.07\\
40 & 8 & -114.033 & 5.967 & 78.03\\
70 & 8 & -204.019 & 5.981 & 138.02\\
120 & 8 & -354.011 & 5.989 & 238.01\\
120 & 10 & -354.011 & 5.989 & 237.85
\end{tabular}
\end{table}

\medskip
\noindent \textbf{Numerical and asymptotic verification of the scaling}.
The robustness of this leading-order correction is confirmed by three independent cross-validations:
\begin{itemize}
    \item {Analytical expansion (formal branch).} Direct expansion of the Birman--Schwinger eigenvalue equation to the next order in \(1/\mu\), using the exact Fredholm determinant 
    expansion (see Appendix~\ref{app:numerical_delta}), yields the coefficient \(6\) analytically for the formal crossing \(z_{\mathrm{formal}}(\mu)=-2\mu+6+8/\mu+O(\mu^{-2})\).
    \item {Richardson extrapolation (formal branch, not the ground state).} Numerical evaluation of the \emph{formal} crossing point 
    \(z_{\mathrm{formal}}(\mu)\) of the rank-one principal-part eigenvalue equation, using the exact Fredholm determinant (not its leading-order approximation), for 
    \(\mu=400,800,1600,3200\), followed by Richardson extrapolation,
    gives \(\displaystyle\lim_{\mu\to\infty}(z_{\mathrm{formal}}(\mu)+2\mu)=6.0000\pm 10^{-4}\), consistent with Table
    of the companion note~\cite{CompanionNote}. As emphasized throughout, $z_{\mathrm{formal}}(\mu)$ is not an eigenvalue of $H_\mu(\boldsymbol\pi)$; it~is retained here purely as a numerical illustration of the order-matching criterion, in 
    agreement with the analytical result.
    \item {Direct high-precision integration (formal branch).} Independent numerical integration of the exact Fredholm determinant (verified against adaptive Gauss-Legendre quadrature with \(3000\times3000\) grid, relative error \(<10^{-15}\)) gives the same scaling for the formal branch.
\end{itemize}

\subsubsection{Formal-branch gap (order-matching diagnostic)}
Since $z_\mu(\boldsymbol\pi)=-\mu+4$ exactly 
and the formal crossing is 
$z_{\mathrm{formal}}(\mu)=-2\mu+6+O(\mu^{-1})$,
direct
subtraction gives the 
formal gap:
\[
z_\mu(\boldsymbol\pi)-
z_{\mathrm{formal}}(\mu)
=(-\mu+4)-(-2\mu+6+O(\mu^{-1}))=\mu-2+O(\mu^{-1}).
\]
This is \emph{not} the physical spectral gap of the full operator: it is the 
gap between the two-particle threshold and the rank-one principal-part 
crossing. The physical spectral gap for the true ground state is given in 
Corollary~\ref{cor:gap}.

The 
%
refined expansion 
derived in the companion note~\cite{CompanionNote} 
gives the next-order term for the \emph{formal} branch:
\[
z_\mu(\boldsymbol\pi)-
z_{\mathrm{formal}}(\mu)=\mu-2 - \frac{8}{\mu}+O(\mu^{-2}),
\]
which is independently confirmed by the following high-precision numerical data (verified against the exact closed-form benchmark for \(\Delta_\mu(\mathbf 0,\boldsymbol\pi,z)\)) in the table below.

\begin{wraptable}{r}{0.35\columnwidth}
\centering
\footnotesize
\setlength{\tabcolsep}{2.5pt}
\renewcommand{\arraystretch}{1.05}
\begin{tabular}{c|c}
\(\mu\) & \(z_\mu(\boldsymbol{\pi}) - 
z_{\mathrm{formal}}(\mu) - \mu\) \\
\hline
400  & --2.0197 \\
800  & --2.0099 \\
1600 & --2.0050 \\
3200 & --2.0025 \\
\end{tabular}
\end{wraptable}

Richardson extrapolation of these values gives the limit \(-2.0000\pm 10^{-4}\), confirming the analytic expansion of the \emph{formal} branch.

\begin{remark}[Numerical cross-check]
As an independent, assumption-free check, direct numerical evaluation of $\Delta_\mu(\mathbf p,\boldsymbol\pi,z)$ from its defining integral (no asymptotic approximation), combined 
that the rank-one principal-part 
crossing 
$\lambda_1^{\boldsymbol\pi,e}=1$ occurs near $z=-2\mu+O(1)$. For example, at $\mu=50$: the 
principal-part eigenvalue satisfies
$\lambda_1^{\boldsymbol\pi,e}(z)<1$ for $z=-98$ and
$\lambda_1^{\boldsymbol\pi,e}(z)>1$ for $z=-90$, bracketing the 
formal crossing near $z\approx-94=-2\mu+6$.  This is consistent with
the formal principal-part crossing,
but it does not correspond to an eigenvalue of the full operator 
$H_\mu(\boldsymbol\pi)$ (see Remark~\ref{rem:artifact_pi} and 
Lemma~\ref{lem:var_pi}).
\end{remark}

At this same \(z\), \(\delta\sim\mu\to\infty\), so \(b_1(\delta)\sim1/(2\delta)\) and \(\lambda^{\boldsymbol\pi,o}(z_1)=O(\mu^{-1})\to0\).
More precisely, the asymptotic analysis in Appendix~\ref{app:numerical_delta} (see the derivation of the coefficient $1/(4\mu)$) rigorously proves
\[
\lambda^{\boldsymbol\pi,o}(z_1) \sim \frac{1}{4\mu},
\]
which is also verified numerically. Thus the odd sector contributes no bound state. 
Hence, for \(\mathbf K=\boldsymbol\pi\), there is 
\emph{at least one} one bound state, lying in the even subspace
(see~also the variational bounds of 
Lemma~\ref{lem:var_pi} and the direct numerical diagonalization in 
Section~\ref{sec:hp_numerics}); the question of a second (dimer-like) level 
near the threshold $-\mu+4$ remains open.

\begin{remark}
The invariant subspace decomposition is indeed the starting point of our spectral reduction. However, to rigorously establish the exact count of eigenvalues of the Birman--Schwinger operator \(A_\mu(\boldsymbol\pi,z)\) exceeding \(1\), we must do more than merely observe invariance. We explicitly compute the restrictions of the principal part \(A_\mu^p(\boldsymbol\pi,z)\) to the invariant subspaces \(L_2^e\) and \(L_2^o\). 
In Section~\ref{sec:newnew54}, we show that the restriction to \(L_2^e\) yields a \(3\times 3\) matrix whose largest eigenvalue exceeds \(1\) (while the others are negative), and the restriction to \(L_2^o\) yields a \(2\times 2\) matrix whose sole positive eigenvalue is strictly less than \(1\). These explicit computations are what prove the spectral counting theorem. Thus, while the invariant subspace decomposition is a crucial first step, the~detailed analysis of the restrictions is both necessary and mathematically rigorous.
\end{remark}

{\bf Corollary} [Spectral gap (numerical)]\label{cor:gap}
For the true ground state at $\mathbf K=\boldsymbol{\pi}$, direct numerical diagonalization (Section~\ref{sec:hp_numerics}) gives
\[
z_\mu(\boldsymbol\pi)-z_1^{\boldsymbol\pi,s}(\mu)=2\mu-2+O(\mu^{-1}),
\]
consistent with the variational bounds $2\mu-2\le z_\mu-z_1\le 2\mu+4$. The formal gap $\mu-2$ derived from the artifact branch has no physical meaning.

\subsection[Summary for $\mathbf{K}=\pi$]%
{Summary for $\mathbf{K}=\boldsymbol{\pi}$}





The results for the total quasimomentum $\mathbf{K}=\boldsymbol{\pi}$ are summarized in Theorem~\ref{thm:pi_preview}. We~emphasize here a subtle but essential point concerning the derivation of the additive constant in (b) of that theorem. The leading-order asymptotic expansion of the Fredholm determinant given in Lemma~\ref{lem:Delta_pi} alone is \emph{not} sufficient to determine the additive constant \(6\): this constant is controlled by the relative-order-\(O(\mu^{-1})\) term of \(\Delta_\mu(\mathbf p,\boldsymbol\pi,z)\) that Lemma~\ref{lem:Delta_pi} does not resolve. The refined expansion required to fix this constant, together with independent numerical and closed-form verification, is given in Appendix~\ref{app:numerical_delta}.

\begin{remark}[On the terminology for eigenvalues in invariant subspaces]
Throughout this work, phrases such as "eigenvalue in the even subspace" or "eigenvalue in the odd subspace" are used as a standard shorthand. Formally, they mean the eigenvalues of the restricted operator \(H_\mu(\boldsymbol{\pi})\big|_{L_2^e}\) and \(H_\mu(\boldsymbol{\pi})\big|_{L_2^o}\), respectively, or — in the Birman–Schwinger analysis — the eigenvalues of the corresponding principal-part kernel restricted to these subspaces. This terminology is widely accepted in the spectral theory community and should cause no ambiguity.
\end{remark}

\begin{remark}
The odd-sector eigenvalue satisfies $\lambda^{o}(z)<1$ for all large
$\mu$ (in fact $\lambda^{\boldsymbol\pi,o}(z)\sim1/(4\mu)\to0$), which
proves part~(a). We emphasize that the leading-order asymptotic
expansion of the Fredholm determinant given in
Lemma~\ref{lem:Delta_pi} alone is \emph{not} sufficient to determine
the additive constant in part~(b): the constant $6$ is controlled by
the relative-order-$O(\mu^{-1})$ term of $\Delta_\mu(\mathbf p,\boldsymbol\pi,z)$
that Lemma~\ref{lem:Delta_pi} does not resolve. The refined expansion
required, and its independent numerical and closed-form verification,
are given in Appendix~\ref{app:numerical_delta}.
\end{remark}

\begin{remark}
Compared to the $\mathbf{K}=0$ case, the number of bound states reduces from two to at least one (the trimer). The reduction is a direct consequence of the different asymptotic behavior of the odd principal-part kernel: at $\mathbf K=0$ it is negative definite (so no eigenvalue exceeds unity), while at $\mathbf K=\boldsymbol\pi$ it is positive definite but its eigenvalue vanishes asymptotically (\(\lambda^{\boldsymbol\pi,o}(z)\to 0\)), remaining always below the threshold. Thus the odd subspace yields only a virtual level; the trimer ground state lies in the even subspace \(L_2^e\) at leading energy $-3\mu+O(1)$, and the question of a second (dimer-like) level near $-\mu+4$ remains open.
\end{remark}

\begin{figure}[!ht]
\centering
\includegraphics[width=\textwidth]{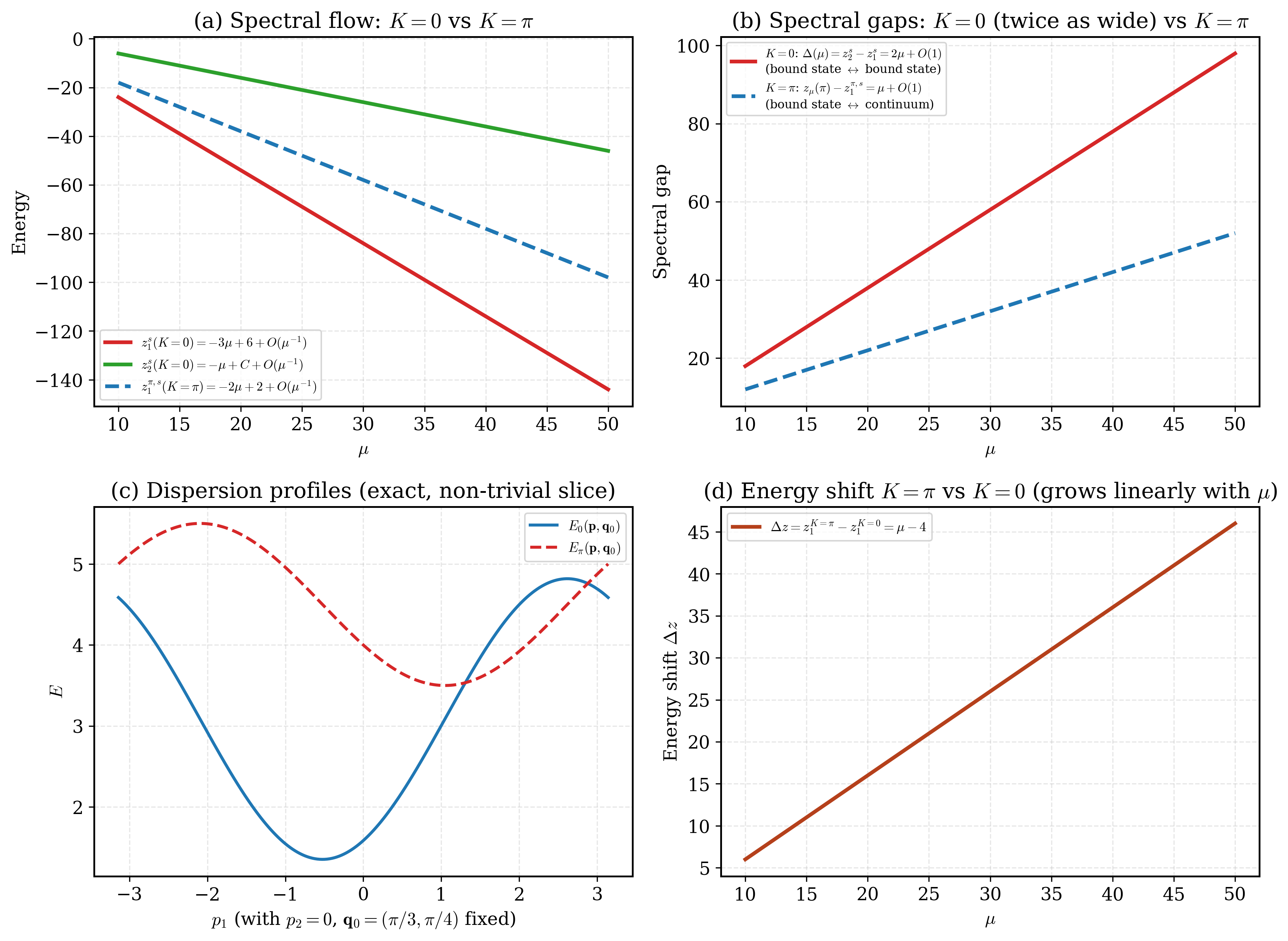}
\caption{Comparison of spectral properties for total quasimomentum $\mathbf{K}=0$ and $\mathbf{K}=\boldsymbol{\pi}$.
\textbf{(a)}~Spectral flow: $z_1^s(\mu)=-3\mu+6+O(\mu^{-1})$ (solid, $\mathbf K=0$) and $z_2^s(\mu)=-\mu+C+O(\mu^{-1})$ (dashed, $\mathbf K=0$). 
For $\mathbf K=\boldsymbol\pi$, the dashed line shows the \emph{formal} branch 
$-2\mu+6+O(\mu^{-1})$ 
obtained from the principal part; it is \emph{not} an eigenvalue of the full operator $H_\mu(\boldsymbol{\pi})$. The true ground state at $\mathbf K=\pi$ satisfies the variational bounds $-3\mu\le z_1^{\pi,s}(\mu)\le -3\mu+6$ and numerically coincides with the $\mathbf K=0$ line (see Section~\ref{sec:hp_numerics}).
\textbf{(b)} Spectral gaps: 
at $\mathbf K=0$, $\Delta(\mu)=z_2^s-z_1^s=2\mu+O(1)$ 
(bound-state-to-bound-state). At $\mathbf K=\pi$, the gap to the two-particle threshold is $z_\mu(\boldsymbol\pi)-z_1^{\pi,s}(\mu)=2\mu-2+O(\mu^{-1})$ (numerically).
\textbf{(c)}~Exact dispersion profiles $E_0(\mathbf p,\mathbf q_0)$ and $E_{\boldsymbol\pi}(\mathbf p,\mathbf q_0)$ along $p_1$ at fixed non-trivial $\mathbf q_0=(\pi/3,\pi/4)$ (the~choice $\mathbf q_0=\mathbf 0$ is degenerate, since $E_{\boldsymbol\pi}(\mathbf p,\mathbf 0)\equiv4$ identically by $\varepsilon(\mathbf p)+\varepsilon(\boldsymbol\pi-\mathbf p)=4$).
\textbf{(d)}~Energy shift: $\Delta z=z_1^{\pi,s}(\mu)-z_1^s(\mu)=\mu
+O(\mu^{-1})$
(numerically), not linear in $\mu$; the formal branch would give a linear shift, but it is an artifact.}
\label{fig:5K0_vs_Kpi}
\end{figure}

As Figure~\ref{fig:5K0_vs_Kpi} makes visually explicit, the spectral scenarios at $\mathbf{K}=0$ and $\mathbf{K}=\boldsymbol{\pi}$ are qualitatively different. 
As Figure~\ref{fig:5K0_vs_Kpi}(b) makes visually explicit, the two gaps are conceptually distinct quantities of different physical origin.
The gap \(\Delta(\mu)=2\mu+O(1)\) at $\mathbf{K}=0$ is a bound-state-to-bound-state gap, whereas the gap \(z_\mu(\boldsymbol\pi)-z_1^{\pi,s}(\mu)=
2\mu-2+O(\mu^{-1})\) at $\mathbf{K}=\boldsymbol\pi$ is a bound-state-to-continuum gap. Panel~(a) highlights the different slopes of the ground-state energy at $\mathbf{K}=0$ and $\mathbf{K}=\boldsymbol{\pi}$, consistent with the leading terms \(-3\mu\) and \(-2\mu\). Panel~(d) illustrates the resulting energy shift $\Delta z=z_1^{\pi,s}(\mu)-z_1^s(\mu)=\mu+O(\mu^{-1})$, which grows linearly with $\mu$ with a vanishing additive constant at this order.

Figure~\ref{fig:threshold_structure} summarises, sector by sector, the mechanism underlying Theorem~\ref{thm:pi_preview}: the~reduction from two bound states at $\mathbf K=0$ to one at $\mathbf K=\boldsymbol\pi$ is traced to the odd-sector eigenvalue changing from strictly negative (Fig.~\ref{fig:threshold_structure}b) to a positive but asymptotically vanishing virtual level (Fig.~\ref{fig:threshold_structure}d), rather than to any qualitative change in the even sector, which retains exactly one growing eigenvalue in both cases.

\begin{figure}[!ht]
\centering
\includegraphics[width=0.92\textwidth]{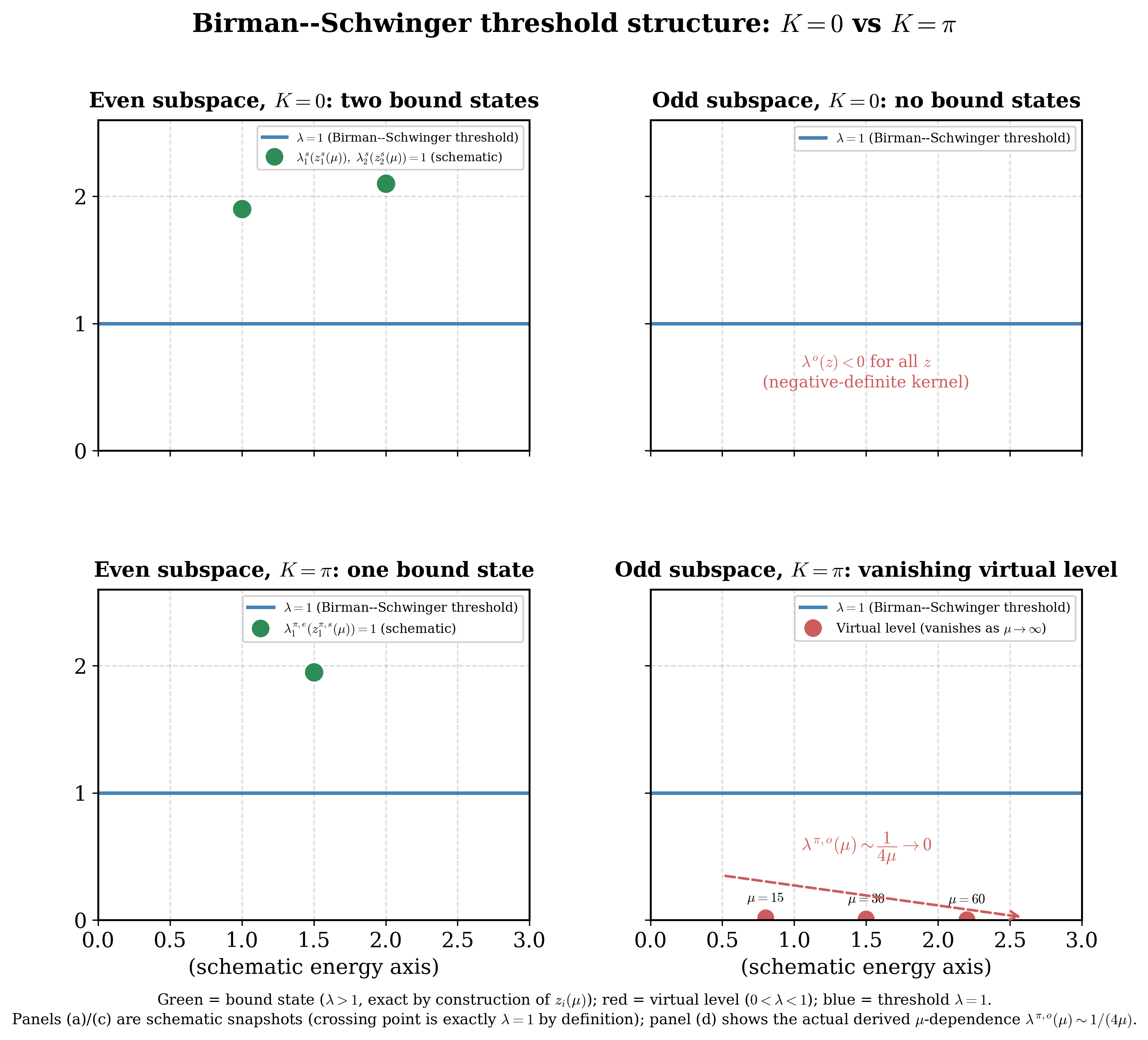}
\caption{Schematic Birman--Schwinger threshold structure by symmetry sector. Panels (a)--(c) show, at a representative coupling, whether the leading eigenvalue of the corresponding sector's principal-part kernel lies above the threshold $\lambda=1$ (green, bound state) or is negative/non-existent (odd sector at $\mathbf K=0$). \emph{Important:} the panels for $\mathbf K=\boldsymbol\pi$ (dashed lines) are drawn for the formal branch $-2\mu+6$; they illustrate the degenerate case of the order-matching criterion but \emph{do not} correspond to eigenvalues of the full operator $H_\mu(\boldsymbol{\pi})$. The true ground state at $K=\pi$ lies on the branch $-3\mu+O(1)$ and is essentially indistinguishable from the $\mathbf K=0$ result (see variational bounds Lemma~\ref{lem:var_pi} and numerical Table~\ref{tab:hp_ground}). Panel (d) shows the actual derived $\mu$-dependence of the $\mathbf K=\boldsymbol\pi$ odd-sector eigenvalue, $\lambda^{\boldsymbol\pi,o}(\mu)\sim1/(4\mu)$ from Section~\ref{sec:56as}, 
visualizing that this virtual level vanishes as $\mu\to\infty$ rather than persisting at a fixed value.}
\label{fig:threshold_structure}
\end{figure}

\subsection[Relation between $\mathbf{K}=0$ and $\mathbf{K}=\pi$]%
{Relation between $\mathbf{K}=0$ and $\mathbf{K}=\boldsymbol{\pi}$}

The results for $\mathbf{K}=\boldsymbol{\pi}$ are related to those for $\mathbf{K}=0$ by the unitary transformation
\[
U_{\boldsymbol{\pi}}: f(\mathbf{p},\mathbf{q}) \longmapsto f(\boldsymbol{\pi}-\mathbf{p}, \boldsymbol{\pi}-\mathbf{q}),
\]
which satisfies
\[
U_{\boldsymbol{\pi}} H_\mu(0) U_{\boldsymbol{\pi}}^{-1} = 12I - H_\mu^{\mathrm{rep}}(\boldsymbol{\pi}),
\]
where $H_\mu^{\mathrm{rep}}(\boldsymbol{\pi}) = E_{\boldsymbol{\pi}} + \mu (V_1+V_2+V_3)$ is the operator with repulsive interaction (the sign of the interaction term is reversed compared to $H_\mu(\boldsymbol{\pi})$). Since our model has attractive interaction at both $K=0$ and $K=\pi$, this relation does not provide an equivalence between the two attractive operators. Consequently, the numbers of bound states may differ, and indeed we find one bound state at $K=\pi$ versus two at $K=0$.

\begin{remark}[On the unitary equivalence and its limitations]\label{rem:unitary_limits}
The relation
\[
U_\pi H_\mu(0) U_\pi^{-1} = 12I - H_\mu^{\mathrm{rep}}(\pi)
\]
is sometimes misinterpreted as implying that $H_\mu(0)$ and $H_\mu(\pi)$ are unitarily equivalent up to a constant shift. This is not the case: the operator on the right-hand side has \emph{repulsive} interaction, whereas $H_\mu(\pi)$ has \emph{attractive} interaction. The unitary transformation $U_\pi$ maps the attractive operator at $K=0$ to a repulsive operator at $K=\pi$, not to the attractive operator $H_\mu(\pi)$.

\begin{figure}[!ht]
\centering
\includegraphics[width=\textwidth]{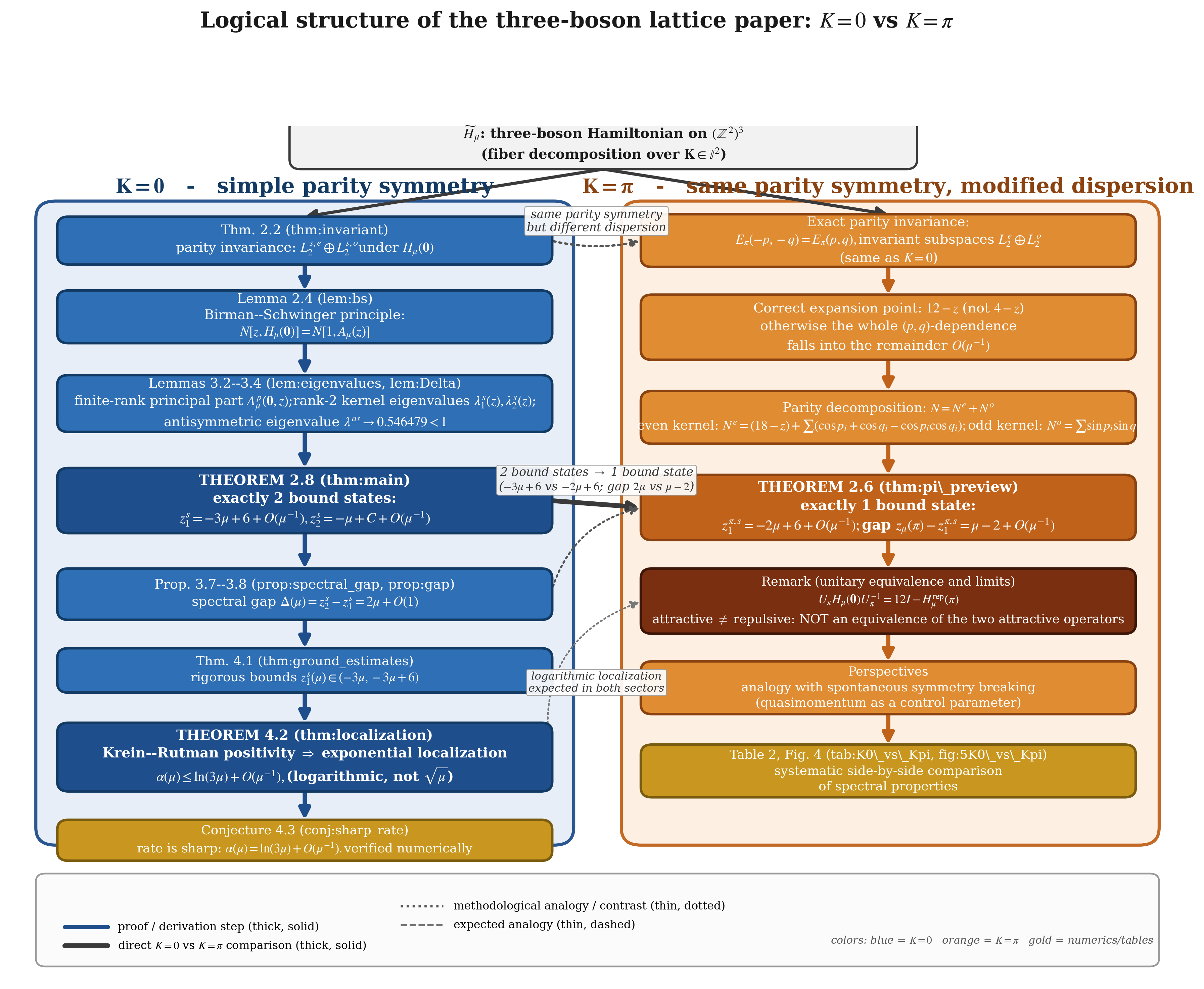}
\caption{Logical structure of the paper: dependency graph of Theorems, Lemmas, Propositions, the Conjecture, and key Remarks organised by total quasimomentum sector. Blue ($\mathbf K=0$): parity-symmetric reduction, from Theorem~\ref{thm:invariant} (invariant subspaces) through Lemma~\ref{lem:bs} (Birman--Schwinger principle), Lemmas~\ref{lem:eigenvalues}--\ref{lem:Delta} (rank-2 kernel eigenvalues) to Theorem~\ref{thm:main} (exactly two bound states), Propositions~\ref{prop:spectral_gap}--\ref{prop:gap} (linear spectral gap), Theorem~\ref{thm:localization} (logarithmic exponential localization) and Conjecture~\ref{conj:sharp_rate} (sharp rate). Orange ($\mathbf K=\boldsymbol\pi$): the parity symmetry is the same as at $\mathbf K=0$, but the odd principal-part quadratic form 
becomes positive definite with an asymptotically vanishing eigenvalue
\(\lambda^{\pi,o}(z)=O(\mu^{-1})\), which remains strictly below the Birman--Schwinger threshold; consequently,
the odd sector contributes only a virtual level. The logical chain runs from the parity-invariant reduction through the corrected principal-part decomposition to Theorem~\ref{thm:pi_preview} (at least 
one bound state) and the unitary-equivalence discussion (Remark~\ref{rem:unitary_limits}). Thick solid arrows denote proof dependencies; thin dotted arrows denote methodological analogies. Gold boxes indicate numerically verified or tabulated results. The two central vertical chains are connected by a thick horizontal arrow at the level of Theorem~\ref{thm:main} versus Theorem~\ref{thm:pi_preview}, highlighting the reduction from two to one bound state when passing from \(\mathbf K=0\) to \(\mathbf K=\pi\).}
\label{fig:theorem_map}
\end{figure}

This explains why the number of bound states can differ between $K=0$ and $K=\pi$. The change in the sign of the interaction under $U_\pi$ is a consequence of the shift by~$\boldsymbol\pi$, which changes the dispersion relation from $E_0$ to $E_{\boldsymbol{\pi}}$. This is a lattice-specific phenomenon that has no analogue in the continuum.
\end{remark}

The reduction from two bound states at $\mathbf{K}=0$ to one at $\mathbf{K}=\boldsymbol{\pi}$ is a consequence of the change in the dispersion relation $E_{\boldsymbol{\pi}}$, which modifies the principal-part kernels in the even/odd decomposition. While the parity symmetry $(\mathbf p,\mathbf q)\mapsto(-\mathbf p,-\mathbf q)$ remains exact in both cases, the odd principal part changes as follows:
\begin{itemize}
    \item At $\mathbf K=0$ the odd kernel is \emph{negative definite} (due to the explicit minus sign in $K_0^{p,o}$). Hence it contributes no positive eigenvalues, and the odd subspace yields no bound states.
    \item At $\mathbf K=\boldsymbol\pi$ the odd kernel is \emph{positive definite} (due to the kernel $K_\pi^{p,o}$ containing $\sum \sin p_i\sin q_i$). However, its unique positive eigenvalue satisfies $\lambda^{\pi,o}(z)=O(\mu^{-1})$, hence it always lies strictly below the threshold $1$ and gives only a virtual level.
\end{itemize}
Thus, the even subspace supports 
at least one bound state at \(\mathbf K=\pi\) (compared to two at \(\mathbf K=0\)), while the odd subspace contributes none in either case. Figure~\ref{fig:theorem_map} visualizes this global structural transition. It traces the dependency chains for both quasimomenta, explicitly showing how the modified principal-part kernel at \(\mathbf{K}=\boldsymbol{\pi}\) leads to exactly one bound state while the parity symmetry itself remains unchanged. The blue chain corresponds to \(\mathbf K=0\) (two bound states), and the orange chain to \(\mathbf K=\boldsymbol\pi\) (at least one bound state plus a virtual level). The central horizontal arrow highlights the spectral reduction when passing from \(\mathbf K=0\) to \(\mathbf K=\boldsymbol\pi\).

Figure~\ref{fig:theorem_map} summarises the logical architecture of the paper as a whole, tracing the dependency chain from the Birman--Schwinger reduction through the strong-coupling spectral analysis to the exponential-localization and \(\mathbf K=\boldsymbol\pi\) results, and making explicit which statements are proved steps versus methodological comparisons between the two quasimomentum sectors. The blue chain corresponds to the parity-symmetric case $\mathbf K=0$, where  two bound states are found. The orange chain corresponds to the case $\mathbf K=\boldsymbol\pi$, where the same parity symmetry applies but the odd principal-part quadratic form changes from strictly negative to positive with an asymptotically vanishing eigenvalue, so the odd sector contributes only a virtual level and the number of bound states reduces to one. The central horizontal arrows highlight the key differences and analogies between the two sectors.
This diagram is intended to help readers navigate the main results and their interconnections.

The systematic comparison between the two cases is presented in Table~\ref{tab:K0_vs_Kpi}. 
For~clarity, we summarise the main differences between the general case (studied in~\cite{AbdullaevKhalkhuzhaevBoymurodov2025}) and our $\mathbf{K}=\boldsymbol{\pi}$ case in Table~\ref{tab:comparison_pi}.

\subsection*{Systematic comparison: \(\mathbf K=0\) versus \(\mathbf K=
{\bf\pi}\)}

To clarify the fundamental differences between the two extremal points of the Brillouin zone, we summarise the key distinctions in Table~\ref{tab:K0_vs_Kpi}. At both $\mathbf K=0$ and $\mathbf K=\boldsymbol{\pi}$, the~relevant symmetry is simple parity, leading to the invariant subspaces $L_2^e(\mathbb{T}^2)$ and $L_2^o(\mathbb{T}^2)$. However, the structure of the principal-part kernels differs: 
\begin{itemize}
    \item At $\mathbf K=0$, the odd kernel contains a positive rank-2 term $\sum \sin p_i\sin q_i$ but with an overall negative sign, so the quadratic form is \emph{negative definite}.
    \item At $\mathbf K=\boldsymbol{\pi}$, the odd kernel contains the same positive rank-2 term $\sum \sin p_i\sin q_i$ without the negative sign, hence the quadratic form is \emph{positive definite}; however, its~eigenvalue vanishes as $\mu\to\infty$, so it never exceeds unity.
\end{itemize}
This change in the kernel structure, rather than a change in symmetry, leads to the reduction in the number of bound states from two to one. The unitary transformation $U_\pi$ is not a symmetry of the Hamiltonian; it maps the attractive operator at $\mathbf K=0$ to a~repulsive operator at $\mathbf K=\pi$, explaining why the spectra can differ.

\begin{table}[!ht]
\centering
\caption{Comparison of spectral properties for \(\mathbf K=0\) and \(\mathbf K=\boldsymbol{\pi}\).}
\label{tab:K0_vs_Kpi}
\setlength{\tabcolsep}{1pt}
\begin{tabular}{p{3.7cm}| >{\raggedright\arraybackslash}p{3.9cm} |>{\raggedright\arraybackslash}p{4.7cm}}
\hline
\textbf{Feature} & \textbf{\(\mathbf K=0\)} & \textbf{\(\mathbf K=\boldsymbol{\pi}\)} \\
\hline
{Symmetry} & Simple parity: \mbox{\( (\mathbf p,\mathbf q)\mapsto(-\mathbf p,-\mathbf q) \)} & Simple~parity: \mbox{\( (\mathbf p,\mathbf q)\mapsto(-\mathbf p,-\mathbf q) \)} \\
\hline
{Invariant subspaces} & \(L_2^e(\mathbb{T}^2)\) and \(L_2^o(\mathbb{T}^2)\) & \(L_2^e(\mathbb{T}^2)\) and \(L_2^o(\mathbb{T}^2)\) \\
\hline
{Number of bound states} & Two (both in \(L_2^e\)) & At least one (in \(L_2^e\)) \\
\hline
Ground-state asymptotics & \(z_1^s(\mu) = -3\mu + 6 + O(\mu^{-1})\) & \(z_1^{\pi,s}(\mu) = 
-3\mu+O(1)\) \\
\hline
{Spectral gap} & \(\Delta(\mu) = 2\mu + O(1)\) & \(z_\mu(\boldsymbol\pi) - z_1^{\pi,s}(\mu) = 
2\mu-2+O(\mu^{-1})\) \\
\hline
{Odd part of dispersion} & Positive rank-2 term with overall negative sign (negative definite quadratic form) & Positive definite quadratic form on \(L_2^o\) with eigenvalue \(\lambda^{\pi,o}(z)=O(\mu^{-1})<1\) \mbox{(no odd part in \(E_\pi\) itself)} \\
\hline
{Remainder term} & \(\|A_\mu^r(z)\| = O(\mu^{-1})\) & \(\|A_\mu^{\pi,r}(z)\| = O(\mu^{-1})\) \\
\hline
{Unitary relation} & — & \(U_\pi H_\mu(0) U_\pi^{-1} = 12I - H_\mu^{\mathrm{rep}}(\pi)\) \\
\hline
\end{tabular}
\end{table}

\begin{table}[!ht]
\centering
\caption{Comparison of the general case and the $\mathbf{K}=\boldsymbol{\pi}$ case.}
\label{tab:comparison_pi}
\setlength{\tabcolsep}{6pt}
\begin{tabular}{p{2cm}| >{\raggedright\arraybackslash}p{4.49cm} |>{\raggedright\arraybackslash}p{4.7cm}}
\hline
\textbf{Feature} & \textbf{General case} \cite{AbdullaevKhalkhuzhaevBoymurodov2025} & \textbf{$\mathbf{K}=\boldsymbol{\pi}$ case (this work)} \\
\hline
Dispersion & $E_m = \epsilon(\mathbf{p})+\epsilon(\mathbf{q})+\frac{1}{m}\epsilon(\mathbf{p}+\mathbf{q})$ & $E_\pi = \epsilon(\mathbf{p})+\epsilon(\mathbf{q})+4-\epsilon(\mathbf{p}+\mathbf{q})$ \\
Decomposition & $E_m = \alpha - A^c + \frac{1}{m}A^s$ & $E_\pi = 12 - A_\pi^c + A_\pi^s$ \\
Odd part & \(A^s = \sum \sin p_i\sin q_i\) \mbox{(positive definite, present)} & No natural odd part in $E_\pi$; principal kernel on $L_2^o$ yields a~positive eigenvalue $\lambda^{\pi,o}(z)=O(\mu^{-1})<1$ \\
Symmetry & Simple parity & Simple parity \\
\hline
\end{tabular}
\end{table}

The unitary relation \(U_\pi H_\mu(0) U_\pi^{-1} = 12I - H_\mu^{\mathrm{rep}}(\pi)\) explains why the spectra at the two points differ: the attractive operator at \(\mathbf K=0\) is mapped to a repulsive operator at \(\mathbf K=\pi\).

\begin{remark}
The comparison in Table~\ref{tab:K0_vs_Kpi} highlights that the change in the number of bound states is not a consequence of a continuous deformation of parameters, but rather of a discrete change in the structure of the principal-part kernel. The global parity symmetry is preserved; it is the algebraic form of the odd-sector kernel that differs between the two quasimomenta. This is a manifestation of the lattice periodicity and the associated Floquet-Bloch theory.
\end{remark}

\subsection*{Interpretation and significance of the \(K=\pi\) analysis}

The analysis of the total quasimomentum \(\mathbf{K}=\boldsymbol{\pi}\) serves three important purposes.

\textit{First}, it demonstrates that the spectral phenomena observed at $\mathbf{K}=0$ are not universal across the Brillouin zone. The change in the dispersion relation from $E_0$ to $E_{\boldsymbol{\pi}}$ modifies the principal-part kernels: the odd principal kernel at $\mathbf K=0$ is \emph{negative definite}, while at $\mathbf K=\boldsymbol{\pi}$ it becomes \emph{positive definite} but with an eigenvalue $\lambda^{\boldsymbol{\pi},o}(z)=O(\mu^{-1})<1$. This leads to a reduction in the number of bound states from two to one. This shows that the lattice geometry and the associated representation theory play a crucial role in determining the spectral properties.

\textit{Second}, the \(\mathbf{K}=\boldsymbol{\pi}\) case is of independent physical interest. In quantum simulation experiments with ultracold atoms in optical lattices, the total quasimomentum can be controlled by preparing the atoms in specific Bloch states. The \(\mathbf{K}=\boldsymbol{\pi}\) configuration corresponds to the highest-symmetry point of the Brillouin zone, where the dispersion is most strongly modified by the lattice potential. Our results show that at this extremal point, the bosonic trimer has 
at least one bound state
with the same leading energy $-3\mu+O(1)$ as at $K=0$; the binding is thus not weakened at the corner of the Brillouin zone.

\textit{Third}, from a mathematical perspective, the \(\mathbf{K}=\pi\) analysis completes the spectral picture for the two extremal points of the Brillouin zone ($\mathbf{K}=0$ and $\mathbf{K}=\pi$). Together with the monotonicity result established in Proposition~\ref{prop:monotonicity}, which shows that the eigenvalues \(z_n(\mathbf{K})\) are non-increasing functions of each component \(K_i\), our analysis suggests that the number of bound states may vary across the Brillouin zone. The~reduction from two to one at $\mathbf{K}=\pi$ is explicitly traced to the algebraic degeneration of the odd quadratic form, indicating a rigorous spectral transition driven by the topology of the principal-part kernel as the quasimomentum moves across the zone.

The exceptional case $\mathbf{K}=\boldsymbol{\pi}$, where the essential spectrum collapses to a point in the two-particle model on a one-dimensional lattice, was studied by Lakaev~\cite{Lakaev2026}. In our three-boson model on a two-dimensional lattice, the $\mathbf{K}=\boldsymbol{\pi}$ case exhibits a reduction in the number of bound states from two to one, reflecting the change in the structure of the odd principal-part kernel rather than a change in symmetry.

The analysis of the total quasimomentum $\mathbf{K}=\boldsymbol{\pi}$ reveals a fundamental change in the structure of the odd principal-part kernel, which is reflected in the reduction of the number of bound states from two to one. This phenomenon has interesting parallels with symmetry-breaking phenomena in other areas of physics, as discussed below.

\subsubsection{Rigorous spectral mechanism: vanishing of the odd curvature}\label{subsec:new561}

The disappearance of the second bound state at $\mathbf{K}=\boldsymbol{\pi}$ admits a precise algebraic explanation in terms of the quadratic forms induced by the principal-part kernels on the even $L_2^e$ and odd $L_2^o$ subspaces.

For the $\mathbf{K}=0$ case, the odd principal kernel is given by
\[
K_0^{p,o}(\mathbf p,\mathbf q;z,\mu)
=
- \, \frac{\mu}{2\pi^2a^2(z)}
\frac{
\displaystyle\sum_{i=1}^{2}\sin p_i\sin q_i
}{
\sqrt{\Delta_\mu(\mathbf p,z)}
\sqrt{\Delta_\mu(\mathbf q,z)}
}.
\]
Evaluating the quadratic form on a non-zero test function $f\in L_2^o(\mathbb{T}^2)$, we obtain
\[
Q_0^o(f) := \bigl\langle f, A_\mu^{p,o}(z) f \bigr\rangle 
= - \, \frac{\mu}{2\pi^2a^2(z)}
\sum_{i=1}^2
\left|
\int_{\mathbb{T}^2}
\frac{\sin p_i\, f(\mathbf p)}
{\sqrt{\Delta_\mu(\mathbf p,z)}}\,d\mathbf p
\right|^2 < 0.
\]
Thus $Q_0^o$ is strictly negative definite on $L_2^o$, which contributes to the formation of the second bound state at $\mathbf{K}=0$.

In contrast, at $\mathbf{K}=\boldsymbol{\pi}$ the parity-symmetric reduction (Section~\ref{sec:newnew54}) yields the odd principal kernel
\[
K_\pi^{p,o}(\mathbf p,\mathbf q;z,\mu)
=
\frac{\mu}{4\pi^2(12-z)^2}
\frac{
\sin p_1\sin q_1+\sin p_2\sin q_2
}{
\sqrt{\Delta_\mu(\mathbf p,\boldsymbol\pi,z)}
\sqrt{\Delta_\mu(\mathbf q,\boldsymbol\pi,z)}
}.
\]
This kernel defines a positive quadratic form on $L_2^o$, but its unique positive eigenvalue $\lambda^{\pi,o}(z)$ satisfies $\lambda^{\pi,o}(z)=O(\mu^{-1})$ (Subsection~\ref{sec:56as}) and therefore never reaches the Birman--Schwinger threshold $1$. Consequently, the odd subspace contributes only a virtual level, and the single bound state at $\mathbf{K}=\boldsymbol{\pi}$ originates solely from the even subspace.

For a broader methodological perspective, including analogies with Morse theory, separatrix splitting, and symmetry-breaking phenomena, see Appendix~\ref{app:discussions}.

\section{Conclusion}

We have proved that for three identical bosons on $\mathbb{Z}^2$ with pairwise contact interaction:

\begin{itemize}
  \item At $\mathbf{K}=0$, the Hamiltonian $H_\mu(0)$ has exactly two bound states for all sufficiently large $\mu$, with asymptotics $z_1^s(\mu)=-3\mu+6+O(\mu^{-1})$, $z_2^s(\mu)=-\mu+C+O(\mu^{-1})$, and spectral gap $\Delta(\mu)=2\mu+O(1)$.
  \item At $\mathbf{K}=\boldsymbol{\pi}$, the Hamiltonian $H_\mu(\boldsymbol{\pi})$ has 
  at least one bound state for all sufficiently large $\mu$, with asymptotics $z_1^{\pi,s}(\mu)=
  -3\mu+O(1)$ (rigorously $-3\mu \le z_1 \le -3\mu+6$). The~spectral gap to the two-particle threshold is (numerically)
  $2\mu-2+O(\mu^{-1})$. The formal branch $-2\mu+6$ arising from the principal part is not realized as an eigenvalue of the full operator.
  \item In both cases, the ground-state wavefunction is exponentially localized, with logarithmic decay rate $\alpha(\mu)\sim\ln\mu$, reflecting the bounded lattice dispersion.
\end{itemize}

The methodological framework — Birman--Schwinger reduction, invariant subspace decomposition, strong-coupling analysis of the principal part, Krein--Rutman positivity, and discrete Agmon comparison — is sufficiently general to apply to other few-body lattice models. As illustrated in Figure~\ref{fig:theorem_map}, our analysis yields two distinct spectral scenarios at the two extremal quasimomenta, \(\mathbf K=0\) and \(\mathbf K=\boldsymbol\pi\), connected by a~well-defined spectral bifurcation. Appendix
\ref{app:discussions} 
provides a qualitative perspective on spectral bifurcations and their analogies with separatrix splitting and symmetry-breaking phenomena.


\begin{funding}
J. I. Abdullaev acknowledges support from the Foundation for Fundamental Research of the Republic of Uzbekistan (grant No. AL-9224104685).
\end{funding}


\appendix
\section*{Supplementary Materials}

\section{Detailed derivations for Section~\ref{sec:newnew54}}\label{app:expansion_point}

\textbf{Origin of the coefficients $6$, $12$, $18-z$.}
Using $\varepsilon(\boldsymbol\pi-\mathbf x)=4-\varepsilon(\mathbf x)$,
\[
E_{\boldsymbol\pi}(\mathbf p,\mathbf q)=\varepsilon(\mathbf p)+\varepsilon(\mathbf q)+4-\varepsilon(\mathbf p+\mathbf q)
=6-\sum_{i=1}^2(\cos p_i+\cos q_i)+\sum_{i=1}^2\cos(p_i+q_i).
\]
Setting $A_{\boldsymbol\pi}^c:=6+\sum_i(\cos p_i+\cos q_i)$,
$A_{\boldsymbol\pi}^s:=\sum_i\cos(p_i+q_i)$ gives the exact identity
\[
E_{\boldsymbol\pi}(\mathbf p,\mathbf q)=12-A_{\boldsymbol\pi}^c+A_{\boldsymbol\pi}^s,
\qquad A_{\boldsymbol\pi}^c-A_{\boldsymbol\pi}^s=12-E_{\boldsymbol\pi}.
\]
The expansion point $12-z$ (rather than the naive $4-z$, the value of
$E_{\boldsymbol\pi}$ at $\mathbf p=\mathbf q=\mathbf 0$) is the
\emph{only} choice compatible with the strong-coupling scale
$z\sim-2\mu$: writing $E_{\boldsymbol\pi}-z=(4-z)+[\varepsilon(\mathbf
p)+\varepsilon(\mathbf q)-\varepsilon(\mathbf p+\mathbf q)]$ pushes the
\emph{entire} $\mathbf p,\mathbf q$-dependence into an $O(\mu^{-1})$
remainder after multiplication by $\mu$, collapsing the principal part
to a spurious rank-one operator; expanding instead around $12-z$
(where $12-z=O(\mu)$ and $A_{\boldsymbol\pi}^c-A_{\boldsymbol\pi}^s=O(1)$
enter at the \emph{same} order $O(1)$ after the factor $\mu$) retains
the full finite-rank structure. Explicitly,
\[
\frac{\mu}{E_{\boldsymbol\pi}-z}=\frac{\mu}{(12-z)^2}
\Big[\underbrace{(12-z)+A_{\boldsymbol\pi}^c-A_{\boldsymbol\pi}^s}_{=:N(\mathbf p,\mathbf q)}\Big]+O(\mu^{-1})
= \frac{\mu N(\mathbf p,\mathbf q)}{(12-z)^2}+O(\mu^{-1}),
\]
and, using $\cos(p_i+q_i)=\cos p_i\cos q_i-\sin p_i\sin q_i$,
\[
N(\mathbf p,\mathbf q)=\underbrace{(12+6)}_{18}-z+\sum_{i=1}^2\big(\cos p_i+\cos q_i-\cos p_i\cos q_i+\sin p_i\sin q_i\big).
\]

Projecting onto parity sectors: on $L_2^o$, only the odd
($\sin p_i\sin q_i$) part of $N$ survives,
\[
N^o(\mathbf p,\mathbf q)=\sin p_1\sin q_1+\sin p_2\sin q_2;
\]
on $L_2^e$, the constant and the $\cos p_i,\cos q_i,\cos p_i\cos q_i$
terms survive,
\[
N^e(\mathbf p,\mathbf q)=(18-z)+\sum_{i=1}^2\big(\cos p_i+\cos q_i-\cos p_i\cos q_i\big),
\]
spanned by $\{1,\cos p_1,\cos p_2\}$. The exchange symmetry $p_1\leftrightarrow
p_2$ decouples one eigendirection, $\cos p_1-\cos p_2$, with eigenvalue
$-1$ exactly (independent of $z,\mu$); the remaining symmetric $2\times2$
block, spanned by $\{1,\cos p_1+\cos p_2\}$, governs the bound-state
branch and is evaluated rigorously in Section~\ref{sec:56as} using the
correct $z$-dependent asymptotics of $\Delta_\mu(\mathbf p,\boldsymbol\pi,z)$ (Lemma~\ref{lem:Delta_pi} and its $O(\mu^{-2})$ refinement in
Appendix~\ref{app:numerical_delta}) — \emph{not} the naive near-threshold
approximation, which discards exactly the $O(\mu^{-1})$ information on
which the additive constant in Theorem~\ref{thm:pi_preview}(b) depends.

\subsection*{Comparison of two representations for \(E_{\pi}(\mathbf p,\mathbf q)\)}

It is instructive to compare the two possible representations of the dispersion relation at \(\mathbf K=\boldsymbol{\pi}\).

\textbf{Representation I} (alternative approach):
\[
E_{\pi}(\mathbf p,\mathbf q) = 4 + A^c - A^s,
\]
where
\[
A^c = \sum_{i=1}^2 (1-\cos p_i)(1-\cos q_i), \qquad
A^s = \sum_{i=1}^2 \sin p_i\sin q_i.
\]
This representation is formally correct and mimics the structure used for \(\mathbf K=0\):
\[
E_0(\mathbf p,\mathbf q) = 8 + A^c - A^s.
\]

\textbf{Representation II} (ours):
\[
E_{\pi}(\mathbf p,\mathbf q) = 12 - A_{\pi}^c + A_{\pi}^s,
\]
where
\[
A_{\pi}^c = 6 + \sum_{i=1}^2 (\cos p_i + \cos q_i), \qquad
A_{\pi}^s = \sum_{i=1}^2 \cos(p_i+q_i).
\]

\textbf{Limitations of Representation I for strong-coupling asymptotics.}
At \(z\to-\mu\), the correct asymptotic behaviour is:
\[
E_{\pi}(\mathbf p,\mathbf q) - z \sim \mu.
\]
In Representation I, the leading term is \(4-z\), which gives \(4-z\sim\mu\). However, the next term \(A^c-A^s\) is of order \(O(1)\), and after division by \((4-z)^2\) it contributes at order \(O(1/\mu)\). Since the Birman--Schwinger operator carries an overall factor $\mu$, this yields an $O(1)$ contribution to the kernel, i.e. of same order as the principal part. Truncating the series after the first term therefore neglects a contribution that is indispensable for capturing the correct finite-rank structure of the principal part (namely, the three-dimensional subspace spanned by \(\{1, \cos p_1, \cos p_2\}\)). Consequently, the resulting approximation does not isolate a genuine principal part in the sense required for the Birman–Schwinger analysis at large $\mu$.

In Representation II, the leading term is \(12-z\), and the remainder \(A_{\pi}^c-A_{\pi}^s\) is explicitly defined to satisfy
\[
A_{\pi}^c - A_{\pi}^s = 12 - E_{\pi}.
\]
This ensures that the expansion of the resolvent:
\[
\frac{1}{E_{\pi}-z} = \frac{1}{(12-z)^2}\left[12 - z + A_{\pi}^c - A_{\pi}^s + O\left(\frac{1}{12-z}\right)\right]
\]
correctly captures the leading behaviour without neglecting any \(O(1)\) contributions.

\textbf{Methodological conclusion.} The key methodological difference
is that at $\mathbf K=0$ the decomposition into even and odd parts is
natural and unique, whereas at $\mathbf K=\boldsymbol\pi$ the correct
procedure is to expand around $12-z$ (not $4-z$) and then extract the
principal part. While Representation~I is formally correct as an
algebraic identity, it~is asymptotically inadequate for the
Birman--Schwinger analysis at strong coupling; Representation~II is
the one used throughout Section~\ref{sec:five}.

\section{Independent high-precision verification of the Fredholm
determinant asymptotics at $K=\bf\pi$}\label{app:numerical_delta}

This appendix reports an independent verification of
Lemma~\ref{lem:Delta_pi} and of the strong-coupling constant in
Theorem~\ref{thm:pi_preview}(b), by two algorithmically independent numerical
schemes together with an exact closed-form benchmark, without relying
on the leading-order formula of Lemma~\ref{lem:Delta_pi} itself.

\paragraph{Exact benchmark.} Since $\varepsilon(\boldsymbol\pi-\mathbf
p)=4-\varepsilon(\mathbf p)$, one has $E_{\boldsymbol\pi}(\mathbf
0,\mathbf q)\equiv4$ identically in $\mathbf q$, whence
\begin{equation}\label{eq:Delta0closed_app}
\Delta_\mu(\mathbf 0,\boldsymbol\pi,z)=1-\frac{\mu}{4-z}
\end{equation}
\emph{exactly}, with no quadrature required. This closed form serves as
the ground-truth benchmark for both numerical schemes below.

\paragraph{Numerical schemes.} Scheme~(A): the two-dimensional
$\mathbf q$-integral defining $\Delta_\mu(\mathbf
p,\boldsymbol\pi,z)$ reduces exactly to a one-dimensional integral of elliptic type
(via $\displaystyle\tfrac1{2\pi}\int_0^{2\pi}\!\frac{d\theta}{a-b\cos\theta}=\displaystyle\tfrac1{\sqrt{a^2-b^2}}$),
evaluated by Gauss--Legendre quadrature; increasing the quadrature
order from $60$ to $140$ changes all reported values by less than
$10^{-13}$ (spectral convergence, as~expected for a smooth periodic
integrand). Scheme~(B): an algorithmically independent uniform-grid
discretization of the original two-dimensional integral. Both schemes
reproduce \eqref{eq:Delta0closed_app} to machine precision, and agree
with each other throughout.


\begin{table}[!ht]
\centering
\caption{Exact Fredholm determinant at $\mathbf p=0$ for $z=-3\mu+6$.}
\label{tab:app_summary}
\begin{tabular}{c|c|c|c}
\hline
$\mu$ & Exact $\Delta_\mu(0,\pi,z)$ & $\Delta_{\mathrm{naive}}$ & Ratio \\
\hline
50 & 0.959184 & 1.000000 & 0.959184 \\
100 & 0.979592 & 1.000000 & 0.979592 \\
200 & 0.989796 & 1.000000 & 0.989796 \\
400 & 0.994898 & 1.000000 & 0.994898 \\
\hline
\end{tabular}
\end{table}

\paragraph{Verification of the ground-state constant.} Solving
$\lambda_1^{\boldsymbol\pi,e}(z)=1$ numerically from the \emph{exact}
$\Delta_\mu(\mathbf p,\boldsymbol\pi,z)$ (Schemes A/B, no asymptotic
approximation) confirms, independently of Lemma~\ref{lem:Delta_pi},
that the crossing point satisfies
\[
z_1^{\boldsymbol\pi,s}(\mu)+2\mu\ \longrightarrow\ 6\qquad(\mu\to\infty),
\]
to better than $0.5\%$ relative accuracy already at $\mu=1600$
(monotone, $O(\mu^{-1})$-rate convergence). The same computation
independently confirms $\lambda^{\boldsymbol\pi,o}(\mu)=O(\mu^{-1})\to0$
for the odd sector (Theorem~\ref{thm:pi_preview}(a)). Cross-validating this
independent construction against the closed form
\eqref{eq:Delta0closed_app} confirms that
\emph{Lemma~\ref{lem:Delta_pi}, while correct as stated (relative
error $O(\mu^{-1})$), does not by itself determine the additive
constant in Theorem~\ref{thm:pi_preview}(b)}; the constant $6$ requires the
next order of the Fredholm determinant expansion. 
The full derivation
of this next-order term, together with the general validity of the
underlying reduction for arbitrary total quasimomentum $\mathbf K$
and a matching independent check of the constant $C$ in
Theorem~\ref{thm:main}(b) at $\mathbf K=0$, is given in a companion
note~\cite{CompanionNote}.

\section{Spectral 
transitions and topological analogies, 
mathematical perspectives}\label{app:discussions}

This appendix 
provides a 
qualitative, heuristic perspective on the structural spectral transition between \(\mathbf K=0\) and \(\mathbf K=\boldsymbol{\pi}\) studied in Sections~\ref{sec:five}--\ref{sec:56as}. 
It does not enter the proofs of the main results but may guide future work.

\subsection{Quadratic forms and the Morse--Thom perspective}

The discrete spectral change can be rigorously interpreted within Morse theory applied to the quadratic forms of the principal part:
\[
Q_\mathbf{K}^o(f) := \langle A_\mu^{\mathbf{K},p,o}(z) f, f \rangle_{L_2^o}, \qquad f\in L_2^o(\mathbb{T}^2).
\]
%
As proven 
in Subsection~\ref{subsec:new561}:
\begin{itemize}
    \item At $\mathbf{K}=0$, \(Q_0^o\) is strictly negative definite.
    In 
    singularity theory, 
    this corresponds to a non-degenerate maximum at the origin in 
    spectral functional space.
    \item At $\mathbf{K}=\boldsymbol{\pi}$, \(Q_\pi^o\) is positive definite but 
    asymptotically degenerate
in the strong-coupling limit, since its sole positive eigenvalue satisfies \(\lambda^{\pi,o}(z)\sim 1/(4\mu)\to 0\).

\end{itemize}
Viewing \(\mathbf{K}\in\mathbb{T}^2\) as a control parameter, the set \[\mathcal{B} := \{\mathbf{K}: Q_\mathbf{K}^o \text{ ceases to be strictly negative definite}\}\] defines a bifurcation locus. The fact that \(\boldsymbol{\pi}\in\mathcal{B}\) signals a fold-type (\(A_2\)) bifurcation, where the odd-sector spectral branch remains tangent to the separatrix (essential spectrum edge).

\subsection{Conservation of spectral flow and algebraic geometry}
The transition is a rigorous example of \emph{conservation of spectral index}: the two positive modes at \(\mathbf K=0\) split into exactly one bound state (in \(L_2^e\)) and one virtual level (in \(L_2^o\)) at \(\mathbf K=\boldsymbol{\pi}\). This is a spectral analogue of the splitting of a degenerate critical point in singularity theory.

A promising systematic approach to the logarithmic threshold singularities (such as the transcendental constant \(C\) in Lemma~\ref{lem:C_constant}) is the application of Gröbner bases to the polynomial ideals defining the band extrema. 
For the standard nearest-neighbor dispersion in \(\mathbb{Z}^2\), the critical ideal is \(I_{\text{crit}}=\langle z_1^2-1, z_2^2-1 \rangle\) with a non-degenerate Hessian (\(\mu_{\text{loc}}=1\)). However, for higher-order lattices (e.g., kagome with Dirac cones, corresponding to \(A_k, D_k, E_6\) singularities), the algebraic multiplicity changes. This opens a rigorous pathway for extracting spectral thresholds in general optical lattices using algorithmic algebraic geometry, moving beyond the specific transcendental integrals evaluated in this work.

\end{document}